\pdfoutput=1

\documentclass[11pt]{scrartcl}
\usepackage[a4paper, total={16cm, 24cm}]{geometry}
\usepackage{natbib}
\bibliographystyle{plainnat}
\usepackage{authblk}

\author[1]{Niclas Boehmer}
\author[2]{Robert Bredereck}
\author[3]{Dominik Peters}
\affil[1]{TU Berlin} 
\affil[1]{TU Clausthal} 
\affil[3]{CNRS, LAMSADE, Universit\'e Paris Dauphine--PSL}

\usepackage[pagebackref]{hyperref}
\hypersetup{
	pdfencoding=auto, 
	psdextra,
	colorlinks=true,
	citecolor=green!50!black,
	linkcolor=red!60!black,
}

\usepackage{amsthm}
\usepackage{amssymb}
\usepackage{amsfonts}
\usepackage{amsmath}
\usepackage{microtype}
\usepackage[nameinlink]{cleveref}
\usepackage{pgfplots}      
\pgfplotsset{compat = 1.3}
\usepackage{caption}
\usepackage{subcaption}
\captionsetup[subfigure]{subrefformat=simple,labelformat=simple}

\usepackage{booktabs}
\usepackage{tabularx}
\usepackage[inline]{enumitem}

\usepackage{thmtools}
\usepackage{thm-restate}
\newcommand{\prooflink}[1]{\marginline{\vspace{0.5cm}\footnotesize \hyperlink{restated#1}{\hypertarget{original#1}{[Proof]}}}}
\newcommand{\restatehere}[1]{%
	\marginline{\vspace{0.6cm}\footnotesize \hyperlink{original#1}{\hypertarget{restated#1}{[Main]}}}%
	\csname #1\endcsname*%
}

\usepackage{tikz}
\usetikzlibrary{decorations.pathreplacing}

\makeatletter
\def\moverlay{\mathpalette\mov@rlay}
\def\mov@rlay#1#2{\leavevmode\vtop{%
		\baselineskip\z@skip \lineskiplimit-\maxdimen
		\ialign{\hfil$\m@th#1##$\hfil\cr#2\crcr}}}
\newcommand{\charfusion}[3][\mathord]{
	#1{\ifx#1\mathop\vphantom{#2}\fi
		\mathpalette\mov@rlay{#2\cr#3}
	}
	\ifx#1\mathop\expandafter\displaylimits\fi}
\makeatother

\newcommand{\cupdot}{\charfusion[\mathbin]{\cup}{\cdot}}

\usepackage{mathtools}

\sloppy

\usepackage[backgroundcolor=orange!15!white,bordercolor=orange!80!black,textsize=footnotesize]{todonotes} %
\tikzset{/tikz/notestyleraw/.append style={rounded corners=0pt,inner sep=0.6ex}}
\presetkeys{todonotes}{inline}{}
\setlength{\marginparwidth}{1.3cm}

\usepackage{mathtools}

\DeclarePairedDelimiter\floor{\lfloor}{\rfloor}

\pagestyle{plain}

\usepackage{xspace}
\usepackage{siunitx}

\hyphenation{orange}

\newcommand{\seqWi}[1]{\text{Sequential-#1-Winner}\xspace}
\newcommand{\seqLo}[1]{\text{Sequential-#1-Loser}\xspace}
\newcommand{\seqWiAb}[1]{\text{Seq.-#1-Wi.}\xspace}
\newcommand{\seqLoAb}[1]{\text{Seq.-#1-Lo.}\xspace}
\newcommand{\Score}[1]{\text{#1-Score}\xspace}

\newcommand{\seqWiAbS}{\text{Seq.-Wi.}\xspace}
\newcommand{\seqLoAbS}{\text{Seq.-Lo.}\xspace}
\newcommand{\ScoreS}{\text{Score}\xspace}

\renewcommand{\seqWiAb}[1]{\text{Seq.-#1-Winner}\xspace}
\renewcommand{\seqLoAb}[1]{\text{Seq.-#1-Loser}\xspace}
\renewcommand{\seqWiAbS}{\text{Seq.-Winner}\xspace}
\renewcommand{\seqLoAbS}{\text{Seq.-Loser}\xspace}

\newcommand{\TopK}{\textsc{Top-$k$ Determination}\xspace}
\newcommand{\PositionK}{\textsc{Position-$k$ Determination}\xspace}
\newcommand{\Winner}{\textsc{Winner Determination}\xspace}

\newcommand{\vs}{\ensuremath{\mathbf{s}}}
\newcommand{\KT}{\kappa}

\newtheorem{theorem}{Theorem}[section]
\newtheorem{claim}[theorem]{Claim}
\newtheorem{example}[theorem]{Example}
\newtheorem{observation}[theorem]{Observation}
\newtheorem{proposition}[theorem]{Proposition}
\newtheorem{lemma}[theorem]{Lemma}
\newtheorem{corollary}[theorem]{Corollary}
\theoremstyle{definition}
\newtheorem{definition}[theorem]{Definition}
\newtheorem{remark}[theorem]{Remark}

\newcommand{\normphi}{{{\mathrm{norm}\hbox{-}\phi}}}

\newcommand{\rank}{\mathrm{pos}}
\newcommand{\score}{\mathrm{score}}
\newcommand{\rev}{\mathrm{rev}}
\newcommand{\cand}{\mathrm{cand}}

\newenvironment{claimproof}[1][\proofname]
{%
	\proof[#1]%
}
{%
	\endproof%
}

\title{Rank Aggregation Using Scoring Rules}
\date{\vspace{-1.5cm}}

\begin{document}

\maketitle

{\footnotesize\tableofcontents}

\begin{abstract}
	\begin{center}
		\textbf{\textsf{Abstract}} \smallskip
	\end{center}
	To aggregate rankings into a social ranking, one can use scoring systems such as Plurality, Veto, and Borda. 
	We distinguish three types of methods: 
	ranking by score, ranking by repeatedly choosing a winner that we delete and rank at the top, and ranking by repeatedly choosing a loser that we delete and rank at the bottom.
	The latter method captures the frequently studied voting rules Single Transferable Vote (aka Instant Runoff Voting), Coombs, and Baldwin.   	
	In an experimental analysis, we show that the three types of methods produce different rankings in practice. We also provide evidence that sequentially selecting winners is most suitable to detect the ``true'' ranking of candidates.
	For different rules in our classes, we then study the (parameterized) computational complexity of deciding in which positions a given candidate can appear in the chosen ranking. 
	As part of our analysis, we also consider the \textsc{Winner Determination} problem for STV, Coombs, and Baldwin and determine their complexity when there are few voters or candidates. 
\end{abstract}

\section{Introduction}

\emph{Rank aggregation}, the task of aggregating several rankings into a single ranking, sits at the foundation of social choice as introduced by \citet{Arro51a}. Besides preference aggregation, it has numerous important applications, for example in the context of meta-search engines \citep{DKNS01a}, of juries ranking competitors in sports tournaments \citep{Truchon98}, and multi-criteria decision analysis.

One of the best-known methods for aggregating rankings is \citeauthor{Keme59a}'s \citeyearpar{Keme59a} method:
A Kemeny ranking is a ranking that minimizes the average swap distance (Kendall-tau distance) to the input rankings. It is axiomatically attractive \citep{YoLe78a,CaSt13a,bossert2014strategy} and has an interpretation as a maximum likelihood estimator \citep{Youn95a} making it well-suited to epistemic social choice that assumes a ground~truth.

However, Kemeny's method is hard to compute \citep{bartholdi1989voting,hemaspaandra2005complexity} which makes the method problematic to use, especially when there are many candidates to rank (for example, when ranking all applicants to a university). Even if computing the ranking is possible, it is coNP-hard to verify if a ranking is indeed a Kemeny ranking \citep{FiHe22}. Thus, third parties cannot easily audit, interpret, or understand the outcome, making systems based on Kemeny's method potentially unaccountable. This limits its applicability in democratic contexts.

These two drawbacks motivate the search for computationally simpler and more transparent methods for aggregating rankings. There is a significant literature on polynomial-time approximation algorithms for Kemeny's method \citep{coppersmith2006ordering,kenyon2007rank,ailon2008aggregating,van2009deterministic}, but these algorithms are typically not attractive beyond their approximation guarantee. In particular, they would typically not fare well in an axiomatic analysis, and are unlikely to be understood by and appealing to the general public (many are based on derandomization).

Instead, we turn to one of the fundamental tools of social choice: positional scoring rules. These rules transform voter rankings into scores for the candidates. For example, under the \emph{Plurality} scoring rule, every voter gives $1$ point to their top-ranked candidate. Under the \emph{Veto} (or anti-plurality) scoring rule, voters give $-1$ point to their last-ranked candidate and zero points to all others. Under the \emph{Borda} scoring rule, every voter gives $m$ points to their top-ranked candidate, $m-1$ points to their second-ranked candidate, and so on, giving 1 point to their last-ranked candidate. 
We study three ways of using scoring rules to aggregate rankings:
\begin{itemize}
	\item \emph{Score}: We rank the candidates in order of their score, higher-scoring candidates being ranked higher.
	\item \emph{Sequential-Winner}: We take the candidate $c$ with the highest score and rank it top in the aggregate ranking. We then delete $c$ from the input profile, re-calculate the scores, and put the new candidate with the highest score in the second position, and so on.
	\item \emph{Sequential-Loser:} We take the candidate $c$ with the lowest score and rank it last. We then delete $c$, re-calculate the scores, and put the new candidate with the lowest score in the second-to-last position, and so on.
\end{itemize}
Ranking by score is the obvious way of using scoring rules for rankings, and so it has been studied in the social choice literature \citep{Smit73a,LevenglickThesis}. Sequential-Loser captures as special cases the previously studied rules Single Transferable Vote (also known as Instant Runoff Voting, among other names), Coomb's method, and Baldwin's method. These are typically used as voting rules that elect a single candidate, but they can also be understood as rank aggregation methods. On the other hand, despite being quite natural, Sequential-Winner methods appear not to have been formally studied in the literature (to our knowledge).

\subsection{Our Contributions}

\paragraph{Axiomatic Properties (\Cref{sec:axioms})}
Based on the existing literature, we begin by describing some axiomatic properties of the methods in our three families. For example, we check which of the methods are Condorcet or majority consistent, and which are resistant to cloning. We also consider independence properties and state some characterization results.

\paragraph{Simulations (\Cref{sec:simulations})}
To understand how and whether the three families of methods practically differ from each other, and how they relate to Kemeny's method, we perform extensive simulations based on synthetic data (sampled using the Mallows and Euclidean models).
We find that, for Plurality and Borda, ranking by score and Sequential-Loser usually produce very similar results, whereas Sequential-Winner offers a new perspective (which is typically closer to Kemeny's method).
Moreover, we observe that Sequential-Loser rules seem to be particularly well suited to identify the best candidates (justifying their usage as single-winner voting rules), while Sequential-Winner rules are best at avoiding low quality candidates.

\paragraph{Computational Complexity (\Cref{sec:complexity})}
The rules in all three of our families are easy to compute in the sense that their description implies a straightforward algorithm for obtaining an output ranking. However, for the sequential rules there is a subtlety: During the execution of the rule, ties can occur. It matters how these are broken, because candidates could end up in significantly different positions. For high-stakes decisions and in democratic contexts, it would be important to know which output rankings are possible. 

Thus, we study the computational problem of deciding whether a given candidate can end up in a given position. This and related problems have been studied in the literature under the name of \emph{parallel universe tie-breaking}, including theoretical and experimental studies for some of the rules in our families \citep{DBLP:conf/ijcai/ConitzerRX09,brill2012price,mattei2014hard,freeman2015general,wang2019practical}.
We extend the results of that literature and find NP-hardness for all the sequential methods that we study. We show that the problem becomes tractable if the number of candidates is small. In contrast, for several methods we find that the problem remains hard even if the number of input rankings is small. Curiously, for few input rankings, methods based on Plurality, Borda, or Veto each induce a different parameterized complexity class.

\section{Preliminaries}

For $k \in \mathbb{N}$, write $[k] = \{1, \dots, k\}$.

Let $C = \{c_1, \dots, c_m\}$ be a set of $m$ \emph{candidates}. A \emph{ranking} $\succ$ of $C$ is a linear order (irreflexive, total, transitive) of $C$. We write $\mathcal{L}(C)$ for the set of all rankings of $C$.

A \emph{(ranking) profile} $P = (\succ_1, \dots, \succ_n)$ is a list of rankings. We sometimes say that the rankings are \emph{voters}.

For a subset $C' \subseteq C$ of candidates  and ranking ${\succ} \in \mathcal{L}(C)$, we write ${\succ}|_{C'}$ for the ranking obtained by restricting $\succ$ to the set $C'$. For a profile $P$, we write $P|_{C'}$ for the profile obtained by restricting each of its rankings to~$C'$.

A \emph{social preference function}%
\footnote{This terminology is due to \citet{YoLe78a}. The term \emph{social welfare function} from \citet{Arro51a} usually refers to resolute functions that may only output a single ranking.} 
$f$ is a function that assigns to every ranking profile $P$ a non-empty set $f(P) \subseteq \mathcal{L}(C)$ of rankings. Here, $f(P)$ may be a singleton but there can be more than one output ranking in case of ties.
For a ranking~$\succ$, we say that \emph{$f$ selects $\succ$ on $P$} if ${\succ} \in f(P)$.

For a ranking ${\succ} \in \mathcal{L}(C)$ and a candidate $c\in C$, let $\rank(\succ,c) = |\{d \in C : d \succ c \}| + 1$ be the \emph{position}  of $c$ in~$\succ$. For example, if $\rank(\succ,c) = 1$ then $c$ is the most-preferred candidate in~$\succ$.
We write $\cand(\succ,r) \in C$ for the candidate ranked in position $r \in [m]$ in ${\succ} \in \mathcal{L}(C)$.

For a ranking ${\succ} \in \mathcal{L}(C)$, $\rev(\succ)$ denotes the ranking where the candidates are ranked in the opposite order as in $\succ$, i.e., for each $r\in [m]$, $\cand(\succ,r)=\cand(\rev(\succ),m-r+1)$. 
For a profile $P=(\succ_1,\dots, \succ_n)$, we write $\rev(P)=(\rev(\succ_1),\dots ,\rev(\succ_n))$.

For an integer $m\in \mathbb{N}$, a \emph{scoring vector} $\vs^{(m)} = (s_1, \dots, s_m) \in \mathbb{R}^m$ is a list of $m$ numbers.
A \emph{scoring system} is a family of scoring vectors $(\vs^{(m)})_{m\in \mathbb{N}}$ one for each possible number $m$ of candidates.
For the sake of conciseness, we sometimes write $\vs$ instead of $(\vs^{(m)})_{m\in \mathbb{N}}$.
We will mainly focus on three scoring systems: 
\begin{itemize}[itemsep=0ex]
	\item \emph{Plurality} with $\vs^{(m)}=(1,0,\dots,0)$ for each $m\in \mathbb{N}$,
	\item \emph{Veto} with $\vs^{(m)}=(0,\dots,0,-1)$ for each $m\in \mathbb{N}$,
	\item \emph{Borda} with $\vs^{(m)}=(m, m-1, \dots, 1)$ for each $m\in \mathbb{N}$.
\end{itemize}
Given a profile $P$ over $m$ candidates, the \emph{$\vs$-score} of candidate $c \in C$ is $\score_{\vs}(P, c) = \sum_{i \in [n]} \vs^{(m)}_{\rank(\succ_i, c)}$. 
We say that a candidate is an \emph{$\vs$-winner} if it has maximum $\vs$-score, and an \emph{$\vs$-loser} if it has minimum $\vs$-score.
For a scoring system 
$\vs$
we denote by 
$\vs^*$
the scoring system where we reverse each scoring vector and multiply all its entries by $-1$, i.e., for each $m\in \mathbb{N}$ and $i\in [m]$, we have $(\vs^*)^{(m)}_i=-\vs^{(m)}_{m-i+1}$. Note that $(\vs^*)^* = \vs$ for every $\vs$, that $\text{Plurality}^* = \text{Veto}$, that $\text{Veto}^* = \text{Plurality}$, and that $\text{Borda}^*$ is the same as Borda, up to a shift.

For two rankings $\succ_1$ and $\succ_2$, their \emph{swap distance} (or \emph{Kendall-tau distance}) $\KT(\succ_1,\succ_2)$ is  the number of pairs of candidates on whose ordering the two rankings disagree, i.e., $\KT(\succ_1,\succ_2) = |\{ (c,d) \in C \times C : c \succ_1 d \text{ and } d \succ_2 c \}|$. Note that the maximum swap distance between two rankings is $\binom{m}{2}$.
Given a profile $P$, \emph{Kemeny's rule} selects those rankings which minimize the average swap distance to the rankings in $P$, so it selects $\arg\min_{{\succ} \in \mathcal{L}(C)} \sum_{i \in N} \KT(\succ, \succ_i)$. We refer to the selected rankings as \emph{Kemeny rankings}.

\section{Scoring-Based Rank Aggregation}
We now formally define the three families of scoring-based social preference functions that we study. 

\begin{definition}[\Score{\vs}]
	Let $\mathbf s$ be a scoring system. For the social preference function \Score{\vs} on profile $P$, we have ${\succ} \in \Score{\vs}(P)$ if and only if for all $c,d\in C$ with $\score_\vs(c,P) > \score_\vs(d,P)$, we have $c \succ d$.
\end{definition}

\begin{definition}[\seqWi{\vs}; \seqWiAb{\vs}]
	Let $\mathbf s$ be a scoring system. The social preference function \seqWiAb{\vs} is defined recursively as follows: For a profile $P$, we have ${\succ} \in \seqWiAb{\vs}(P)$ if and only if 
	\begin{itemize}
		\item the top choice $c = \cand({\succ}, 1)$ is an $\vs$-winner in $P$,
		\item if $|C| > 1$, then ${\succ}|_{C \setminus \{c\}} \in \seqWiAb{\vs}(P|_{C \setminus \{c\}})$.
	\end{itemize}
\end{definition}

\begin{definition}[\seqLo{\vs};  \seqLoAb{\vs}]
	Let $\mathbf s$ be a scoring system. The social preference function \seqLoAb{\vs} is defined recursively as follows: For a profile $P$, we have ${\succ} \in \seqLoAb{\vs}(P)$ if and only if 
	\begin{itemize}
		\item the bottom choice $c = \cand({\succ}, |C|)$ is an $\vs$-loser in $P$,
		\item if $|C| > 1$, then ${\succ}|_{C \setminus \{c\}} \in \seqLoAb{\vs}(P|_{C \setminus \{c\}})$.
	\end{itemize}
\end{definition}

\begin{example}
	\setlength{\abovedisplayskip}{3pt}
	\setlength{\belowdisplayskip}{3pt}
	Let $P$ be the following ranking profile:
	\begin{align*}
		3\times a\succ b \succ c, \quad 2\times b\succ c \succ a,   \quad 2\times c\succ b \succ a
	\end{align*}
	Then for the three methods based on Plurality, we have: 
	\begin{itemize}[itemsep=0ex]
		\item $\text{Plurality-Score}(P)=\{a\succ b \succ c,\:\, a\succ c \succ b\}$,
		\item $\text{Seq.-Plurality-Wi.}(P)=\{a\succ b \succ c\}$, and 
		\item $\text{Seq.-Plurality-Lo.}(P)=\{b\succ a \succ c, \:\, c\succ a \succ b\}$.
	\end{itemize}
\end{example}
We sometimes view \seqWiAb{\vs} (or \seqLoAb{\vs}) rules as round-based voting rules, where in each round an $\vs$-winner (or an $\vs$-loser) is deleted from the profile and added in the highest (or lowest) position of the ranking that has not yet been filled. 
Notably, if there are multiple $\vs$-winners (or $\vs$-losers) in one round, each selection gives rise to different output rankings.  
\seqLoAb{Plurality} is also known as \emph{STV}, \seqLoAb{Veto} as \emph{Coombs}, and \seqLoAb{Borda} as \emph{Baldwin}.

Sequential-Winner and Sequential-Loser rules are formally closely related:
If a candidate is an $\mathbf{s}$-winner in some profile $P$, then it is an $\mathbf{s}^*$-loser in the reverse profile $\rev(P)$.
Hence, we can conclude the following:
\begin{lemma} \label{lem:equiv}
	Let $\vs$ be a scoring system. Then for each ranking profile $P$ and for every ranking ${\succ} \in \mathcal{L}(C)$, we have: 
	\begin{align*}
	{\succ} &\in  \text{Sequential-}\vs\text{-Winner}(P)\\
	\iff  \rev({\succ}\hspace{-0.7pt}) &\in \text{Sequential-}\vs^*\text{-Loser}(\rev(P)).
\end{align*}
\end{lemma}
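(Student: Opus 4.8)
The plan is to prove the equivalence by induction on the number of candidates $m = |C|$, carrying out the inductive step as a single chain of ``if and only if''s so that both directions follow at once. The engine of the proof is the identity behind the informal remark preceding the lemma: for every profile $P$ over $m$ candidates and every $c \in C$,
\[
\score_{\vs^*}(\rev(P), c) = -\,\score_{\vs}(P, c).
\]
This is a one-line computation: the position of $c$ in $\rev(\succ_i)$ is $m - \rank(\succ_i, c) + 1$, and by definition $(\vs^*)^{(m)}_j = -\vs^{(m)}_{m-j+1}$, so each summand $(\vs^*)^{(m)}_{\rank(\rev(\succ_i),c)}$ equals $-\vs^{(m)}_{\rank(\succ_i,c)}$; summing over the $n$ voters gives the claim. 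In particular, $c$ maximizes the $\vs$-score in $P$ exactly when it minimizes the $\vs^*$-score in $\rev(P)$, i.e.\ the set of $\vs$-winners of $P$ equals the set of $\vs^*$-losers of $\rev(P)$ --- and this remains true after we pass to any restricted candidate set, since $\vs^*$ is a scoring system defined at every size and $(\vs^*)^{(m-1)}$ is again the reverse-and-negate of $\vs^{(m-1)}$.

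Alongside this, I would record two bookkeeping facts about $\rev$: (i) reversal commutes with restriction, $\rev(\succ)|_{C'} = \rev(\succ|_{C'})$ for every $C' \subseteq C$, and hence $\rev(P)|_{C'} = \rev(P|_{C'})$; and (ii) the bottom candidate of $\rev(\succ)$ is the top candidate of $\succ$, i.e.\ $\cand(\rev(\succ), m) = \cand(\succ, 1)$. Both are immediate from the definition of $\rev$.

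The base case $m = 1$ is trivial, since each side selects the unique ranking of $C$. For the inductive step, fix $\succ \in \mathcal{L}(C)$ with $m = |C| > 1$ and put $c = \cand(\succ,1) = \cand(\rev(\succ), m)$. By the definition of $\text{Sequential-}\vs\text{-Winner}$, the statement $\succ \in \text{Sequential-}\vs\text{-Winner}(P)$ is equivalent to the conjunction of (a) ``$c$ is an $\vs$-winner in $P$'' and (b) ``$\succ|_{C \setminus \{c\}} \in \text{Sequential-}\vs\text{-Winner}(P|_{C \setminus \{c\}})$''. By the score identity, (a) is equivalent to ``$c$ is an $\vs^*$-loser in $\rev(P)$''. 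Applying the induction hypothesis to the $(m-1)$-candidate profile $P|_{C \setminus \{c\}}$ with the ranking $\succ|_{C \setminus \{c\}}$, condition (b) is equivalent to $\rev(\succ|_{C \setminus \{c\}}) \in \text{Sequential-}\vs^*\text{-Loser}(\rev(P|_{C \setminus \{c\}}))$, which by fact (i) is the same as $\rev(\succ)|_{C \setminus \{c\}} \in \text{Sequential-}\vs^*\text{-Loser}(\rev(P)|_{C \setminus \{c\}})$. Combining the two equivalences, (a)\,$\wedge$\,(b) is exactly the pair of defining conditions for $\rev(\succ) \in \text{Sequential-}\vs^*\text{-Loser}(\rev(P))$, which finishes the induction.

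I do not anticipate a genuine obstacle: the proof is essentially the observation stated just above the lemma, organized into an induction. The only points that need a little care are getting the index arithmetic in the score identity right (so that it is exactly $\vs^*$, and not some shifted variant, that swaps winners with losers under reversal, and that this still holds at each restricted size $m-1$), and checking that $\rev$ commutes with restriction so that the recursive calls on $C \setminus \{c\}$ line up on both sides. Phrasing the inductive step as a chain of equivalences then delivers both directions of the biconditional simultaneously.
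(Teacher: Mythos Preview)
Your proposal is correct and follows essentially the same approach as the paper: the paper merely states the key observation that a candidate is an $\vs$-winner in $P$ if and only if it is an $\vs^*$-loser in $\rev(P)$, and then asserts the lemma without further detail; your induction on $|C|$ is the natural way to make that observation into a full proof, and the bookkeeping facts you record (the score identity, that $\rev$ commutes with restriction, and that the top of $\succ$ is the bottom of $\rev(\succ)$) are exactly what is needed to thread the recursion.
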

For example, this lemma establishes a close connection between \seqWiAb{Veto} and \seqLoAb{Plurality}, as a ranking $\succ$ is selected under \seqWiAb{Veto} on profile $P$ if and only if $\rev(\succ)$ is selected under \seqLoAb{Plurality} on profile $\rev(P)$. This equivalence will prove useful in our axiomatic analysis and in our complexity results.

\section{Axiomatic Properties}\label{sec:axioms}

\begin{table*}[t]
	\centering
	\newcommand{\sat}{\textcolor{green!40!black}{\checkmark}}
	\newcommand{\fails}{%
		\begin{tikzpicture}[scale=0.20]
			\draw[line width=0.75,line cap=round] (0,0) to [bend left=6] (1,1);
			\draw[line width=0.75,line cap=round] (0.2,0.95) to [bend right=3] (0.8,0.05);
	\end{tikzpicture}}
	\resizebox{\textwidth}{!}{%
		\begin{tabular}{lcccccccccc}
			\toprule
			& & \multicolumn{3}{c}{Score} & \multicolumn{3}{c}{Sequential-Winner} & \multicolumn{3}{c}{Sequential-Loser} \\
			\cmidrule(l){3-5}
			\cmidrule(l){6-8}
			\cmidrule(l){9-11}
			& Kemeny    & Plurality & Veto     & Borda      & Plurality & Veto      & Borda     & Plurality & Veto      & Borda     \\
			\midrule
			Independence at the top     & \sat      & \fails    & \fails   & \fails     & \sat      & \sat      & \sat      & \fails    & \fails    & \fails    \\
			Independence at the bottom  & \sat      & \fails    & \fails   & \fails     & \fails    & \fails    & \fails    & \sat      & \sat      & \sat      \\
			\midrule
			Reinforcement               & \sat      & \sat      & \sat     & \sat       & \sat      & \sat      & \sat      & \sat      & \sat      & \sat      \\
			Reinforcement at the top    & \fails    & \sat      & \sat     & \sat       & \sat      & \sat      & \sat      & \fails    & \fails    & \fails    \\
			Reinforcement at the bottom & \fails    & \sat      & \sat     & \sat       & \fails    & \fails    & \fails    & \sat      & \sat      & \sat      \\
			\midrule
			Condorcet winner at top     & \sat      & \fails    & \fails   & \fails     & \fails    & \fails    & \fails    & \fails    & \fails    & \sat      \\
			Copy majority               & \sat      & \fails    & \fails   & \fails     & \sat      & \fails    & \fails    & \fails    & \sat      & \fails    \\
			Independence of clones      & \fails    & \fails    & \fails   & \fails     & \fails    & \fails     & \fails    & \sat      & \fails    & \fails    \\
			\bottomrule
		\end{tabular}
	} %
	\caption{An overview of the axiomatic properties of our studied rules. See \Cref{app:axioms} for definitions.}
	\label{tbl:axioms}
\end{table*}

In this section, we will briefly and informally discuss some axiomatic properties and characterizations of the methods in our three families. A more formal treatment appears in \Cref{app:axioms}. See \Cref{tbl:axioms} for an overview.

A desirable property of a ranking aggregation rule is that if one candidate is deleted from the profile, then the relative rankings of the other candidates does not change (\emph{independence of irrelevant alternatives}, IIA). \citeauthor{Arro51a}'s \citeyearpar{Arro51a} impossibility theorem shows that this property cannot be satisfied by unanimous non-dictatorial rules. \citet{Youn88a} proves that Kemeny's method satisfies a weaker version that he calls \emph{local IIA}: removing the candidate that appears in the first or last position in the Kemeny ranking  does not change the ranking of the other candidates. Splitting this property into its two parts, we can easily see from their definitions that \seqWiAb{\vs} satisfies independence at the top, and \seqLoAb{\vs} satisfies independence at the bottom.

Another influential axiom is known as \emph{consistency} or \emph{reinforcement}. A rule $f$ satisfies reinforcement if whenever some ranking $\succ$ is chosen in two profiles, ${\succ} \in f(P) \cap f(P')$, then it is also chosen if we combine the profiles into one, and in fact $f(P + P') = f(P) \cap f(P')$. All the methods in this paper satisfy reinforcement. Notably, \citet{Youn88a} shows that Kemeny is the only anonymous, neutral, and unanimous rule satisfying reinforcement and local IIA. Focusing on \seqLoAb{\vs}, \citet{FBC14a} define reinforcement at the bottom to mean that if the same candidate $c$ is placed in the last position in the selected ranking in two profiles, then $c$ is also placed in the last position in the selected ranking in the combined profile. They show that independence at the bottom and reinforcement at the bottom characterize \seqLoAb{\vs} rules (under mild additional assumption). Using \Cref{lem:equiv}, a simple adaptation of their proof shows that \seqWiAb{\vs} rules can be similarly characterized by independence at the top and reinforcement at the top. (\Score{\vs} methods do not satisfy similar independence assumptions; they have been characterized by \citet{LevenglickThesis} and \citet{Smit73a}.)

Refining their characterization of \seqLoAb{\vs} rules, \citet{FBC14a} characterize \seqLoAb{Plurality} (aka STV) as the only \seqLoAb{\vs} rule satisfying independence of clones \citep{Tide87a}, \seqLoAb{Veto} (aka Coombs) as the only one that, in case a strict majority of voters have the same ranking, copies that ranking as the output ranking, and \seqLoAb{Borda} (aka Baldwin) as the only one always placing a Condorcet winner in the first position. Using \Cref{lem:equiv}, we can similarly characterize \seqWiAb{Plurality} as the only method in its class that copies a majority ranking.

\section{Simulations}\label{sec:simulations}

We analyze our three families of scoring-based ranking rules for Plurality and Borda on synthetically generated~profiles.

\subsection{Setup}

To deal with ties in the computation of our rules, each time we sample a ranking profile over candidates $C$, we also sample a ranking ${\succ_{\text{tie}}} \in \mathcal{L}(C)$ uniformly at random and break ties according to $\succ_{\text{tie}}$ for all rules. 
To quantify the difference between two rankings ${\succ_1},{\succ_2}\in \mathcal{L}(C)$, we use their normalized swap distance, i.e., their swap distance  $\KT(\succ_1, \succ_2)$ divided by the maximum possible swap distance between two rankings~$\smash{\binom{m}{2}}$.

\paragraph{(Normalized) Mallows} We conduct simulations on profiles generated using the Mallows model \cite{mallows1957non}  (as observed by \citet{DBLP:conf/ijcai/BoehmerBFNS21} real-world profiles seem often to be close to some Mallows profiles). 
This model is parameterized by a dispersion parameter $\phi\in [0,1]$ and a central ranking ${\succ^*}\in \mathcal{L}(C)$.
Then, a profile is assembled by sampling rankings i.i.d. so that the probability of sampling a ranking ${\succ} \in \mathcal{L}(C)$ is proportional to $\phi^{\KT(\succ,\succ^*)}$. 
We use the normalization of the Mallows model proposed by \citet{DBLP:conf/ijcai/BoehmerBFNS21}, which is parameterized by a normalized dispersion parameter $\normphi\in [0,1]$. This parameter is then internally converted to a dispersion parameter $\phi$ such that the expected swap distance between a sampled vote and the central vote is $\normphi\cdot (m(m-1)/4)$. 
Then $\normphi=0$ results in profiles only containing the central vote, and $\normphi=1$ leads to profiles where all rankings are sampled with the same probability, so that on average rankings disagree with the central ranking $\succ^*$ on half of the pairwise comparisons. Choosing $\normphi=0.5$ leads to profiles where rankings on average disagree with $\succ^*$ on a quarter of the pairwise comparisons.

\subsection{Comparison of Scoring-Based Ranking Methods}\label{sub:sim-scorecomp}

\begin{figure*}[t!]  
	\centering                  
	\begin{subfigure}[t]{0.485\textwidth} 
		\centering
		\resizebox{0.9\textwidth}{!}{
\begin{tikzpicture}[every plot/.append style={line width=3.5pt}]

\definecolor{color0}{rgb}{1,0.549019607843137,0}
\definecolor{color1}{rgb}{0.133333333333333,0.545098039215686,0.133333333333333}

\node[text width=5cm] at (10,1) 
{\LARGE dashed green and orange line overlap};

\begin{axis}[
ylabel shift = 1pt,
legend columns=1, 
legend cell align={left},
legend style={
  fill opacity=0.8,
  draw opacity=1,
  draw=none,
  text opacity=1,
  at={(1.45,1)},
  line width=3pt,
  anchor=north,
   /tikz/column 2/.style={
  	column sep=10pt,
  },
  font=\LARGE
},
every tick label/.append style={font=\LARGE}, 
label style={font=\LARGE},
tick align=outside,
tick pos=left,
x grid style={white!69.0196078431373!black},
xlabel={normalized dispersion parameter},
xmin=-0.05, xmax=1.05,
xtick style={color=black},
xtick={-0.2,0,0.2,0.4,0.6,0.8,1,1.2},
xticklabels={−0.2,0.0,0.2,0.4,0.6,0.8,1.0,1.2},
y grid style={white!69.0196078431373!black},
ylabel={normalized swap distance},
ymin=-0.0127055555555556, ymax=0.45,
ytick style={color=black},
ytick={0,0.1,0.2,0.3,0.4},
yticklabels={0,0.1,0.2,0.3,0.4}
]
\addlegendentry{Seq.-Lo. vs Seq.-Wi.}
\addlegendentry{Seq.-Lo. vs Score}
\addlegendentry{Seq.-Wi. vs Score}
\addlegendentry{}
\addlegendentry{}
\addlegendentry{Plurality}
\addlegendentry{Borda}
\addlegendimage{red!54.5098039215686!black}
\addlegendimage{color0}
\addlegendimage{color1}
\addlegendimage{empty legend}
\addlegendimage{empty legend}
\addlegendimage{gray}
\addlegendimage{gray,dashed}
\addplot [semithick, red!54.5098039215686!black]
table {%
0 0.399337777777778
0.1 0.196542222222222
0.2 0.127155555555556
0.3 0.0898022222222222
0.4 0.0732422222222222
0.5 0.0728533333333333
0.6 0.0809555555555556
0.7 0.100706666666667
0.8 0.137706666666667
0.9 0.203808888888889
1 0.254948888888889
};
\addplot [semithick, color0]
table {%
0 0
0.1 1.11111111111111e-05
0.2 0.000217777777777778
0.3 0.00126222222222222
0.4 0.00412888888888889
0.5 0.00987555555555556
0.6 0.0210177777777778
0.7 0.0386311111111111
0.8 0.0687666666666667
0.9 0.115688888888889
1 0.148086666666667
};
\addplot [semithick, color1]
table {%
0 0.399337777777778
0.1 0.196544444444444
0.2 0.127226666666667
0.3 0.0901222222222222
0.4 0.0739
0.5 0.0734
0.6 0.0793777777777778
0.7 0.0946177777777778
0.8 0.122895555555556
0.9 0.169933333333333
1 0.20064
};

\addplot [semithick, red!54.5098039215686!black, dashed]
table {%
0 0
0.1 0
0.2 2.22222222222222e-06
0.3 0.000228888888888889
0.4 0.00208444444444444
0.5 0.00673333333333333
0.6 0.0141844444444444
0.7 0.0247
0.8 0.0430666666666667
0.9 0.0816022222222222
1 0.132086666666667
};
\addplot [semithick, color0, dashed]
table {%
0 0
0.1 0
0.2 4.44444444444444e-06
0.3 0.000131111111111111
0.4 0.00125333333333333
0.5 0.00389333333333333
0.6 0.00778
0.7 0.0135022222222222
0.8 0.0233955555555556
0.9 0.0448422222222222
1 0.0740311111111111
};
\addplot [semithick, color1,dashed]
table {%
0 0
0.1 0
0.2 6.66666666666667e-06
0.3 0.000155555555555556
0.4 0.00116444444444444
0.5 0.00356444444444444
0.6 0.00770222222222222
0.7 0.0131888888888889
0.8 0.0233244444444444
0.9 0.0446622222222222
1 0.07486
};
\end{axis}

\end{tikzpicture}}
		\caption{Pairs of scoring-based methods}\label{fig:general-comparsion}
	\end{subfigure}%
	\hfill
	\begin{subfigure}[t]{0.475\textwidth}
		\centering 
		\resizebox{0.9\textwidth}{!}{
\begin{tikzpicture}[every plot/.append style={line width=3.5pt}]

\definecolor{color0}{rgb}{1,0.549019607843137,0}
\definecolor{color1}{rgb}{0.133333333333333,0.545098039215686,0.133333333333333}
\definecolor{color2}{rgb}{0.117647058823529,0.564705882352941,1}
\definecolor{color3}{rgb}{0.580392156862745,0,0.827450980392157}
\definecolor{color4}{rgb}{0.647058823529412,0.164705882352941,0.164705882352941}

\node[text width=5cm] at (10,1) 
{\LARGE dashed lines largely overlap};

\begin{axis}[
legend columns=1, 
legend cell align={left},
legend style={
  fill opacity=0.8,
  draw opacity=1,
  draw=none,
  text opacity=1,
  at={(1.45,1)},
  line width=3pt,
  anchor=north,
   /tikz/column 2/.style={
  	column sep=10pt,
  },
  font=\LARGE
},
every tick label/.append style={font=\LARGE}, 
label style={font=\LARGE},
tick align=outside,
tick pos=left,
x grid style={white!69.0196078431373!black},
xlabel={normalized dispersion parameter},
xmin=-0.05, xmax=1.05,
xtick style={color=black},
xtick={-0.2,0,0.2,0.4,0.6,0.8,1,1.2},
xticklabels={−0.2,0.0,0.2,0.4,0.6,0.8,1.0,1.2},
y grid style={white!69.0196078431373!black},
ylabel={normalized swap distance},
ymin=-0.0127055555555556, ymax=0.45,
ytick style={color=black},
ytick={0,0.1,0.2,0.3,0.4},
yticklabels={0,0.1,0.2,0.3,0.4}
]
\addlegendentry{Kemeny vs Seq.-Lo.}
\addlegendentry{Kemeny vs. Score}
\addlegendentry{Kemeny vs. Seq.-Wi.}
\addlegendentry{}
\addlegendentry{}
\addlegendentry{Plurality}
\addlegendentry{Borda}

\addlegendimage{color2}
\addlegendimage{color3}
\addlegendimage{color4}
\addlegendimage{empty legend}
\addlegendimage{empty legend}
\addlegendimage{gray}
\addlegendimage{gray,dashed}

\addplot [semithick, color2]
table {%
0 0.399337777777778
0.1 0.196542222222222
0.2 0.127155555555556
0.3 0.0897933333333333
0.4 0.0730955555555556
0.5 0.0722177777777778
0.6 0.0787844444444444
0.7 0.0961088888888889
0.8 0.131748888888889
0.9 0.213624444444444
1 0.29502
};
\addplot [semithick, color3]
table {%
0 0.399337777777778
0.1 0.196544444444444
0.2 0.127226666666667
0.3 0.0901222222222222
0.4 0.0744066666666667
0.5 0.0756444444444444
0.6 0.0865844444444444
0.7 0.11114
0.8 0.159875555555556
0.9 0.254042222222222
1 0.337413333333333
};
\addplot [semithick, color4]
table {%
0 0
0.1 0
0.2 0
0.3 9.77777777777778e-05
0.4 0.00164888888888889
0.5 0.00731111111111111
0.6 0.0203177777777778
0.7 0.0428111111111111
0.8 0.0868422222222222
0.9 0.177104444444444
1 0.268853333333333
};

\addplot [semithick, color2, dashed]
table {%
0 0
0.1 0
0.2 0
0.3 0.00014
0.4 0.00128666666666667
0.5 0.00476666666666667
0.6 0.0110111111111111
0.7 0.0198555555555556
0.8 0.0357266666666667
0.9 0.0645644444444444
1 0.101431111111111
};
\addplot [semithick, color3, dashed]
table {%
0 0
0.1 0
0.2 4.44444444444444e-06
0.3 0.000186666666666667
0.4 0.00177555555555556
0.5 0.00641111111111111
0.6 0.0137022222222222
0.7 0.0245044444444444
0.8 0.0424688888888889
0.9 0.0731933333333333
1 0.107742222222222
};
\addplot [semithick, color4, dashed]
table {%
0 0
0.1 0
0.2 2.22222222222222e-06
0.3 0.000115555555555556
0.4 0.00132222222222222
0.5 0.00498888888888889
0.6 0.0107377777777778
0.7 0.0200577777777778
0.8 0.0359622222222222
0.9 0.0653933333333333
1 0.101522222222222
};
\end{axis}

\end{tikzpicture}}
		\caption{Scoring-based methods vs Kemeny ranking}\label{fig:Kemeny-comparison}
	\end{subfigure}
	\caption{Pairwise average normalized swap distance between rankings produced by different methods for Plurality (solid) and Borda (dashed) on Mallows profiles with $10$ candidates and $100$ voters.}\label{fig:Mal}
\end{figure*}
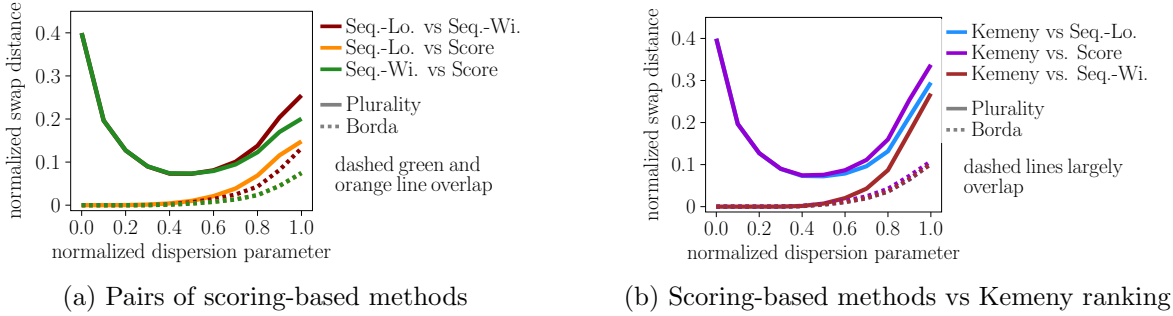

We analyze the average normalized swap distance between the rankings selected by our three families of scoring-based ranking methods on profiles containing $100$ rankings over $10$ candidates.
For this, we sampled $\num{10000}$ profiles for each $\normphi\in \{0,0.1,\dots, 0.9,1\}$ and depict the results in \Cref{fig:general-comparsion}. 
Let us first focus on Plurality: 
We find that the rankings produced by \seqLoAb{Plurality} and \Score{Plurality} are quite similar, whereas the ranking produced by \seqWiAb{Plurality} is substantially different. 
This observation is particularly strong for $\normphi\leq 0.3$: 
In such profiles, all the rankings are similar to each other. 
Accordingly, many candidates initially have a Plurality score of zero, and thus there are many ties in the execution of \Score{Plurality} and \seqLoAb{Plurality} (for the latter, ties occur in more than half of the rounds).
Thus, the rankings computed by the two rules fundamentally depend on the (shared) random tie-breaking order $\succ_{\text{tie}}$.
In contrast, for \seqWiAb{Plurality}, for $\normphi\le 0.3$, no ties in its execution appear. 
In particular, \seqWiAb{Plurality} is thereby able to meaningfully distinguish the weaker candidates on these profiles.

Turning to $\normphi\geq0.3$ (where more candidates have non-zero Plurality score and thus the tie-breaking is no longer as important), \seqLoAb{Plurality} and \Score{Plurality} are still clearly more similar to each other than to \seqWiAb{Plurality}; this indicates that Seq.-Wi. rules indeed add a new perspective to existing scoring-based ranking rules. 

Switching to Borda, the rankings returned by the three methods are quite similar. 
This is intuitive given that Borda scores capture the general strength of candidates in a profile much better than Plurality scores. 
Thus, the Borda score of a candidate also changes less drastically in case some candidate is deleted. 
Increasing $\normphi$, the selected rankings become more different from each other (as profiles get more chaotic, leading to more similar Borda scores of candidates).
Interestingly, for larger values of $\normphi$, \Score{Borda} has the same (small) distance to the other two rules, whereas \seqWiAb{Borda} and \seqLoAb{Borda} are more different.

\subsection{Comparison to Kemeny Ranking}\label{sub:Kemeny-scorecomp}

To assess which method produces the  ``most accurate'' rankings, we compare them to Kemeny's method. 
For \num{10000} profiles for each $\normphi\in \{0,0.1,\dots, 0.9,1\}$, in \Cref{fig:Kemeny-comparison}, we show the average normalized swap distance of the Kemeny ranking to the rankings selected by our rules. 

For Plurality, independently of the value of $\normphi$, \seqWiAb{Plurality} produces the ranking most similar to the Kemeny ranking, then  \seqLoAb{Plurality} and lastly \Score{Plurality}, indicating the advantages of sequential rules. 
What sticks out is that for $\normphi\leq 0.3$, \seqLoAb{Plurality} and  \Score{Plurality} are far away from the Kemeny ranking. 
As discussed above, the reason is that, for both methods, large parts of the ranking are simply determined by the random tie-breaking order in such profiles.  
In contrast, \seqWiAb{Plurality} is not affected, and its output ranking is very close to the Kemeny ranking until $\normphi\leq 0.5$ (when their average normalized distance is only $0.004$). 
For a larger dispersion parameter and in particular for $\normphi\geq 0.7$, the distance from the Kemeny ranking become more similar for our three methods. 
This behavior is intuitive, recalling that for $\normphi=1$, profiles are ``chaotic'', with many  different rankings having comparable quality. 

For Borda, the rankings produced by the three methods are all around the same (small) distance from the Kemeny ranking. This distance increases steadily from $0$ for $\normphi=0$ to around $0.1$ for $\normphi=1$.

\subsection{Further Simulations}  
In \Cref{app:simulations}, we describe the results of further experiments. 
For instance, we analyze in which parts of the computed ranking the considered methods agree or disagree most.
We find that for both Plurality and Borda, for the top positions the Kemeny ranking agrees frequently with the \seqLoAbS rule. For the bottom positions it agrees with the \seqWiAbS rule. 
This suggests that one should use \seqLoAbS for identifying the best candidates and \seqWiAbS for avoiding the worst candidates.
Moreover, \seqWiAbS and \ScoreS agree more commonly on the top half of candidates, whereas \seqLoAbS and \ScoreS agree more commonly on the bottom half of candidates.

We repeat all our experiments on profiles sampled from Euclidean models.
Obtaining similar results, this confirms that our general observations from above also hold for profiles sampled from other distributions. 
We also analyze the influence of the number of voters and candidates on the results, observing that increasing the number of voters leads to an increased similarity of the rankings for Plurality and Borda, whereas increasing the number of candidates leads to an increased similarity for Borda but not for Plurality. 
We also consider additional scoring vectors. 
For instance, we find that for Veto the roles of \seqLoAbS and \seqWiAbS are reversed, which is to be expected, recalling \Cref{lem:equiv}.

\section{Complexity} \label{sec:complexity}

We study various computational problems related to Sequential-Winner and Sequential-Loser rules. 
By breaking ties arbitrarily, it is easy to compute \emph{some} ranking that is selected by such a rule. 
However, in some (high-stakes) applications, it might not be sufficient to simply output some ranking selected by the rule. For instance, some candidate could claim that there also exist other rankings selected by the same rule where that candidate is ranked higher.  
To check such claims, and understand which rankings can be selected in the presence of ties, we need an algorithm that for a given candidate $d$ and position $k$, decides whether $d$ is ranked in position $k$ in some ranking selected by the rule.
Accordingly, we introduce the following computational problem: 

\smallskip
\begin{center}
	{\small 
		\begin{tabularx}{1.0\columnwidth}{ll}
			\toprule
			\multicolumn{2}{c}{\textsc{Position-$k$ Determination} for social preference function $f$} \\
			\midrule
			\textbf{Given:}& \parbox[t]{0.75\columnwidth}{
			 A ranking profile $P$ over candidate set $C$, a designated candidate $d\in C$, and an integer $k\in [|C|]$.
				\vspace*{1mm}} \\%
			\textbf{Question:}& \parbox[t]{0.75\columnwidth}{
				Is there a ranking $\succ$ selected by $f$ on $P$  where $d$ is in position $k$, i.e., ${\succ} \in f(P)$ with $\rank(\succ,d)=k$?} \\ 
			\bottomrule
		\end{tabularx}
	}
\end{center}
\smallskip

Where possible, we will design (parameterized) algorithms that solve this problem.
We also prove hardness results, which will apply even to restricted versions of this problem that are most relevant in practice.
Specifically, we would expect candidates to mainly be interested if they can be ranked highly. 
Thus, we introduce the \TopK problem, where we ask whether a given candidate can be ranked in one of the first $k$ positions.\footnote{If we have an algorithm for \PositionK, we can solve the \TopK problem by using the algorithm for positions $i = 1, \dots, k$. (This is a Turing reduction.)}
Lastly,  the special case of both problems with $k=1$ is of particular importance: The \Winner problem asks whether the designated candidate is ranked in the first position in some ranking selected by the rule. 

For the three Sequential-Loser rules, it is known that their \Winner problem is NP-complete. For STV, this was stated by \citet{DBLP:conf/ijcai/ConitzerRX09}, and for Baldwin and Coombs, this was proven by \citet{mattei2014hard}. We will see that the corresponding \TopK problems for the Sequential-Winner rules are also NP-complete. Thus, since almost all of our problems turn out to be NP-hard, we take a more fine-grained view. 
In particular, we will study the influence of the number $n$ of voters and the number $m$ of candidates on the complexity of our problems. 
This analysis is not only of theoretical interest but also practically relevant, as  in many applications one of the two parameters is considerably smaller than the other (e.g., in political elections $m$ is typically much smaller than $n$, while in applications such as meta-search engines or ranking applicants, $n$ is often much smaller than $m$). 
\Cref{tab:sequentialWinner,table:sequential-loser} provide overviews of our results.

\begin{table*}[t]
	\centering
	\resizebox{1\textwidth}{!}{\begin{tabular}{llll}
		\toprule
			& & $n$ & $m$ \\
		\midrule
		Sequential-Plurality-Loser (STV) & NP-c. (Thm. \ref{th:STV-NP}) & FPT (Obs. \ref{ob:STV-n}) & FPT (Thm. \ref{th:parameter-m}) \\
		Sequential-Veto-Loser (Coombs) & NP-c. (Thm. \ref{th:Coombs-NP}) & W[1]-h. (Thm. \ref{th:Coombs-NP}), XP (Thm. \ref{th:Lo-Ve-K-n})& FPT (Thm. \ref{th:parameter-m})\\
		Sequential-Borda-Loser (Baldwin) & NP-c. (Thm. \ref{th:Baldwin-NP})  & NP-c. for $n = 8$ (Thm. \ref{th:Baldwin-NP}) & FPT (Thm. \ref{th:parameter-m}) \\
		\bottomrule
	\end{tabular}}
	\caption{Our results for Sequential-Loser rules. All hardness results hold for \Winner; all algorithmic results also apply to \PositionK. The unparameterized NP-hardness results in the first column were already stated or proven by \citet{DBLP:conf/ijcai/ConitzerRX09} and \citet{mattei2014hard}} \label{table:sequential-loser}
\end{table*}

\subsection{Parameter Number of Candidates}\label{sub:m}
We start by considering the parameter $m$, the number of candidates. 
It is easy to see that \PositionK for all Sequential-Winner and Sequential-Loser rules is fixed-parameter tractable with respect to $m$ (by iterating over all $m!$ possible output rankings).  
However, it is possible to improve the dependence on the parameter in the running time.

\prooflink{parameterm}
\begin{restatable}{theorem}{parameterm}
	\label{th:parameter-m}
	For every scoring system $\vs$, \PositionK can be solved in 
	\begin{itemize}
		\item $\mathcal{O}(2^m\cdot nm^2)$ time and $\mathcal{O}(m^{k}\cdot nm^2)$ time for  \seqWi{$\vs$}, and 
		\item  $\mathcal{O}(2^m\cdot nm^2)$ time and $\mathcal{O}(m^{m-k}\cdot nm^2)$ time for \seqLo{$\vs$}.
	\end{itemize}
\end{restatable}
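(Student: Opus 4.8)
The key idea is that the output ranking of a Sequential-Winner rule is built top-down by repeatedly deleting an $\vs$-winner, and the only data that matters for determining which candidates can still be chosen is *the set of candidates remaining in the profile*. So I would set up a dynamic program over subsets of $C$. For each subset $S \subseteq C$, let $\text{reach}(S)$ be a boolean indicating whether there is a valid partial execution of \seqWiAb{\vs} that deletes exactly the candidates in $C \setminus S$ (in some order), i.e. whether $S$ arises as the remaining candidate set after $|C \setminus S|$ rounds of some run. The recursion is: $\text{reach}(C) = \text{true}$, and for $S \subsetneq C$ we have $\text{reach}(S) = \text{true}$ iff there exists a candidate $c \notin S$ such that $\text{reach}(S \cup \{c\}) = \text{true}$ and $c$ is an $\vs$-winner in $P|_{S \cup \{c\}}$.

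**Implementing this efficiently.** I would process subsets in order of decreasing size. For each $S$ with $\text{reach}(S) = \text{true}$, I compute the $\vs$-scores of all candidates in $P|_S$ in $\mathcal{O}(nm)$ time (walking through each of the $n$ votes restricted to $S$, which takes $\mathcal{O}(m)$ per vote once we know the positions — a mild bookkeeping point, handled by scanning each vote once), identify the set of $\vs$-winners, and for each winner $c$ mark $\text{reach}(S \setminus \{c\}) = \text{true}$. There are $2^m$ subsets, each processed in $\mathcal{O}(nm + m) = \mathcal{O}(nm)$ time after one pass, but recomputing scores from scratch costs $\mathcal{O}(nm^2)$ per subset in the crude accounting, giving the stated $\mathcal{O}(2^m \cdot nm^2)$ bound. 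To answer \PositionK for designated candidate $d$ and position $k$: a ranking placing $d$ in position $k$ exists iff there is a reachable set $S$ with $|S| = m - k + 1$, $d \in S$, $d$ is an $\vs$-winner in $P|_S$, and $\text{reach}(S \setminus \{d\})$ holds — equivalently, $d$ is deleted in round $k$ along some run and the remaining rounds can be completed, which they always can since any nonempty profile has an $\vs$-winner, so really one only needs $\text{reach}(S)$ and $d$ a winner in $P|_S$. For \seqLoAb{\vs} the same argument runs with "winner" replaced by "loser" (or one can invoke \Cref{lem:equiv} to reduce to the Sequential-Winner case on the reversed profile).

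**The $\mathcal{O}(m^{k} \cdot nm^2)$ bound.** Here the observation is that to decide \PositionK for position $k$ (Sequential-Winner), one never needs to know anything about reachable sets of size smaller than $m - k + 1$: the designated candidate is placed in one of the first $k$ rounds (more precisely, round exactly $k$), so we only care about runs of the first $k$ deletions. I would instead do a forward search / DP over *sequences* of the first $k$ deleted candidates, or equivalently over the reachable sets $S$ with $|S| \geq m - k + 1$. The number of such sets is $\sum_{j=0}^{k-1} \binom{m}{j} = \mathcal{O}(m^{k})$ (treating $k$ as the small parameter — this needs $k \le m/2$, which is the relevant regime; for $k > m/2$ one uses the $2^m$ bound, or symmetrically the $m^{m-k}$ bound via the complementary reachable sets). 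Each such set is again processed in $\mathcal{O}(nm^2)$ time, giving $\mathcal{O}(m^{k} \cdot nm^2)$. The Sequential-Loser bound $\mathcal{O}(m^{m-k} \cdot nm^2)$ is the mirror image: a candidate in position $k$ is deleted in round $m - k + 1$, so only the first $m - k + 1$ deletion rounds matter, i.e. only reachable sets of size $\geq k$, of which there are $\mathcal{O}(m^{m-k})$; alternatively, apply \Cref{lem:equiv} to the Sequential-Winner bound.

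**Main obstacle.** The conceptual content is light; the only thing requiring care is the correctness proof that $\text{reach}(S)$ as defined by the recursion exactly characterizes the candidate sets appearing along runs of the rule, and that the \PositionK answer extracted from the table is correct — this is a straightforward induction on $|C \setminus S|$ following the recursive definitions of \seqWiAb{\vs} and \seqLoAb{\vs}, using that every nonempty profile has at least one $\vs$-winner (so partial runs always extend to full runs). The other point needing a sentence of justification is the counting $\sum_{j < k} \binom{m}{j} = \mathcal{O}(m^k)$ and the switch between the $2^m$ and $m^{\min(k,\,m-k)}$-type bounds depending on whether $k$ is below or above $m/2$; and the $\mathcal{O}(nm^2)$-per-subset score recomputation, which is deliberately loose but simplest to state.
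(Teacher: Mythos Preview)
Your proposal is correct and follows essentially the same approach as the paper: a dynamic program over subsets of candidates, indexed in your case by the set $S$ of remaining candidates (the paper equivalently indexes by the complementary ``elimination set'' $C' = C \setminus S$), with the identical recurrence and the same observation that only subsets corresponding to the first $k-1$ deletions need be filled for the $\mathcal{O}(m^k \cdot nm^2)$ bound. The paper likewise derives the \seqLoAb{\vs} bounds via \Cref{lem:equiv}.
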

\begin{proof}[Proof (algorithm)]
	We present an algorithm for \seqWiAb{$\vs$} (the results for \seqLoAb{$\vs$} directly follow from this by applying \Cref{lem:equiv}).
	We solve the problem via dynamic programming. 
	We call a subset $C'\subseteq C$ of candidates an \emph{elimination set} if there is a selected ranking where the candidates from $C'$ are ranked in the first $|C'|$ positions.
	We introduce a table $T$ with entry $T[C']$ for each subset $C'\subseteq C$ of candidates. 
	$T[C']$ is set to true if $C'$ is an elimination set. 
	We initialize the table by setting $T[\emptyset]$ to true. 
	Now we compute $T$ for each subset $C'\subseteq C$ in increasing order of the size of the subset using the following recurrence relation: 
	We set $T[C']$ to true if there is a candidate $c\in C'$ such that $T[C'\setminus \{c\}]$ is true and $c$ is an $\vs$-winner in $P|_{C\setminus (C'\setminus \{c\})}$.	
	
	After filling the table, we return ``true'' if and only if there is a subset $C'\subseteq C\setminus \{d\}$ with $|C'|=k-1$ such that $T[C']$ is true and $d$ is an $\vs$-winner in $P|_{C\setminus C'}$. 
	By filling the complete table we get a running time in $\mathcal{O}(2^m\cdot nm^2)$. 
	However, it is sufficient to only fill the table for all subsets of size at most $k-1$, resulting in a running time in $\mathcal{O}(m^{k}\cdot nm^2)$.
\end{proof}

\subsection{Sequential Loser} \label{sec:complexity-loser}
We study \seqLoAb{Plurality/Veto/Borda} (aka STV, Coombs, and Baldwin).
The \Winner problem is NP-hard for all three rules. 
\Cref{table:sequential-loser} shows an overview of our results. 
In particular, we get a clear separation of the rules for the number $n$ of voters:
\begin{itemize}
	\item \seqLoAb{Plurality} admits a simple FPT algorithm,
	\item \seqLoAb{Veto} is W[1]-hard but in XP,
	\item \seqLoAb{Borda} is NP-hard for $8$ voters.
\end{itemize}

\subsubsection{Plurality}

\citet{DBLP:conf/ijcai/ConitzerRX09} stated that \Winner for  \seqLoAb{Plurality}~(aka STV) is NP-hard. This result has been frequently cited and used. 
The proof was omitted in the conference paper, and to our knowledge no proof has ever appeared in published work.
To aid future research, we include a simple reduction here.
\begin{table*}[t]
	\centering
		\begin{tabular}{llllll}
			\toprule
			& & $n$ & $k$ & $n+k$ & $m$ \\
			\midrule
			Sequential-Plurality-Winner & NP-c. & W[1]-h., XP & W[1]-h., XP  & FPT  & FPT\\
			Sequential-Veto-Winner & NP-c. & FPT  & W[2]-h., XP  & FPT  & FPT \\
			Sequential-Borda-Winner & NP-c. & NP-h. for $n=8$ & W[1]-h., XP &? &FPT\\
			\bottomrule
	\end{tabular}
	\caption{Our results for Sequential-Winner rules. All hardness results hold for the \TopK problem; all algorithmic results also apply to the general \PositionK problem.} \label{tab:sequentialWinner}
\end{table*}

\begin{theorem} \label{th:STV-NP}
	\Winner for \seqLo{Plurality} (aka STV) is NP-hard. 
\end{theorem}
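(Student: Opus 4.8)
The plan is to give a polynomial-time reduction from a standard NP-hard problem to \Winner for \seqLoAb{Plurality}. The natural source problem is \textsc{Exact Cover by 3-Sets} (X3C) or, even more conveniently, \textsc{Vertex Cover} / \textsc{3-Dimensional Matching}; I would use X3C because its ``exactness'' constraint meshes well with the combinatorial structure of STV elimination rounds. Given an X3C instance with universe $U = \{u_1,\dots,u_{3t}\}$ and a family $\mathcal{S} = \{S_1,\dots,S_\ell\}$ of $3$-element subsets of $U$, the goal is to decide whether some subfamily of exactly $t$ sets partitions $U$. I would construct a profile whose candidate set contains one ``element candidate'' per $u_i$, one ``set candidate'' per $S_j$, a distinguished candidate $d$, and a small number of auxiliary/padding candidates used to control Plurality scores and force a particular elimination order.

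The core idea is to engineer the Plurality scores so that, in the first several rounds, the only candidates that are ever \emph{tied} for lowest Plurality score (hence eliminable in a ``parallel universe'') are the set candidates, and eliminating a set candidate $S_j$ has the effect of transferring its first-place votes to exactly the element candidates $u \in S_j$ (by placing, in the votes that rank $S_j$ first, the three members of $S_j$ immediately below $S_j$). By choosing vote multiplicities carefully, each element candidate $u_i$ should survive exactly as long as it has ``received'' a transfer from some eliminated set containing it; once it has been ``covered'' its score jumps above the danger zone, and if it is never covered it is forced out early, which in turn should sabotage $d$. Meanwhile $d$ is given just enough Plurality support that $d$ ends up as the unique surviving candidate (the STV winner in that branch) \emph{iff} the sequence of eliminated set candidates forms an exact cover — the exactness (eliminating more than $t$ sets, or sets that overlap) must be made to either strand an element candidate or leave $d$ vulnerable.

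Concretely, the steps would be: (i) fix the candidate set and a block of voters so that without any set-candidate eliminations, $d$ is safely in the middle of the Plurality ranking and the element candidates sit at the bottom, tied; (ii) add, for each $S_j$, a block of voters ranking $S_j$ first followed by its three elements, chosen so that the set candidates all start tied for \emph{strictly} lowest score, so the only legal first moves are to eliminate set candidates, and eliminating $S_j$ raises each $u\in S_j$ by the right amount; (iii) add ``guard'' blocks and possibly a few dummy candidates so that after exactly $t$ set-eliminations that cover $U$, every element candidate is safe, all remaining set candidates and dummies get peeled off in a forced order, and $d$ wins; and conversely argue that if at any point a not-yet-covered element candidate becomes the unique minimum it is eliminated, and its elimination dumps its votes onto $d$'s rivals in a way that makes $d$ lose in every continuation. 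Finally prove both directions: an exact cover yields a parallel-universe run with $d$ as STV winner, and any run with $d$ winning must have eliminated a set of $S_j$'s covering all of $U$ with no slack, i.e.\ an exact cover.

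The main obstacle is the second half of the equivalence — controlling \emph{all} parallel-universe executions, not just the intended one. STV's iterated score recomputation is notoriously delicate: after a few eliminations the Plurality scores shift in ways that can open up unintended ties, letting an adversarial tie-break eliminate the ``wrong'' candidate (an element candidate, a dummy, or even push $d$ into contention for elimination) and thereby create a spurious yes-instance. So the real work is a careful invariant argument: I would track, round by round, the multiset of current Plurality scores and show that the set of minimum-score candidates is always exactly the ``intended'' set, using the padding candidates purely as score-calibration tools. Keeping the numbers polynomial in the instance size while making every round's tie-structure rigidly controlled is the technical heart; once the gadget numbers are pinned down, both directions follow by a straightforward (if tedious) case analysis on which candidate is removed in each round.
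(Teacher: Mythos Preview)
Your high-level plan (reduce from an NP-hard covering problem and encode the combinatorial choice in the sequence of tie-breaks) is the right genre, but the specific X3C gadget you sketch does not work as stated, and the missing piece is not just ``tedious bookkeeping.''

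The crux is the \emph{budget} constraint. In your design the set candidates all start tied for strictly lowest Plurality score, and eliminating $S_j$ transfers its first-place votes only to element candidates in $S_j$. Hence the scores of the \emph{remaining} set candidates never change, so in every execution STV will eliminate \emph{all} set candidates before anything else. At that point the notion of ``eliminating exactly $t$ sets'' has evaporated: what the profile now tests is merely whether every element is contained in \emph{some} set, which is trivially true in any reasonable X3C instance. Your clause ``eliminating more than $t$ sets, or sets that overlap, must be made to \dots\ leave $d$ vulnerable'' is exactly where the reduction has to do real work, and nothing in the sketch provides a counting mechanism. Relatedly, your sabotage step (``an uncovered element's elimination dumps its votes onto $d$'s rivals'') is underspecified: an uncovered element has received no transfers, so it carries only whatever base block you gave it, and you have not said where those votes go.

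The paper sidesteps this by reducing from (a restricted) \textsc{SAT} rather than X3C, which supplies a built-in one-of-two structure: for each variable there are two literal candidates, and votes of the form $\ell \succ \bar\ell \succ \cdots$ ensure that eliminating $\ell$ immediately boosts $\bar\ell$ far above the danger zone. This forces the first $n$ rounds to eliminate exactly one literal per variable, with no separate counting gadget. A single ``witness'' candidate $w$ then acts as the switch: if every clause picked up at least one point from an eliminated literal, $w$ is the next minimum and its elimination feeds $d$; otherwise an unsatisfied clause is eliminated first and feeds $w$, dooming $d$. If you want to keep the X3C route, you would need an analogous pairing device (e.g., companion candidates that absorb a set's complement, or a chain that rises with each set elimination and crosses $d$ after exactly $t$ rounds); without it the reduction collapses.
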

\begin{proof}
		We reduce from the NP-hard variant of \textsc{Satisfiability} where each clause contains at most three literals and each literal appears exactly twice \cite{DBLP:journals/eccc/ECCC-TR03-049}.
		Let $\varphi$ be a formula fulfilling these restrictions with clause set $F = \{c_1,\dots, c_m \}$ and variable set $X = \{x_1,\dots, x_n \}$. 
		Let $L = X \cup \overline X$ be the set of literals. 
		We construct a ranking profile with candidate set $C = \{ d, w \} \cup F \cup L$, where $d$ is our designated candidate, and the following voters:
	\begin{alignat*}{4}
	 \text{100 voters} \quad & d \succ \dots & \\
	\text{99 voters} \quad & w \succ d \succ \dots & \\
	\text{98 voters} \quad & c_j \succ w \succ d \succ \dots && \forall j\in [m] \\
	\text{60 voters} \quad & \ell \succ \overline{\ell} \succ w \succ d \succ \dots && \forall \ell \in L \\
	\text{2 voters}  \quad & \ell \succ c_j \succ w \succ d \succ \dots \quad && \forall \ell\in L, j\in [m] \text{ where $\ell$ appears in $c_j$}%
	\end{alignat*}
	
	For this ranking profile,  in every execution of \seqLo{Plurality} the first $n$ eliminated candidates must be a subset $L'\subseteq L$ of literals such that for every variable we select either its positive literal or its negative literal (but not both). In other words, $L'$ must satisfy $\ell\in L'\leftrightarrow \overline{\ell}\notin L'$. 
	To see this, note that all literal candidates initially have a Plurality score of  $64$, which is the lowest Plurality score in the profile, and that all other candidates have a higher Plurality score. 
	Thus, in the first round an arbitrary literal $\ell$ of some variable $x$ is eliminated. 
	This increases the Plurality score of the opposite literal $\overline{\ell}$ to over $120$.
	In the second round, we have to eliminate again an arbitrary literal (however,  this time a literal corresponding to a variable different from $x$).
	We repeat this process for $n$ rounds until for each variable exactly one of the corresponding literals has been eliminated. 
	We claim that an execution of \seqLo{Plurality} eliminates $d$ last if and only if the assignment that sets all literals from $L'$ to \emph{true} satisfies~$\varphi$. 
	
	Suppose $\varphi$ is satisfied by some variable assignment $\alpha$, and consider an execution of \seqLo{Plurality} that begins by eliminating the $n$ literals set to true in $\alpha$. After this, the scores of the remaining candidates are:
	\begin{enumerate}[label=(\roman*),leftmargin=0.8cm]
		\item $d$ has $100$ points,
		\item  $w$ has $99$ points,
		\item $c_j$ for $j\in [m]$ has between $100$ and $104$ points (as at least one of the literals occurring in $c_j$ has been eliminated), and 
		\item each literal $\ell\in L$ set to false by $\alpha$ has $124$ points. 
	\end{enumerate}

	In the next round, $w$ is eliminated, reallocating its $99$ points to $d$. 
	Then, in the next $m$ rounds, each clause candidate $c_j$ is eliminated, in each round reallocating its points to $d$. 
	Finally, the remaining literals are eliminated, also each reallocating their points to $d$. 
	Thus, $d$ is the last remaining candidate and ranked in the first position in the selected~ranking.
	
	Let $L'\subseteq L$  be the set of literals eliminated in the first $n$ rounds in some execution of the \seqLo{Plurality} rule (recall that $\ell\in L'\leftrightarrow \overline{\ell}\notin L'$). 
	Suppose that the assignment $\alpha$ setting all literals from $L'$ to true does not satisfy $\varphi$. 
	After the literals from $L'$ have been eliminated, the scores of the remaining candidate are:
	\begin{enumerate}[label=(\roman*),leftmargin=0.8cm]
		\item $d$ has $100$ points,
		\item  $w$ has $99$ points,
		\item $c_j$ for $j\in [m]$ where $\alpha$ satisfies $c_j$ has between $100$ and $104$ points,
		\item $c_j$ for $j\in [m]$ where $\alpha$ does not satisfy $c_j$ has $98$ points, and 
		\item each literal $\ell\in L$ set to false by $\alpha$ has $124$ points. 
	\end{enumerate}
	Thus, in the next round, one of the unsatisfied clauses is eliminated, redistributing its $98$ points to $w$ bringing the score of $w$ to $197$. 
	Because all but $100$ voters prefer $w$ to $d$, the Plurality score of $d$ will never exceed the score of $w$ in consecutive rounds, so $d$ cannot be eliminated last.
\end{proof}

Motivated by this hardness result, we now turn to the problem's parameterized complexity. 
We have already seen in \Cref{th:parameter-m} that the problem is solvable in $\mathcal{O}(2^m\cdot nm^2)$ time. 
Indeed, we show that unless the Exponential Time Hypothesis (ETH)\footnote{The ETH was introduced by \citet{DBLP:journals/jcss/ImpagliazzoPZ01} and states that \textsc{3-SAT} cannot be solved in $2^{o(n)}\cdot \mathrm{poly}(n)$ time where $n$ is the number of variables.} is false, we cannot hope to substantially improve the exponential part of this running time.

\prooflink{eth}
\begin{restatable}{theorem}{eth}
	\label{th:ETH}
	If the ETH is true, then \Winner for \seqLo{Plurality} (aka STV) cannot be solved in $2^{o(m)}\cdot \mathrm{poly}(n,m)$ time.
\end{restatable}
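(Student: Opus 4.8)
The plan is to reuse the polynomial-time reduction from the proof of \Cref{th:STV-NP} and to observe that it blows up the instance only \emph{linearly} in the number of candidates. Recall that this reduction starts from a formula $\varphi$ in which every clause has at most three literals and every literal occurs exactly twice; write $N$ for its number of variables and $M$ for its number of clauses. The constructed profile has candidate set $C=\{d,w\}\cup F\cup L$, so $m=|C|=2+M+2N$. Since every clause is nonempty and every literal appears exactly twice, $M\le\sum_j|c_j|=2|L|=4N$, and therefore $m=O(N)$. Likewise, the number of voters is $199+98M+120N+2\sum_j|c_j|=199+98M+128N=O(N)$. So the reduction sends a formula with $N$ variables to a \Winner instance for \seqLo{Plurality} with $m=O(N)$ candidates and $n=O(N)$ voters, and it runs in $\mathrm{poly}(N)$ time.

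First I would assume, toward a contradiction, that \Winner for \seqLo{Plurality} can be solved in $2^{o(m)}\cdot\mathrm{poly}(n,m)$ time. Composing such an algorithm with the reduction above then decides the restricted \textsc{Satisfiability} variant from which \Cref{th:STV-NP} reduces in $2^{o(N)}\cdot\mathrm{poly}(N)$ time. It remains to rule this out under the ETH, and for this I would invoke two standard ingredients. By the Sparsification Lemma of \citet{DBLP:journals/jcss/ImpagliazzoPZ01}, the ETH implies that \textsc{3-SAT} on instances with $O(N)$ clauses cannot be solved in $2^{o(N)}\cdot\mathrm{poly}(N)$ time. And the reduction turning an arbitrary \textsc{3-SAT} formula into the bounded-occurrence variant used in \Cref{th:STV-NP} (see \cite{DBLP:journals/eccc/ECCC-TR03-049}) is linear in size: it replaces each variable by a cycle of equality-linked copies, one per occurrence, hence $\sum_x(\text{occurrences of }x)=O(M)=O(N)$ new variables in total on a sparsified instance, plus constant-size padding to make occurrences exact. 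Chaining the two, a $2^{o(N)}\cdot\mathrm{poly}(N)$ algorithm for the bounded-occurrence variant would give one for general \textsc{3-SAT}, contradicting the ETH. This proves the theorem.

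I expect the only real obstacle to be this last bookkeeping step: arguing that the particular SAT variant feeding \Cref{th:STV-NP} is itself not solvable in time subexponential in its number of variables. The delicate point is that a reduction to a bounded-occurrence form must blow up by the number of variable occurrences, which for general \textsc{3-SAT} need not be linear in the number of variables; this is precisely what the Sparsification Lemma fixes, after which linearity of the occurrence-normalization reduction does the rest. Everything else — the size count above — is routine. (The same bound then transfers to \TopK and \PositionK since \Winner is the special case $k=1$, and to \seqWi{Veto} via \Cref{lem:equiv}, though neither extension is needed for the statement as given.)
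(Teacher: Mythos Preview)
Your approach is correct but differs from the paper's. You reuse the reduction of \Cref{th:STV-NP} from bounded-occurrence \textsc{3-SAT}, count that it produces $m=O(N)$ candidates and $n=O(N)$ voters, and then argue (via the Sparsification Lemma and the linear variable-copying reduction to the bounded-occurrence form) that this SAT variant is itself not solvable in $2^{o(N)}$ time under ETH. The paper instead builds a \emph{fresh} reduction from \textsc{Cubic Vertex Cover}, whose ETH-hardness it simply cites from the literature (\Cref{prop:eth-cubic-vertex-cover}); that reduction yields $2n+\tfrac{3n}{2}+3$ candidates, so the ETH bound follows immediately.

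Your route is more economical in that it recycles the construction already proved correct in \Cref{th:STV-NP}, at the cost of importing two external ingredients (the Sparsification Lemma and the linearity of the occurrence-normalization step of \citet{DBLP:journals/eccc/ECCC-TR03-049}) that the paper itself never invokes. The paper's route pays for a second reduction but keeps the ETH argument to a one-line citation. Both are standard and valid; the only soft spot in your write-up is that you assert rather than verify the linearity of the \citet{DBLP:journals/eccc/ECCC-TR03-049} construction, though your sketch of the cycle-of-copies gadget is exactly the right mechanism.
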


Turning to the number $n$ of voters, we can observe that initially only at most $n$ candidates have a non-zero Plurality score. All other candidates (which are not ranked first in any ranking) will be eliminated immediately, without thereby changing the Plurality scores of other candidates. After these eliminations, we are left with at most $n$ candidates. This makes it easy to see that \PositionK is fixed-parameter tractable with respect to $n$ (by using \Cref{th:parameter-m}).

\begin{observation}\label{ob:STV-n}
	\PositionK for \seqLo{Plurality} (aka STV) is solvable in $\mathcal{O}(2^n\cdot nm^2)$ time.
\end{observation}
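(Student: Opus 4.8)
The plan is to reduce the instance to one with few candidates and then invoke \Cref{th:parameter-m}. First I would observe that under Plurality, the score of a candidate $c$ equals the number of voters who rank $c$ first. Hence at most $n$ candidates have positive Plurality score, and every other candidate has score $0$. Since a Plurality score of $0$ is the minimum possible and there is always at least one candidate with score $0$ as long as some candidate is ranked first by nobody, all such zero-score candidates are simultaneously (across some execution) eligible to be eliminated; moreover, eliminating a candidate $c$ with score $0$ does not change the Plurality score of any remaining candidate, because no voter ranked $c$ first, so the ``next'' candidate of each voter is unaffected.

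Next I would argue that this lets us preprocess the instance. Let $Z \subseteq C$ be the set of candidates with Plurality score $0$ in $P$; note $|C \setminus Z| \le n$. I claim that the set of rankings selected by \seqLoAb{Plurality} on $P$ is exactly the set of rankings $\succ$ such that $\succ|_{C \setminus Z}$ is selected by \seqLoAb{Plurality} on $P|_{C \setminus Z}$ and the candidates of $Z$ occupy the last $|Z|$ positions of $\succ$ in some order. Indeed, in any execution we may (and, to realize any selected ranking, must be able to) eliminate the candidates in $Z$ first, in any order, since each is an $\vs$-loser and removing it leaves the remaining scores untouched; after all of $Z$ is removed we are exactly running \seqLoAb{Plurality} on $P|_{C \setminus Z}$. (One should also check that a zero-score candidate can never be eliminated after a positive-score candidate in a way that changes which rankings are reachable — but since all of $Z$ can always be pushed to the bottom, the reachable position-sets are as described.)

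Given this characterization, to solve \PositionK for designated candidate $d$ and target position $k$: if $d \in Z$, then $d$ can be placed in position $k$ if and only if $k > |C \setminus Z|$, which is trivial to check. If $d \notin Z$, then $d$ can be in position $k$ only if $k \le |C \setminus Z| \le n$, and in that case we run the dynamic program of \Cref{th:parameter-m} on the reduced profile $P|_{C \setminus Z}$, which has at most $n$ candidates; by that theorem this takes $\mathcal{O}(2^{n} \cdot n m^2)$ time (using that $|C\setminus Z|\le n$, and noting the profile still has $n$ voters, and re-expressing candidate-count factors in terms of the original $m$ where the input must still be read). Adding the $\mathcal{O}(nm)$ preprocessing to compute scores and form $P|_{C\setminus Z}$ stays within the claimed bound.

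The main obstacle is the second step: one must be careful that restricting attention to ``eliminate all of $Z$ first'' does not lose any selected ranking. The subtlety is that a selected ranking could, a priori, arise from an execution that eliminates some positive-score candidate before some zero-score candidate; I need to argue that any such execution can be reordered so that all zero-score eliminations happen first without changing the final ranking's reachability of $d$ in position $k$ — or, more precisely, that the position multiset achievable for $d$ is unchanged. This follows because zero-score candidates are $\vs$-losers at every stage until only they remain (their score stays $0$, and there is always a candidate of score $0$ among them until they are exhausted, while at least one non-zero candidate survives so the minimum is still $0$), so we can always interleave their eliminations arbitrarily; making this reordering argument precise is the one place that requires care, but it is routine.
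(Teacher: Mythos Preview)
Your proposal is correct and follows essentially the same approach as the paper: observe that at most $n$ candidates have nonzero Plurality score, that the zero-score candidates must be eliminated first (in any order) without affecting the remaining scores, and then apply \Cref{th:parameter-m} to the reduced profile on at most $n$ candidates.

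One remark: your ``main obstacle'' is not actually an obstacle. You worry that some selected ranking might arise from an execution eliminating a positive-score candidate before a zero-score one, and that you would then need a reordering argument. But this cannot happen. As long as any candidate in $Z$ remains, the minimum Plurality score in the current profile is $0$ (since removing zero-score candidates leaves all scores unchanged), so only candidates in $Z$ are $\vs$-losers and hence eligible for elimination. Thus \emph{every} execution of \seqLoAb{Plurality} necessarily eliminates all of $Z$ first, in some order, and then proceeds on $P|_{C\setminus Z}$. Your characterization of the selected rankings follows immediately, with no reordering needed.
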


\subsubsection{Veto}
We now turn to \seqLoAb{Veto} (aka Coombs). \citet{mattei2014hard} showed that the \Winner problem for this rule is NP-hard.
We give an alternative NP-hardness proof that also implies an ETH-based lower bound for the parameter $m$.
\prooflink{CoombsNP}
\begin{restatable}{theorem}{CoombsNP}
	\Winner for \seqLo{Veto} (aka. Coombs) is NP-complete. If the ETH is true, then the problem cannot be solved in $2^{o(m)}\cdot \mathrm{poly}(n,m)$ time.
\end{restatable}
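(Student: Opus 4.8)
First I would settle membership in NP: a certificate is an elimination order $c_{i_1},\dots,c_{i_m}$, and one checks in polynomial time that at each round $c_{i_t}$ is a Veto-loser of the profile restricted to the candidates not yet eliminated, and that $c_{i_m}=d$. So \Winner for \seqLo{Veto} is in NP. For hardness I would reduce, as in the proof of \Cref{th:STV-NP}, from the NP-hard variant of \textsc{Satisfiability} in which every clause has at most three literals and every literal occurs exactly twice; note that then the number $M$ of clauses is $\Theta(N)$, where $N$ is the number of variables. The construction would produce a profile over a candidate set $C=\{d\}\cup\{w\}\cup L\cup F\cup(\text{a constant number of calibration candidates})$, where $L=X\cup\overline{X}$ are literal candidates and $F$ are clause candidates, so that $|C|=\Theta(N)$ and all voter-block multiplicities are polynomial in $N$. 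Because $m=|C|=\mathcal{O}(N)$, a $2^{o(m)}\cdot\mathrm{poly}(n,m)$ algorithm would solve this SAT variant in $2^{o(N)}\cdot\mathrm{poly}(N)$ time, so the ETH lower bound falls out of the very same reduction, exactly as \Cref{th:ETH} does from the STV reduction. (Going through \Cref{lem:equiv} instead would only reduce to an unstudied ``loser'' variant of \seqWiAb{Plurality}, so a direct reduction is cleaner.)

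The profile would be engineered so that every execution of \seqLo{Veto} decomposes into two stages. In the first $N$ rounds the literal candidates are (tied for) the Veto-losers, forcing the rule to eliminate literals; the voters are further arranged so that the moment one literal $\ell$ of a variable $x$ is removed, the opposite literal $\overline{\ell}$ leaves the set of Veto-losers. Hence exactly one literal per variable is deleted in the first stage, encoding a truth assignment $\alpha$. In the remaining rounds the surviving literals, the clause candidates, and $w$ are eliminated in an order dictated by the scores, and the multiplicities are calibrated so that if $\alpha$ satisfies $\varphi$ then the last-place votes keep being absorbed by clause candidates and by $w$, $d$ never becomes the minimum-Veto candidate, and $d$ wins the final pairwise comparison, so $d$ can be ranked first; whereas if $\alpha$ leaves some clause unsatisfied, then that clause candidate, when eliminated, transfers enough last-place votes onto $d$ that $d$ becomes the unique Veto-loser at some round and is forced out in every continuation.

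Correctness then has the usual two directions. For ($\Leftarrow$), given a satisfying $\alpha$ I would exhibit the execution that eliminates the $N$ literals falsified by $\alpha$ (in any order), then processes the remaining candidates in the prescribed order, tracking the Veto scores round by round to confirm $d$ is never eliminated — mirroring the ``suppose $\varphi$ is satisfied'' part of the proof of \Cref{th:STV-NP}. For ($\Rightarrow$), from an execution ending with $d$ I would first invoke the stage-one forcing to conclude that exactly one literal per variable was removed, read off $\alpha$, and then use the score bookkeeping to show that if $\alpha$ falsified a clause, $d$ would be the unique Veto-loser at some round, contradicting that $d$ survives.

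The hard part is the literal gadget, which is genuinely more delicate than in the Plurality/STV setting. Under Veto, eliminating a candidate can only \emph{increase} the number of last-place votes of each surviving candidate, never decrease it, so one cannot make $\overline{\ell}$ ``safe'' the way STV does by boosting its first-place score. Instead one must arrange that after $\ell$ is removed, the last-place count of $\overline{\ell}$ grows strictly more slowly than those of all other surviving literals — for instance by making the voters responsible for $\ell$'s last-place votes rank $\overline{\ell}$ well above the bottom, so that deleting $\ell$ pushes those last-place votes onto calibration or clause candidates rather than onto $\overline{\ell}$ — and this separation has to be maintained through all $N$ rounds while simultaneously keeping $w$, $d$, and the clause candidates strictly above the minimum Veto score until the intended moment. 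Choosing integer multiplicities that satisfy all of these inequalities simultaneously is where essentially all the work lies; the two-stage skeleton and the final $d$-versus-rest comparison are then routine.
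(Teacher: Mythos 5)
There is a genuine gap: your argument is a plan whose central construction is never carried out, and you say so yourself (``this is where essentially all the work lies''). The NP-membership part and the high-level two-stage skeleton are fine, but the literal gadget --- the device that forces exactly one literal per variable to be eliminated in the first $N$ rounds under Veto --- is exactly the point where the STV reduction does not transfer, and you only gesture at how to fix it. Under Veto the bottom count of a surviving candidate is monotonically nondecreasing, so to knock $\overline{\ell}$ out of the set of Veto-losers after $\ell$ is removed you must make \emph{every other} surviving literal's count strictly overtake it, in every one of the $N$ rounds, regardless of which literals were chosen so far, while simultaneously keeping $w$, $d$, and all clause candidates strictly below the running maximum and still calibrated so that the second stage distinguishes satisfied from unsatisfied clauses. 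No multiplicities are exhibited and no invariant is verified, so the reduction as written does not exist yet. A secondary problem is the ETH claim: you assert it ``falls out of the very same reduction, exactly as \Cref{th:ETH} does from the STV reduction,'' but \Cref{th:ETH} is \emph{not} derived from the SAT reduction of \Cref{th:STV-NP}; the paper proves it by a separate reduction from \textsc{Cubic Vertex Cover} precisely because an ETH lower bound for the bounded-occurrence SAT variant is not something it establishes or cites. Your route would need such a lower bound as an additional ingredient.

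For comparison, the paper avoids the literal gadget entirely by reducing from \textsc{Regular Clique} (with the ETH bound supplied by \Cref{prop:eth-regular-clique}). That construction is tailored to Veto's monotone bottom counts: eliminating a set $W$ of $p$ vertex candidates raises $d$'s bottom count linearly (by $p$) but raises $w$'s count by $2e_W \le p(p-1)$, with equality only when $W$ is a clique, so $w$ becomes eliminable at round $k$ exactly when a $k$-clique has been deleted; after $w$ falls, everything cascades in a canonical order ending with $d$. The quadratic-versus-linear race is the mechanism that replaces your per-variable forcing, and it is arguably the natural one for Veto. If you want to salvage your SAT-based approach, you would need to write down the $O(N^2)$ vote blocks that implement the ``every non-complementary literal gains a bottom vote, the complement does not'' rule, prove the invariant that the gap is preserved across all $N$ rounds for every choice sequence, and separately justify the $2^{o(N)}$ lower bound for your source problem.
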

For the parameter $n$, we show that the problem is W[1]-hard with respect to the number of voters. This is shown via an involved reduction from \textsc{Multicolored Independent Set}.
This result suggests that \seqLoAb{Veto} behaves quite differently from \seqLoAb{Plurality}, even if these two rules might seem ``symmetric'' to each other.

\prooflink{CoombsWhard}
\begin{restatable}{theorem}{CoombsWhard}
\label{th:Coombs-NP}
	\Winner for \seqLo{Veto} (aka. Coombs) is W[1]-hard with respect to the number $n$ of voters. 
\end{restatable}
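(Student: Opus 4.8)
The plan is to reduce from \textsc{Multicolored Independent Set}: given a graph $G$ with vertex set partitioned into $k$ color classes $V_1, \dots, V_k$, decide whether $G$ has an independent set containing exactly one vertex from each class. The key design challenge for a W[1]-hardness reduction with parameter $n$ is that the number of voters must depend only on $k$ (and not on $|V(G)|$ or $|E(G)|$), so all graph information must be encoded in the \emph{positions} of candidates inside a bounded number of ballots, while using the Veto rule (only the last-ranked candidate of each ballot is penalized) to drive the elimination sequence. I would introduce one candidate $v$ for each vertex of $G$, a designated candidate $d$, and a collection of auxiliary ``gadget'' candidates (padding candidates whose only role is to occupy last positions and thereby control scores in early rounds, plus one candidate per color class and possibly per edge-consistency check).

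The intended behavior of an execution of \seqLo{Veto} is as follows. In a first phase, the bounded block of ballots is arranged so that the forced sequence of \emph{$\vs$-losers} (i.e.\ candidates that are last in the most ballots, counted with multiplicity) corresponds to choosing, for each color class $V_j$, all but one of its vertex-candidates for elimination — i.e.\ the surviving vertex-candidates encode a choice of one vertex per color. The gadget candidates are given vote counts calibrated so that, after the ``padding'' candidates are peeled off, the only candidates tied for minimum Veto score in each round are exactly the vertex-candidates of the currently-active color class, so the adversary (parallel-universe tie-breaking) gets to pick which one survives. In a second phase, I would use edge gadgets: for every edge $\{u,w\}$ of $G$ with $u,w$ in different color classes, there is a ballot (or a small fixed set of ballots, reused across edges via careful placement) that ranks $u$ and $w$ near the bottom in such a way that if \emph{both} $u$ and $w$ have survived phase one, some gadget candidate or $d$ itself is forced to acquire a low enough (i.e.\ sufficiently negative) Veto score to be eliminated before $d$ — blocking $d$ from being the last survivor; whereas if at most one endpoint of every edge survived, $d$'s score stays strictly above the minimum until only $d$ remains. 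Thus $d$ can be made the \seqLo{Veto} winner iff the surviving vertices form a multicolored independent set.

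The main obstacle — and where most of the work will go — is making the score bookkeeping robust: with Veto, each ballot contributes $-1$ only to its bottom candidate, and that bottom candidate changes as candidates get deleted, so I must ensure that across all $O(1)$-many rounds of interest the relevant ties and strict inequalities hold \emph{simultaneously} and are not accidentally broken by an earlier deletion shifting some ballot's penalty onto an unintended candidate. Concretely, I would need to (i) pad with enough ``dummy'' candidates and replicate certain ballots enough times (a number depending only on $k$) so that the forced initial eliminations are exactly the dummies, in a predetermined order, regardless of tie-breaking; (ii) arrange the bottoms of the ballots encoding color classes so that within class $V_j$ every vertex-candidate is the unique bottom of the same number of ballots, making them perfectly tied, while candidates of not-yet-active classes sit just above them; and (iii) verify the edge gadgets interact correctly with the phase-one eliminations — in particular that eliminating the ``losing'' vertex-candidates of a class moves the correct penalties onto the intended edge-gadget targets. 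Once these invariants are set up, the correctness argument is a straightforward (if tedious) round-by-round case analysis, and membership in W[1] / XP for parameter $n$ follows from the general bound, so the core contribution is the gadget construction itself; I expect the write-up to require an explicit table of candidates, ballot multiplicities as functions of $k$, and a lemma pinning down the round-by-round score profile.
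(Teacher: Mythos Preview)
Your high-level plan---reduce from \textsc{Multicolored Independent Set} with the number of voters bounded by a function of $k$---matches the paper, but there is a genuine gap in your edge-checking mechanism, and you also miss a simplification the paper exploits. On the simplification: the paper does not reason directly about \seqLoAb{Veto}. It invokes \Cref{lem:equiv} and instead shows that deciding whether a designated candidate can be ranked \emph{last} under \seqWiAb{Plurality} is W[1]-hard in $n$; this is formally equivalent but makes the bookkeeping easier, since with Plurality the action is at the \emph{tops} of ballots.

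Now the gap. You write that for each edge $\{u,w\}$ there will be ``a ballot (or a small fixed set of ballots, reused across edges via careful placement)'' performing the independence check, but this is exactly the crux and your plan supplies no mechanism. With only $O(k)$ ballots you cannot dedicate anything per edge, yet you must detect for every one of the $\Theta(|E|)$ edges whether both endpoints survived. The paper's solution is a \emph{sandwich trick}: each vertex $v_i^j$ is represented by a pair of candidates $c_i^j, q_i^j$, and for each color $j$ there are two ballots listing these pairs in opposite orders, with the edge candidates $F_i^j$ (for edges incident to $v_i^j$) inserted between $c_i^j$ and $q_i^j$. Thus if the pair $(c_{i_j}^j, q_{i_j}^j)$ survives, every candidate in $F_{i_j}^j$ is trapped strictly between two surviving candidates in \emph{both} ballots for color $j$ and can never reach the top of either---so (in the Plurality-Winner picture) it can never be eliminated. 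A separate cascade argument forces all edge candidates to be eliminated by a certain round; hence if some edge joins two selected vertices, its candidate is trapped in all four relevant color ballots and could not have been eliminated, a contradiction. Your ``rank $u$ and $w$ near the bottom'' idea does not achieve this: under Veto every position except the last is invisible to the score, so placing $u,w$ near the bottom does nothing until everything below them is gone, and you give no way to coordinate this across unboundedly many edges with $O(k)$ ballots. The sandwich construction (or something equivalent) is the missing idea.
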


However, on the positive side,  \Winner and even \PositionK  are solvable in polynomial-time if the number of voters is a constant. 
The intuition behind this result is that for \seqLoAb{Veto}, the ``status'' of an execution is fully captured by the \emph{bottom list} of the  ranking profile, i.e., a list containing the bottom-ranked candidate of each voter. 
Indeed, if we know the current bottom list, we can deduce exactly which candidates have been eliminated thus far.
As there are only $m^n$ many possibilities for the bottom list, dynamic programming yields an XP algorithm for \PositionK. 
\prooflink{vetoXP}
\begin{restatable}{theorem}{vetoXP}
	\label{th:Lo-Ve-K-n}
	\PositionK for \seqLo{Veto} is in XP with respect to the number $n$ of voters. 
\end{restatable}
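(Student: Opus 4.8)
The plan is to make precise the remark that an execution of \seqLo{Veto} is, at every point, completely described by its \emph{bottom list}. For a set $S \subseteq C$ of candidates not yet eliminated, let $b_i(S)$ be the candidate that voter $i$ ranks lowest among $S$, call $B(S) = (b_1(S), \dots, b_n(S)) \in C^n$ the bottom list of $S$, and write $c \succcurlyeq_i c'$ for ``$c = c'$ or $c \succ_i c'$''. The key lemma I would establish is: for every survivor set $S$ that can arise during an execution of \seqLo{Veto} (i.e.\ obtained from $C$ by repeatedly deleting a Veto-loser),
\[
S \;=\; \bigl\{\, c \in C : c \succcurlyeq_i b_i(S) \text{ for all } i \in [n] \,\bigr\}.
\]
The inclusion ``$\subseteq$'' is immediate since $b_i(S)$ is voter $i$'s lowest survivor. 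For ``$\supseteq$'', suppose $c \succcurlyeq_i b_i(S)$ for all $i$ but $c$ was eliminated in an earlier round. In that round at least two candidates remained, so the Veto scores of the then-survivors sum to $-n < 0$ (we may assume $n \ge 1$); hence the eliminated candidate $c$ had strictly negative Veto score, meaning some voter $j$ ranked $c$ last among the then-survivors. After $c$ is deleted, voter $j$'s lowest survivor moves strictly above $c$ and only rises further thereafter, so $b_j(S) \succ_j c$, contradicting $c \succcurlyeq_j b_j(S)$.

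Granting this lemma, every state of an execution can be identified with one of the at most $m^n$ bottom lists, and distinct reachable states have distinct bottom lists (again by the lemma). Furthermore, from a bottom list $B$ one can, after precomputing $\rank(\succ_i,c)$ for all $i$ and $c$, in $\mathrm{poly}(n,m)$ time: recover the survivor set $S = S(B)$ from the displayed formula; compute each $c \in S$'s Veto score $-|\{i : b_i = c\}|$; extract the set $E \subseteq S$ of Veto-losers, i.e.\ the candidates that may legally be eliminated next; and, for each $e \in E$, compute the successor bottom list obtained by replacing every entry equal to $e$ by that voter's lowest-ranked candidate in $S \setminus \{e\}$.

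The algorithm then runs a reachability search over bottom lists, starting from $B(C)$ and following these transitions, marking every reachable bottom list; this visits at most $m^n$ states, each processed in $\mathrm{poly}(n,m)$ time, giving total time $O(m^n \cdot \mathrm{poly}(n,m))$, which is XP in $n$. To answer the \PositionK query: under \seqLo{Veto} the candidate eliminated while exactly $r$ candidates remain is placed in position $r$ of the output ranking (in the final round, $r = 1$, the sole survivor is trivially a Veto-loser and is placed in position $1$), and since each round removes exactly one candidate, an execution passes through exactly one state with each value $r \in [m]$ of survivors. Hence the instance is a yes-instance if and only if the search reaches some bottom list $B$ with $|S(B)| = k$, $d \in S(B)$, and $d$ attaining the minimum Veto score in $S(B)$ (for $k=1$ this is just $S(B) = \{d\}$); correctness of this equivalence follows because two executions reaching the same state behave identically afterwards and every search transition corresponds to a legal elimination and vice versa.

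I expect the main obstacle to be the ``$\supseteq$'' direction of the lemma — showing that a candidate sitting weakly above every current bottom-list entry cannot have been removed in an earlier round — while the state-space bound, the polynomial-time transition computation, and the round-to-position correspondence are routine. One should also dispatch the degenerate case $n = 0$ separately, where every ranking is selected and the answer is trivially ``yes''.
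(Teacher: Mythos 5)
Your proposal is correct and follows essentially the same route as the paper's proof: the bottom list is the complete state of an execution, it determines the set of already-eliminated candidates (your key lemma is the paper's ``validity'' observation, with the $\supseteq$ direction spelled out more carefully via the fact that a Veto-loser among $\ge 2$ survivors must have strictly negative score), and one then searches the at most $m^n$ bottom-list states with polynomial-time transitions. The only cosmetic difference is that you run a forward reachability search where the paper fills a dynamic-programming table over the same state space.
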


\subsubsection{Borda}
We conclude by studying \seqLoAb{Borda} (aka Baldwin). \citet{mattei2014hard} proved that \Winner for this rule is NP-hard, adapting an earlier reduction about hardness of manipulation due to \citet{DBLP:journals/ai/DaviesKNWX14}. 
In fact, by giving a construction based on weighted majority graphs and using tools from \citet{bachmeier2019}, we prove that this NP-hardness persists even for only $n=8$ voters. 
This result suggests that the Borda scoring system leads to the hardest computational problems.

\prooflink{baldwinNP}
\begin{restatable}{theorem}{baldwinNP}
	\label{th:Baldwin-NP}
	Let $n \ge 8$ be a fixed even integer.
	Then \Winner for \seqLo{Borda} (aka Baldwin), restricted to instances with exactly $n$ voters, is NP-complete. In addition, if the ETH is true, then the problem cannot be solved in $2^{o(m)}\cdot \mathrm{poly}(m)$ time.
\end{restatable}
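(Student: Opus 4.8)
The plan is to reduce an NP-hard problem to \Winner for \seqLo{Borda} using a profile with only $n$ voters, exploiting the fact that an entire execution of Baldwin's rule is governed solely by the \emph{weighted majority graph} of the profile. Indeed, for any subset $C' \subseteq C$ and any $c \in C'$, the Borda score of $c$ in $P|_{C'}$ equals $n + \sum_{d \in C'\setminus\{c\}} N_P(c,d)$, where $N_P(c,d)$ is the number of voters ranking $c$ above $d$; the additive term $n$ is common to all candidates, so which candidate is an $\vs$-loser in every round depends only on the pairwise counts $N_P(c,d)$ (equivalently, since $N_P(c,d)+N_P(d,c)=n$, on the margins $N_P(c,d)-N_P(d,c)$). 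Hence it suffices (i)~to construct, in polynomial time from the input instance, a weighted majority graph $G$ whose Baldwin elimination behaviour encodes the answer and whose margins all lie in the narrow range $\{-n,-n{+}2,\dots,n{-}2,n\}$ achievable by $n$ voters, and (ii)~to realize $G$ as the weighted majority graph of some profile with exactly $n$ voters.

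For part~(ii) I would invoke the realizability machinery of \citet{bachmeier2019}: their tools show that weighted majority graphs whose margins have the right parity and are bounded as above are induced by a constant number of voters. To reach exactly $n$ voters for every fixed even $n \ge 8$, one then adds $(n-8)/2$ pairs each consisting of a ranking and its reverse; such a pair contributes $(1,1)$ to every $(N(c,d),N(d,c))$ and, within any subset $C'$, shifts every candidate's Borda score by the same amount $|C'|+1$, so it changes neither the margins nor the execution. Membership in NP is immediate: a certificate is a complete elimination order together with the tie-break chosen in each round, and one checks in polynomial time that each eliminated candidate is a Borda loser of the current subprofile and that $d$ is the last survivor.

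Part~(i) is the core of the argument. I would start from the known NP-hardness reduction for Baldwin (\citealp{mattei2014hard}, adapting \citealp{DBLP:journals/ai/DaviesKNWX14}) — or give a fresh reduction in the spirit of the STV reduction of \Cref{th:STV-NP} — rephrased directly in terms of a structured weighted majority graph, and then rework it so all margins shrink into the $8$-voter window while the sequence of eliminations is preserved. The delicate point, and the main obstacle, is that Borda is a \emph{global} rule: a candidate's score depends on all surviving candidates, so naively rescaling margins can reorder scores and derail which candidate is eliminated. I would therefore organize the candidate set into \emph{phases}: a first batch of auxiliary candidates tuned (via their margins against everyone) to be the unique Borda losers and hence eliminated first; after their removal, the surviving subprofile confronts the ``decision'' candidates with exactly the score gaps needed to encode a binary choice (e.g.\ which literal of each variable gets eliminated, mirroring \Cref{th:STV-NP}); and a final batch that is eliminated last and funnels every consistent outcome onto the designated candidate $d$. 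Proving that each phase is \emph{robust} — that no tie or near-tie ever lets an execution escape the intended order of eliminations — is where the bulk of the case analysis lies.

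Finally, for the ETH lower bound I would keep the reduction of linear size: starting from a $3$-SAT instance with $v$ variables and $O(v)$ clauses (using the bounded-occurrence, sparsified variant as in \Cref{th:STV-NP}), the constructed profile should use only $m = O(v)$ candidates. Since $n = 8$ is constant, $\mathrm{poly}(n,m) = \mathrm{poly}(m)$, so a $2^{o(m)}\cdot\mathrm{poly}(m)$ algorithm for \Winner under \seqLo{Borda} would decide $3$-SAT in $2^{o(v)}$ time, contradicting the ETH.
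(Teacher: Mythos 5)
Your high-level skeleton matches the paper's: reason entirely in terms of the weighted majority graph (C2-Borda scores), build a polynomial-size margin graph encoding an NP-hard problem, realize it with $8$ voters, pad to any even $n$ with reversed ranking pairs, and get the ETH bound from a linear-size reduction out of a sparse source problem. The padding argument and the NP-membership argument are correct as you state them. However, there are two genuine gaps.

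First, your realizability step rests on a false premise. You claim that Bachmeier et al.'s tools show that any weighted majority graph ``whose margins have the right parity and are bounded'' by the voter budget is induced by a constant number of voters. This is not true: a generic digraph with all margins $\pm 2$ already requires $\Omega(m/\log m)$ voters (Erd\H{o}s--Moser/Stearns), which is precisely why constant-voter hardness results are delicate. What the paper actually uses is \Cref{lem:bilevel}: a \emph{bilevel} arc set (a disjoint union of complete bipartite ``blocks'' $C_i \times D_i$, all arcs of weight $2$) can be induced by exactly $2$ voters. To get $8$ voters one must therefore \emph{design the gadget so that its entire arc set partitions into at most $4$ bilevel graphs}. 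This is a binding structural constraint on the reduction, not a post-hoc bookkeeping step; it dictates, for example, that each vertex candidate can be beaten by at most a constant number of ``individually addressed'' arcs (the paper labels the three edges incident to each vertex as $e_v^1,e_v^2,e_v^3$ and places one of them in each of three different bilevel layers). Without exhibiting such a decomposition your plan does not yield a bounded number of voters.

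Second, the core gadget is missing. The paper does not rework the Mattei et al.\ reduction; it gives a fresh reduction from \textsc{Cubic Vertex Cover} with a concrete candidate set ($V \cup E \cup \{d\} \cup B \cup F \cup G \cup H \cup K$) and explicitly calibrated C2-Borda scores ($-14$ for vertices, $-6$ for edges, $2t-16$ for the blockers, $-8$ for $d$, etc.), and then verifies round by round that every execution in which $d$ survives must eliminate a size-$t$ vertex cover first, then $B$, then the (covered) edges. Your ``phases'' sketch names the right difficulty --- Borda scores are global, so eliminations in one phase shift scores everywhere --- but the robustness claim (``no tie or near-tie ever lets an execution escape the intended order'') is exactly the content that has to be proved, and nothing in the proposal certifies that a construction with these properties exists, let alone one whose arcs decompose into four bilevel layers. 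As written, the proposal is a plausible research plan rather than a proof.
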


\subsection{Sequential Winner} \label{sec:complexity-winner}
In this subsection, we briefly summarize our results for  \seqWiAb{Plurality/Veto/Borda}, which to the best of our knowledge have not been previously studied  (for formal statements and proofs see \Cref{app:complexity-winner}). 
As \Winner is trivial for these rules, we focus on \TopK. 
\Cref{tab:sequentialWinner} displays an overview of our results. 
For all three rules, it turns out that \TopK is NP-hard and W[1]-hard with respect to $k$.
In contrast, for the parameter $n$, the picture is again more diverse: For Borda, we once more get NP-hardness for a constant number of voters ($n=8$), while Plurality and Veto switch their role (we have a fixed-parameter tractable algorithm for Veto and W[1]-hardness for Plurality). 
Recalling the equivalence from \Cref{lem:equiv}, this switch is unsurprising. Indeed, similar reductions are used here as for the corresponding results for Sequential-Loser for the other scoring system.

\section{Future Directions}

There are many directions for future work. In our complexity study, we have focused on the analysis of the space of possible outcomes. However, if we are happy to break ties immediately (e.g. by some fixed order), one could focus on finding the fastest algorithms for computing the output ranking. Interestingly, it is known that computing STV is P-complete \citep{csar2017winner}, so its computation is unlikely to be parallelizable.
An additional challenging open problem will be to determine if the hard problems we have identified become tractable if preferences are structured, for example single-peaked. Note that for single-peaked preferences, it is known that Coombs becomes a Condorcet extension and easy to compute \citep[Prop.~2]{grofman2004coombs}.
Further, ranking candidates by Borda score is known to give a 5-approximation of Kemeny's method \citep{coppersmith2006ordering}. This raises the question whether any other of the rules from our families provide an approximation?
Other specific questions left open by our work are whether ETH lower bounds can be obtained for additional problems, and whether they can be strengthened to SETH bounds. 
Finally, one could try to extend our results to other scoring vectors, and potentially prove dichotomy theorems.

\clearpage
\enlargethispage{20pt}

\clearpage

\appendix

\clearpage

\section{Additional Material for \Cref{sec:axioms}}\label{app:axioms}

In the main body, in \Cref{sec:axioms}, we have given an informal overview of axiomatic properties satisfied by the rules in our three families. In this appendix, we give formal statements of these results. In particular, we will give formal definitions of the relevant axioms.

Let us introduce some additional notation. For two ranking profiles $P = (\succ_1, \dots, \succ_n)$ and $P' = (\succ_1', \dots, \succ_n')$, both defined over the same candidate set, we write $P + P' = (\succ_1, \dots,  \succ_n, \succ_1', \dots, \succ_n')$ for the ranking profile obtained by concatenating the two lists. For an integer $k$, we write $kP = P + \cdots + P$ obtained by concatenating $k$ copies of $P$. For a set $S \subseteq \mathcal{L}(C)$ of rankings, we write $\cand(S, r) = \{ \cand({\succ}, r) : {\succ} \in S \}$ for the set of candidates that appear in position $r$ in at least one of the rankings in $S$. If ${\succ} \in \mathcal{L}(C)$ is a ranking and $\rho : C \to C$ is a permutation of the candidate set, then $\rho(\succ)$ is the ranking where for all pairs $a,b \in C$ of candidates, we have $\rho(a) \mathbin{\rho(\succ)} \rho(b)$ if and only if $a \succ b$. For a set $S \subseteq \mathcal{L}(C)$ of rankings, we write $\rho(S) = \{\rho(\succ) : {\succ} \in S\}$.

Let $f$ be a social preference function, defined for profiles with any number of voters and over all possible candidate sets. (We assume this large domain to be able to state axioms that reason about variable agendas (i.e. different candidate sets) and about variable electorates (i.e., different numbers of voters).) Whenever we do not specify otherwise, in the following axioms we implicitly quantify over all possible finite sets $C$ of candidates.

We begin with some basic axioms.
\begin{itemize}
	\item The rule $f$ is \emph{anonymous} if for all profiles $P = (\succ_1, \dots, \succ_n)$ and all permutations $\sigma : [n] \to [n]$, we have $f(P) = f((\succ_{\sigma(1)}, \dots, \succ_{\sigma(n)}))$. Thus, reordering the rankings does not change the outcome.
	\item The rule $f$ is \emph{neutral} if for all profiles $P = (\succ_1, \dots, \succ_n)$ and all permutations $\rho : C \to C$, we have $f(\rho(P)) = \rho(f(P))$. Thus, a relabeling of candidates leads to the same relabeling of the output.
	\item The rule $f$ is \emph{unanimous} if for all rankings ${\succ} \in \mathcal{L}(C)$ and all profiles $P = (\succ, \dots, \succ)$, where all rankings in $P$ are equal to $\succ$, we have $f(P) = \{{\succ}\}$.
	\item The rule $f$ is \emph{continuous} (sometimes known as the \emph{overwhelming majority} axiom) if for any two profiles $P$ and $P'$ over the same candidate set, there exists an integer $k$ such that $f(P + kP') \subseteq f(P')$.
\end{itemize}
The following are axioms about combining profiles.
\begin{itemize}
	\item The rule $f$ satisfies \emph{reinforcement} if for all profiles $P$ and $P'$ over the same candidate set, we have $f(P + P') = f(P) \cap f(P')$ whenever the intersection is non-empty.
	\item The rule $f$ satisfies \emph{reinforcement at the top} if for all profiles $P$ and $P'$ over the same candidate set, we have $\cand(f(P + P'),1) = \cand(f(P),1) \cap \cand(f(P'),1)$ whenever the intersection is non-empty.
	\item The rule $f$ satisfies \emph{reinforcement at the bottom} if for all profiles $P$ and $P'$ over the same candidate set $C$, we have $\cand(f(P + P'),|C|) = \cand(f(P),|C|) \cap \cand(f(P'),|C|)$ whenever the intersection is non-empty.
\end{itemize}
The following are independence axioms, describing that the output should not change when deleting certain candidates.
\begin{itemize}
	\item The rule $f$ satisfies \emph{independence at the top} if for all profiles $P$ and for all candidates $a \in \cand(f(P), 1)$ that can appear in first position in $f(P)$, we have that for all rankings ${\succ'} \in \mathcal{L}(C \setminus \{a\})$, ${\succ'} \in f(P|_{C\setminus\{a\}})$ if and only if the ranking $\succ''$, obtained by placing $a$ at the top of ranking $\succ'$, is a member of $f(P)$.
	\item The rule $f$ satisfies \emph{independence at the bottom} if for all profiles $P$ and for all candidates $a \in \cand(f(P), |C|)$ that can appear in last position in $f(P)$, we have that for all rankings ${\succ'} \in \mathcal{L}(C \setminus \{a\})$,  ${\succ'} \in f(P|_{C\setminus\{a\}})$ if and only if the ranking $\succ''$, obtained by placing $a$ at the bottom of ranking $\succ'$, is a member of $f(P)$.
\end{itemize}
Next, we introduce axioms that require a rule to follow the view of a majority of voters.
\begin{itemize}
	\item The rule $f$ places \emph{Condorcet winners at top} if for all profiles $P = (\succ_1, \dots, \succ_n)$ where there exists a candidate $a \in C$ such that for all other candidates $b \in C \setminus \{a\}$, a majority of voters prefers $a$ to $b$ (i.e., $|\{ i \in [n] : a \succ_i b\}| > |\{ i \in [n] : b \succ_i a\}|$), we have $\cand(f(P), 1) = \{a\}$. Thus, in profiles where a Condorcet winner exists, all output rankings must place it in the first position.
	\item The rule $f$ \emph{copies a majority ranking} if for all profiles $P= (\succ_1, \dots, \succ_n)$ where there exists a ranking $\succ$ which makes up more than half the profile (i.e., $|\{i \in [n] : {\succ_i} = {\succ}\}| > n/2$), we have $f(P) = \{{\succ}\}$.
\end{itemize}
We will state Tideman's independence of clones property later.

We will now state some axiomatic characterization results, taken or adapted from the literature. The first is a characterization of Kemeny's rule.

\begin{theorem}[\citealp{Youn88a}]
	A social preference function $f$ satisfies anonymity, neutrality, unanimity, reinforcement, independence at the top, and independence at the bottom, if and only if it is Kemeny's rule.
\end{theorem}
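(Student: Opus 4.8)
The plan is to prove the two directions separately, with the bulk of the work in the ``hard'' direction (axioms imply Kemeny). For the ``easy'' direction, I would verify that Kemeny's rule satisfies each of the six axioms in turn. Anonymity is immediate since the objective $\sum_i \KT(\succ,\succ_i)$ does not depend on the order of the voters. Neutrality follows because $\KT(\rho(\succ),\rho(\succ_i)) = \KT(\succ,\succ_i)$ for every permutation $\rho$. Unanimity holds because $\KT(\succ,\succ)=0$ while $\KT(\succ',\succ)>0$ for $\succ'\neq\succ$, so the unique minimizer is $\succ$ itself. Reinforcement holds because the objective is additive across subprofiles: $\sum_{i}\KT(\succ,\succ_i) + \sum_j \KT(\succ,\succ'_j)$ is minimized exactly at rankings minimizing both sums simultaneously, when such rankings exist. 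For independence at the top (and symmetrically at the bottom), the key observation is that if $a$ is ranked first in $\succ$, then $\KT(\succ,\succ_i) = |\{b : b\succ_i a\}| + \KT(\succ|_{C\setminus\{a\}},\succ_i|_{C\setminus\{a\}})$, so the first term is a constant independent of how the remaining candidates are arranged; hence a Kemeny ranking with $a$ on top restricts to a Kemeny ranking of $P|_{C\setminus\{a\}}$ and conversely — but one must check that $a$ being ``placeable at the top'' in some Kemeny ranking is the right hypothesis, matching the quantifier in the axiom statement.

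For the hard direction, I would follow Young's strategy of reducing to the pairwise majority margins. Given a profile $P$, define for each ordered pair $(a,b)$ the margin $N_{ab} = |\{i : a\succ_i b\}| - |\{i : b\succ_i a\}|$. The first major step is to show, using anonymity, neutrality, and especially reinforcement (together with continuity if needed, or a cancellation argument), that $f(P)$ depends only on the margin matrix $(N_{ab})_{a,b}$ — i.e., two profiles with the same margins have the same output. This typically uses the fact that any ``even'' profile with zero margins can be built from pairs of opposite rankings, on which reinforcement plus neutrality force the output to be all of $\mathcal{L}(C)$, and then reinforcement lets one cancel such profiles. The second major step is to pin down the functional form: one shows that the score $\sum_i \KT(\succ,\succ_i)$ can be rewritten, up to an additive constant depending only on $P$, as $\sum_{a\succ b} N_{ba} = -\sum_{a \succ b} N_{ab}$, so minimizing swap distance is equivalent to maximizing $\sum_{a\succ b} N_{ab}$ over rankings $\succ$. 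This is the purely combinatorial identity $\KT(\succ,\succ_i) = \sum_{a\succ b} [\![ b \succ_i a ]\!]$ summed over $i$.

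The genuinely hard part is the inductive argument that uses independence at the top and independence at the bottom to force $f$ to be \emph{exactly} Kemeny's rule, rather than some other margin-based rule. Here I expect the argument to proceed by induction on the number $m$ of candidates. The base cases $m=1,2$ are trivial (for $m=2$, neutrality and reinforcement already force the rule to pick the candidate with the positive margin, i.e. the majority winner, breaking ties arbitrarily). For the inductive step, suppose $f$ satisfies the axioms on all candidate sets of size less than $m$; by induction it equals Kemeny there. Given a profile $P$ over $m$ candidates, independence at the top says the set of candidates that can appear first in $f(P)$ determines, via restriction, the rest of each output ranking — and each such restriction must be a Kemeny ranking of $P|_{C\setminus\{a\}}$ by the inductive hypothesis. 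Similarly independence at the bottom constrains the last candidate. The crux is to show that the candidates $f$ allows in the first position are \emph{precisely} those that appear first in some Kemeny ranking of $P$ — and this requires a careful argument (Young's) combining the two independence properties with reinforcement to rule out the possibility that $f$ systematically picks, say, a different margin-maximizing structure. I would set up a ``swap'' or exchange argument showing that if some ranking in $f(P)$ were not Kemeny-optimal, one could append a suitable small profile (using reinforcement) to expose a contradiction with the inductive characterization on $m-1$ candidates. I expect this exchange/consistency argument to be the main obstacle, and I would lean heavily on Young's original proof structure here rather than reinventing it.
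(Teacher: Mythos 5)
First, a point of reference: the paper does not prove this theorem at all --- it is imported verbatim from \citet{Youn88a} as a known result, so there is no in-paper proof to compare your argument against. Judged on its own terms, your easy direction is essentially complete and correct: the decomposition $\KT(\succ,\succ_i)=|\{b:b\succ_i a\}|+\KT(\succ|_{C\setminus\{a\}},\succ_i|_{C\setminus\{a\}})$ for $a$ ranked first is exactly the right identity, it correctly handles the quantifier ``$a$ can appear first in some Kemeny ranking'' in the axiom, and the additivity argument for reinforcement is standard and sound.

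The hard direction, however, has a genuine gap rather than merely an unexecuted routine step. You explicitly defer the crux --- showing that the candidates $f$ admits in first position are \emph{exactly} those topping some Kemeny ranking --- to ``Young's original proof structure,'' so the uniqueness argument is named but not supplied, and it is not clear the induction-on-$m$ scheme you sketch closes: knowing (by the inductive hypothesis plus independence at the top and bottom) that every output ranking restricts to a Kemeny ranking after deleting its first or last element does not by itself pin down which candidates may occupy those extreme positions, and the proposed ``append a small profile via reinforcement'' exchange is precisely where all the work lies. Two further soft spots: (i) the margin-reduction lemma is justified by the claim that zero-margin profiles decompose into pairs of opposite rankings, which is false in general (zero margins force an even number of voters but not such a decomposition); the correct argument works in the rational cone of weighted profiles \`a la \citet{YoLe78a} and is considerably more delicate; (ii) that argument typically leans on continuity (overwhelming majority), which is \emph{not} among the six hypotheses of this theorem, so you would need to either derive it from reinforcement or route around it. A cleaner path, closer to what Young actually does, is to reduce to the \citet{YoLe78a} characterization (neutrality, reinforcement, and the Condorcet/adjacency condition) by showing that local independence at the top and bottom, together with the other axioms, forces the Condorcet condition --- rather than re-running a fresh induction on the number of candidates.
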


Next, there is an existing characterization of Sequential-Loser rules.

\begin{theorem}[\citealp{FBC14a}, Lemma 1]
	\label{thm:seqLoCharacterization}
	A social preference function $f$ satisfies anonymity, neutrality, unanimity, continuity, reinforcement at the bottom, and independence at the bottom, if and only if there exists a scoring system $\vs = (\vs^{(m)})_{m\in \mathbb{N}}$ such that $f$ equals \seqLo{\vs}.
\end{theorem}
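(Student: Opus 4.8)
The plan is to prove the two implications separately; the forward one (every \seqLo{\vs} rule satisfies the six axioms) is routine, and the reverse one (the six axioms force $f$ to be a \seqLo{\vs} rule) is where the work lies. Write $m = |C|$. For the forward direction I would check the six axioms directly from the recursive definition of \seqLo{\vs}. Anonymity and neutrality are immediate, since $\vs$-scores depend on neither the ordering nor the names of the voters. Unanimity holds because, in a profile where all voters report ${\succ}$, the candidate in position $i$ has $\vs$-score $n\cdot\vs^{(m)}_i$, so the bottom candidate of ${\succ}$ is the unique $\vs$-loser, and one continues by induction on $m$ (this uses that each $\vs^{(m)}$ is non-degenerate, which I would build into the standing notion of a scoring system). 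For reinforcement at the bottom, the identity $\score_{\vs}(P+P',c)=\score_{\vs}(P,c)+\score_{\vs}(P',c)$ shows that the set of $\vs$-losers of $P+P'$ is the intersection of those of $P$ and of $P'$ whenever this is non-empty; after eliminating a common loser $c$ one restricts to $C\setminus\{c\}$, where $(P+P')|_{C\setminus\{c\}}=P|_{C\setminus\{c\}}+P'|_{C\setminus\{c\}}$, and induction on $m$ yields the required equality of last-position sets. Continuity is the same scaling argument: for $k$ large enough the $\vs$-losers of $P+kP'$ lie among those of $P'$, and one recurses on $(P+kP')|_{C'}=P|_{C'}+kP'|_{C'}$. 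Finally, independence at the bottom is literally the second clause of the definition of \seqLo{\vs}.

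For the reverse direction, assume $f$ satisfies the six axioms. For each fixed $m$, define the \emph{loser-selection function} $L_m$ on profiles over $m$ candidates by $L_m(P)=\cand(f(P),m)$, the set of candidates appearing in last position in some ranking of $f(P)$. I claim $L_m$ is an anonymous, neutral, reinforcing, continuous social choice function: anonymity and neutrality descend from those of $f$; reinforcement at the bottom of $f$ says exactly $L_m(P+P')=L_m(P)\cap L_m(P')$ whenever the intersection is non-empty; and continuity of $f$ gives, for each $P,P'$, an integer $k$ with $L_m(P+kP')\subseteq L_m(P')$. By the classical characterization of positional scoring rules (the anonymous, neutral, reinforcing, continuous social choice functions are exactly the scoring functions), there is a vector for which $L_m(P)$ is the maximum-score set; negating every entry of that vector gives a scoring vector $\vs^{(m)}$ for which $L_m(P)$ is precisely the set of $\vs$-losers of $P$. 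Let $\vs=(\vs^{(m)})_{m\in\mathbb{N}}$ be the resulting scoring system.

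It remains to show $f=\seqLo{\vs}$, which I would prove by induction on $m$. The base case $m=1$ is trivial. For $|C|=m$, take any profile $P$ over $C$. If ${\succ}\in f(P)$, its bottom candidate $c=\cand(\succ,m)$ lies in $L_m(P)$, hence is an $\vs$-loser of $P$; independence at the bottom then gives ${\succ}|_{C\setminus\{c\}}\in f(P|_{C\setminus\{c\}})$, which by the induction hypothesis equals $\seqLo{\vs}(P|_{C\setminus\{c\}})$, so both clauses of the definition of \seqLo{\vs} hold and ${\succ}\in\seqLo{\vs}(P)$. Conversely, if ${\succ}\in\seqLo{\vs}(P)$ then $c=\cand(\succ,m)$ is an $\vs$-loser of $P$, hence $c\in L_m(P)=\cand(f(P),m)$ --- so $c$ really does appear last in some ranking of $f(P)$, which is what licenses the use of independence at the bottom --- and since ${\succ}|_{C\setminus\{c\}}\in\seqLo{\vs}(P|_{C\setminus\{c\}})=f(P|_{C\setminus\{c\}})$ by definition and the induction hypothesis, independence at the bottom yields ${\succ}\in f(P)$.

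The main obstacle will be the reverse direction, and specifically the step of checking that the projected functions $L_m$ really do inherit reinforcement and continuity from $f$: one must be careful that these variable-electorate axioms for the full social preference function translate into clean statements about last-position candidates only. It is also worth being explicit that the scoring vectors produced for different values of $m$ never need to be related across $m$ (the induction only ever consults $\vs^{(m')}$ for the current number of candidates), so no ``coherence'' condition on the scoring system is needed or obtained. Ruling out degenerate scoring vectors via unanimity, and the bookkeeping of ties in the inductive equivalence, are the remaining points that need care; the rest is routine.
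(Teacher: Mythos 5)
The paper does not actually prove this statement: its ``proof'' consists of importing Lemma~1 of \citet{FBC14a} verbatim and observing only that their ``continuity at the bottom'' is implied by the continuity axiom used here. Your proposal, by contrast, reconstructs an argument, so there is no step-by-step match to check; what you sketch is essentially the architecture one would expect behind the cited lemma. The reverse direction is the substantive part and you set it up correctly: projecting $f$ to the last-position correspondence $L_m(P)=\cand(f(P),m)$, verifying that reinforcement at the bottom and continuity of $f$ translate exactly into Young-style consistency and continuity for $L_m$, invoking the classical characterization of scoring correspondences, negating the vector, and then recovering $f={}$\seqLoAb{$\vs$} by induction on $m$ via independence at the bottom. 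Your care in noting that independence at the bottom may only be applied to candidates in $\cand(f(P),m)$, and that no coherence across different $m$ is needed, is exactly right. The cost of your route is that it leans entirely on Young's characterization as a black box, but that is legitimate and is presumably also how \citet{FBC14a} proceed.

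There is, however, one step that does not go through as written, in the direction you dismiss as routine: unanimity for \seqLo{$\vs$}. With the paper's definitions, a scoring vector need not attain a strict minimum in its last coordinate, and your proposed fix (``non-degeneracy'') does not repair this. Take $\vs={}$Plurality and a unanimous profile of $n$ copies of $a\succ b\succ c$: the Plurality scores are $n,0,0$, so both $b$ and $c$ are $\vs$-losers, and \seqLo{Plurality} outputs both $a\succ b\succ c$ and $a\succ c\succ b$, violating unanimity in the strong form $f(P)=\{\succ\}$ even though Plurality is non-degenerate. So the ``if'' direction of the equivalence is not literally true for every scoring system under the stated axioms; either unanimity must be weakened (e.g., to a faithfulness-type condition requiring only ${\succ}\in f(P)$, or requiring the top candidate to be placed first) or the admissible scoring vectors must be restricted. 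This issue is invisible in the paper's citation-only proof, and your proposal inherits it rather than resolving it; everything else in your sketch is sound.
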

\begin{proof}
	This is exactly Lemma 1 from \citet{FBC14a}, except that instead of continuity they use a condition called ``continuity at the bottom'', but this condition is weaker than continuity as we have defined it, because $f(P + kP') \subseteq f(P')$ implies that $\cand(f(P + kP'), |C|) \subseteq \cand(f(P'), |C|)$.
\end{proof}

We will now ``turn around'' \Cref{thm:seqLoCharacterization} to obtain an axiomatic characterization of Sequential-Winner rules. To do so, we will use \Cref{lem:equiv}. For a social preference function $f$, let us write $f^*$ for the social preference function defined as follows:
\[
f^*(P) = \rev(f(\rev(P)))
\quad
\text{for every profile $P$.}
\]
From \Cref{lem:equiv}, we know that if $f$ is \seqWiAb{\vs} then $f^*$ is \seqLoAb{$\vs^*$}, and if $f$ is \seqLoAb{\vs} then $f^*$ is \seqWiAb{$\vs^*$}. One can also easily deduce the following equivalences.
\begin{lemma}
	\label{lem:axiom-equiv}
	Lef $f$ be a social preference function, and let $f^*$ be defined as above. Then the following equivalences hold:
	\begin{itemize}
		\item $f$ satisfies anonymity (resp., neutrality, unanimity, continuity, reinforcement, copying a majority ranking) if and only if $f^*$ satisfies the respective axiom.
		\item $f$ satisfies reinforcement at the top (resp., at the bottom) if and only if $f^*$ satisfies reinforcement at the bottom (resp., at the top).
		\item $f$ satisfies independence at the top (resp., at the bottom) if and only if $f^*$ satisfies independence at the bottom (resp., at the top).
	\end{itemize}
\end{lemma}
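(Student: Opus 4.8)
The plan is to verify each of the three bulleted equivalences directly from the definitions of the axioms, using the fact that the map $\succ \mapsto \rev(\succ)$ is an involutive bijection on $\mathcal{L}(C)$ that also commutes with restrictions and permutations: $\rev(\succ|_{C'}) = \rev(\succ)|_{C'}$ and $\rho(\rev(\succ)) = \rev(\rho(\succ))$ for any permutation $\rho$. The key unwinding is that $f^*(P) = \rev(f(\rev(P)))$ means ${\succ} \in f^*(P) \iff \rev({\succ}) \in f(\rev(P))$, and that $\cand(f^*(P), r) = \cand(f(\rev(P)), |C|-r+1)$, since reversing a ranking sends position $r$ to position $|C|-r+1$. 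Also note $\rev(\rev(P)) = P$, so the correspondence is symmetric and it suffices to prove each ``only if'' direction.

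First I would handle the ``symmetric'' axioms (anonymity, neutrality, unanimity, continuity, reinforcement, copying a majority ranking). For anonymity: a permutation $\sigma$ of voters acts the same way on $P$ and $\rev(P)$, so $f(\rev(\sigma P)) = f(\sigma \rev(P)) = f(\rev(P))$ gives $f^*(\sigma P) = f^*(P)$. For neutrality: using $\rho(\rev(P)) = \rev(\rho(P))$ and $\rho(\rev(S)) = \rev(\rho(S))$, we get $f^*(\rho(P)) = \rev(f(\rev(\rho(P)))) = \rev(f(\rho(\rev(P)))) = \rev(\rho(f(\rev(P)))) = \rho(\rev(f(\rev(P)))) = \rho(f^*(P))$. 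For unanimity: a constant profile $(\succ,\dots,\succ)$ reverses to the constant profile $(\rev(\succ),\dots,\rev(\succ))$, on which $f$ outputs $\{\rev(\succ)\}$, so $f^*$ outputs $\rev(\{\rev(\succ)\}) = \{\succ\}$. For continuity: $\rev(P + kP') = \rev(P) + k\,\rev(P')$, so applying continuity of $f$ to $\rev(P), \rev(P')$ gives a $k$ with $f(\rev(P) + k\rev(P')) \subseteq f(\rev(P'))$; reversing both sides yields $f^*(P + kP') \subseteq f^*(P')$. For reinforcement: similarly $f^*(P+P') = \rev(f(\rev(P) + \rev(P'))) = \rev(f(\rev(P)) \cap f(\rev(P')))$ when the intersection is nonempty, and $\rev$ distributes over intersection (being a bijection), giving $\rev(f(\rev(P))) \cap \rev(f(\rev(P'))) = f^*(P) \cap f^*(P')$; one must check the ``nonempty'' side condition transfers, which it does since $\rev$ is a bijection. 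Copying a majority ranking is like unanimity but with the majority condition, which is preserved under reversing each vote.

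Next I would handle the ``swapped'' axioms (reinforcement at the top/bottom, independence at the top/bottom), where the only new ingredient is the position-reversal identity $\cand(f^*(P), r) = \cand(f(\rev(P)), |C|-r+1)$. For reinforcement at the top of $f^*$: $\cand(f^*(P+P'), 1) = \cand(f(\rev(P)+\rev(P')), |C|)$, which by reinforcement at the bottom of $f$ equals $\cand(f(\rev(P)), |C|) \cap \cand(f(\rev(P')), |C|) = \cand(f^*(P), 1) \cap \cand(f^*(P'), 1)$, with the side condition again transferring by bijectivity; the bottom case is symmetric. For independence at the top of $f^*$: take $a \in \cand(f^*(P), 1) = \cand(f(\rev(P)), |C|)$, so $a$ can appear last in $f(\rev(P))$, and apply independence at the bottom of $f$; then ${\succ'} \in f^*(P|_{C\setminus\{a\}})$ iff $\rev({\succ'}) \in f(\rev(P)|_{C\setminus\{a\}}) = f(\rev(P|_{C\setminus\{a\}}))$ iff placing $a$ at the bottom of $\rev({\succ'})$ lies in $f(\rev(P))$; and placing $a$ at the bottom of $\rev({\succ'})$ is exactly $\rev(\text{``$\succ'$ with $a$ placed at the top''})$, so this holds iff the ranking obtained by placing $a$ at the top of $\succ'$ is in $f^*(P)$, as desired. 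The bottom case is symmetric.

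The proof is essentially all routine bookkeeping; the main thing to be careful about is the bookkeeping itself, i.e.\ consistently tracking how $\rev$ interacts with the operations appearing in each axiom (restriction, permutation, concatenation, the $\cand(\cdot,r)$ operator) and verifying that the nonemptiness side-conditions in the reinforcement axioms transfer correctly. Because $\rev \circ \rev = \mathrm{id}$, each equivalence only needs one direction, and the ``hardest'' case is independence at the top/bottom, where one must correctly identify that reversing ``$\succ'$ with $a$ placed at position $1$'' equals ``$\rev(\succ')$ with $a$ placed at position $|C|$'', which follows immediately from the position-reversal identity.
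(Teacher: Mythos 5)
Your proof is correct, and it is exactly the routine verification the paper has in mind: the paper states this lemma without proof (``one can easily deduce''), and your unwinding of $f^*(P)=\rev(f(\rev(P)))$ via the involution $\rev$, the identity $\cand(f^*(P),r)=\cand(f(\rev(P)),|C|-r+1)$, and the commutation of $\rev$ with restriction, permutation, and concatenation is precisely the intended argument. No gaps; the side conditions and the top/bottom swaps are all handled correctly.
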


\begin{theorem}
	A social preference function $f$ satisfies anonymity, neutrality, unanimity, continuity, reinforcement at the top, and independence at the top, if and only if there exists a scoring system $\vs = (\vs^{(m)})_{m\in \mathbb{N}}$ such that $f$ equals \seqWi{\vs}.
\end{theorem}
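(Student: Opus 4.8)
The plan is to obtain this characterization as the ``mirror image'' of $\Cref{thm:seqLoCharacterization}$ under the reversal operation $f \mapsto f^*$, using the axiom translations collected in $\Cref{lem:axiom-equiv}$ together with the identity (recorded after the definition of $f^*$, as a consequence of $\Cref{lem:equiv}$) that $f$ equals \seqWi{\vs} if and only if $f^*$ equals \seqLo{$\vs^*$}. A preliminary observation I would make is that $f \mapsto f^*$ is an involution: since $\rev$ is an involution on rankings, hence on profiles and on sets of rankings, we have $f^{**}(P) = \rev(f^*(\rev(P))) = \rev(\rev(f(\rev(\rev(P))))) = f(P)$ for every profile $P$. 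Moreover, $f^*$ is again defined on the full domain of profiles over all finite candidate sets, since $\rev$ preserves both the candidate set and the number of voters.

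For the ``only if'' direction, I would assume $f$ satisfies anonymity, neutrality, unanimity, continuity, reinforcement at the top, and independence at the top. By $\Cref{lem:axiom-equiv}$, the reversed rule $f^*$ then satisfies anonymity, neutrality, unanimity, continuity, reinforcement at the \emph{bottom}, and independence at the \emph{bottom}. Hence $\Cref{thm:seqLoCharacterization}$ applies to $f^*$ and yields a scoring system $\vs$ with $f^* = \seqLo{\vs}$. Applying the reversal once more and using $f^{**} = f$, we obtain $f = f^{**} = (\seqLo{\vs})^*$, which by the recalled identity is the rule \seqWi{$\vs^*$}. Since $\vs^*$ is itself a scoring system, this is exactly the desired conclusion.

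For the ``if'' direction, suppose $f = \seqWi{\vs}$ for some scoring system $\vs$. Then, by the recalled identity, $f^*$ equals \seqLo{$\vs^*$}, so $\Cref{thm:seqLoCharacterization}$ tells us that $f^*$ satisfies anonymity, neutrality, unanimity, continuity, reinforcement at the bottom, and independence at the bottom. Applying $\Cref{lem:axiom-equiv}$ with $f^*$ in the role of the generic rule, and using $(f^*)^* = f$ from the involution property, we conclude that $f$ satisfies anonymity, neutrality, unanimity, continuity, reinforcement at the top, and independence at the top, as required. I do not expect a genuine obstacle here: all mathematical content is inherited from $\Cref{thm:seqLoCharacterization}$, and the only point requiring care is the bookkeeping of which ``top'' axioms are exchanged with which ``bottom'' axioms under reversal, together with checking that $\Cref{lem:axiom-equiv}$ indeed covers all six relevant axioms in both their ``top'' and ``bottom'' forms (which it does).
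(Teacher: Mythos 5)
Your proposal is correct and follows essentially the same route as the paper: apply \Cref{lem:axiom-equiv} to pass to $f^*$, invoke \Cref{thm:seqLoCharacterization}, and translate back via \Cref{lem:equiv}. The only (harmless) difference is in the ``if'' direction, where the paper simply notes it is routine to verify the axioms for \seqWiAb{\vs} directly, while you derive them from the ``if'' part of \Cref{thm:seqLoCharacterization} together with the involution $f^{**}=f$ --- both are fine.
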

\begin{proof}
	It is routine to check that \seqWiAb{\vs} satisfies the mentioned axioms for every scoring system $\vs$. Now let $f$ be a social preference function satisfying these axioms. By \Cref{lem:axiom-equiv}, the social preference function $f^*$ then satisfies anonymity, neutrality, unanimity, continuity, reinforcement at the bottom, and independence at the bottom. By \Cref{thm:seqLoCharacterization}, there exists a scoring system $\vs$ such that $f^*$ equals \seqLoAb{\vs}. By \Cref{lem:equiv}, $f$ equals \seqWiAb{$\vs^*$}, as desired.
\end{proof}

\citet{FBC14a} also provided characterizations of specific rules. For example, they characterize \seqLoAb{Borda} (aka Baldwin) as the only Sequential-Loser rule that places Condorcet winners at the top.

\begin{theorem}[\citealp{FBC14a}, Theorem 3]
	A social preference function $f$ satisfies anonymity, neutrality, unanimity, continuity, reinforcement at the bottom, independence at the bottom, and places Condorcet winners at the top if and only if $f$ equals \seqLo{Borda}.
\end{theorem}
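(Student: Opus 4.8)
The plan is to derive this characterization from the already-established characterization of \seqLoAb{Borda} (aka Baldwin) in the previous theorem, via the duality map $f \mapsto f^*$ defined by $f^*(P) = \rev(f(\rev(P)))$, exactly as was done to turn around \Cref{thm:seqLoCharacterization}. Wait — but the statement I am asked to prove is \emph{that} characterization of Baldwin itself (the last displayed theorem, citing \citealp{FBC14a}, Theorem~3). So the relevant task is to prove: $f$ satisfies anonymity, neutrality, unanimity, continuity, reinforcement at the bottom, independence at the bottom, and places Condorcet winners at the top if and only if $f$ equals \seqLoAb{Borda}.

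First I would dispatch the easy direction: one checks directly that \seqLoAb{Borda} satisfies all seven axioms. The first six follow from \Cref{thm:seqLoCharacterization} (applied with $\vs = \text{Borda}$), so only ``places Condorcet winners at the top'' needs a separate argument. For this, I would show that if $a$ is a Condorcet winner in $P$, then $a$ is never the Borda loser at any stage of the elimination process, i.e., $a$ is never eliminated, hence ends up ranked first. The key fact is that restricting $P$ to any subset $C' \ni a$ preserves $a$ being a Condorcet winner of $P|_{C'}$, and a Condorcet winner of a profile always has strictly more than the average Borda score, so it cannot be the unique minimizer; one must be slightly careful about ties, but since the definition of \seqLoAb{Borda} only eliminates an $\vs$-loser, and a Condorcet winner can be shown to never \emph{be} a Borda loser (strictly beating the average rules this out when $|C'| \ge 2$), $a$ survives to the end.

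For the hard direction, I would assume $f$ satisfies all seven axioms. The first six axioms are exactly those in \Cref{thm:seqLoCharacterization}, so there exists a scoring system $\vs$ with $f = \seqLoAb{\vs}$. It then remains to show that the additional requirement ``places Condorcet winners at the top'' forces $\vs$ to be Borda (up to the equivalences that leave \seqLoAb{\vs} unchanged — affine transformations of each $\vs^{(m)}$). This is the main obstacle and the real content of the theorem. The strategy is to construct, for each $m$, suitable profiles with a Condorcet winner in which the elimination order of \seqLoAb{\vs} is forced and pins down the ratios among the entries $s_1^{(m)}, \dots, s_m^{(m)}$. Concretely, for a scoring vector that is not (affinely) Borda on $m$ candidates, there exist two ``adjacent'' positions whose score gap differs from the gap between some other adjacent pair; one then builds a three- or four-candidate gadget profile (padded out to $m$ candidates) in which a designated candidate $a$ is a Condorcet winner yet, because of the anomalous score gap, $a$ becomes the $\vs$-loser in the first round and is eliminated — contradicting that $f$ places Condorcet winners at the top. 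Since the previous theorem is cited to \citet{FBC14a}, I would simply invoke their Theorem~3 and its proof for this step rather than reconstruct the gadgets in detail; the write-up can note that the argument reduces, via \Cref{thm:seqLoCharacterization}, to checking that among all scoring systems only Borda guarantees that no Condorcet winner is ever the loser in any sub-profile, which is precisely the remaining lemma proved by \citet{FBC14a}.

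I expect the delicate points to be (i) handling ties carefully in the ``easy'' direction — ensuring ``$a$ is never a $\vs$-loser'' rather than merely ``$a$ is never the unique $\vs$-loser'', which is what is actually needed so that no parallel-universe execution eliminates $a$; and (ii) in the hard direction, verifying that the family of gadget profiles is rich enough to rule out \emph{every} non-Borda scoring system for \emph{every} $m$ simultaneously, including degenerate small cases $m \le 3$ where Borda may coincide with other vectors. Both of these are addressed in \citet{FBC14a}, so the cleanest exposition is to present the easy direction in full and cite their work for the hard direction, exactly parallel to how \Cref{thm:seqLoCharacterization} was handled above.
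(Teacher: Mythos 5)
The paper gives no proof of this statement at all---it is quoted verbatim as Theorem~3 of \citet{FBC14a}---so your plan of verifying the easy direction directly (via the correct observation that a Condorcet winner has strictly above-average Borda score in every subprofile containing it, hence is never an $\vs$-loser) and deferring the hard direction to \citet{FBC14a} is consistent with, and slightly more informative than, the paper's treatment. No gaps.
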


If desired, one could similarly characterize \seqWiAb{Borda} as the only Sequential-Winner rule that places Condorcet losers at the bottom, using \Cref{lem:axiom-equiv}.

\citet{FBC14a} also characterize \seqLoAb{Veto} (aka Coombs) as the only Sequential-Loser rule that copies a majority ranking.

\begin{theorem}[\citealp{FBC14a}, Theorem 2]
	A social preference function $f$ satisfies anonymity, neutrality, unanimity, continuity, reinforcement at the bottom, independence at the bottom, and copies majority rankings if and only if $f$ equals \seqLo{Veto}.
\end{theorem}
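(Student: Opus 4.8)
The plan is to prove both directions, leaning on \Cref{thm:seqLoCharacterization} for most of the work. For the ``if'' direction, suppose $f = \seqLo{Veto}$. Then $f$ satisfies anonymity, neutrality, unanimity, continuity, reinforcement at the bottom, and independence at the bottom by \Cref{thm:seqLoCharacterization} (applied with $\vs$ equal to the Veto scoring system), so the only remaining point is that $\seqLo{Veto}$ copies majority rankings. I would prove this by induction on $|C|$: if a strict majority of voters report a ranking $\succ^*$, then $x := \cand(\succ^*, |C|)$ is ranked last by this strict majority, hence by strictly more voters than any other candidate (each voter ranks exactly one candidate last); thus $x$ is the unique Veto-loser, and after eliminating $x$ the restricted profile still has $\succ^*|_{C \setminus \{x\}}$ as a strict majority ranking, so the induction hypothesis finishes the step (the base case $|C|=1$ being trivial).

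For the ``only if'' direction, suppose $f$ satisfies all seven axioms. By \Cref{thm:seqLoCharacterization}, $f = \seqLo{\vs}$ for some scoring system $\vs = (\vs^{(m)})_{m \in \mathbb{N}}$. Since a Sequential-Loser rule is unchanged when each vector $\vs^{(m)}$ is replaced by a positive affine transformation of itself (taking the argmin of $\score_{\vs^{(m)}}$ is invariant under such maps, at every level of the recursion), it suffices to show that for each $m \ge 2$ the vector $\vs^{(m)} = (s_1, \dots, s_m)$ has the form $s_1 = \dots = s_{m-1} > s_m$, i.e.\ is a positive affine image of Veto's vector $(0, \dots, 0, -1)$; then $f = \seqLo{\vs} = \seqLo{Veto}$. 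That $s_m < s_\ell$ for all $\ell < m$ follows already from unanimity of $f$: on a unanimous profile over $m$ candidates reporting some $\succ$, uniqueness of the output $\{\succ\}$ forces $\cand(\succ, m)$ to be the \emph{unique} $\vs^{(m)}$-loser, hence $s_m$ is strictly below every other entry. So everything comes down to showing $s_1 = \dots = s_{m-1}$.

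This last step is the heart of the matter. Suppose, for contradiction, that $s_1, \dots, s_{m-1}$ are not all equal. I would construct a profile $P$ over $m$ candidates and a ranking $\succ^*$ such that a strict majority of the voters in $P$ report $\succ^*$, yet $x := \cand(\succ^*, m)$ is not a $\vs^{(m)}$-loser in $P$; then $\succ^* \notin \seqLo{\vs}(P) = f(P)$, so $f(P) \neq \{\succ^*\}$, contradicting that $f$ copies majority rankings. The profile is $k+1$ copies of $\succ^*$ together with $k$ copies of a perturbed ranking $\sigma$, for a suitably large $k$, where $\sigma$ is obtained from $\succ^*$ by moving $x$ up to a position of maximum score and moving some candidate $y$ --- placed by $\succ^*$ at a lowest-scoring position among the non-bottom positions --- down to a position of minimum score. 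A short computation with $\vs^{(m)}$-scores shows that in $P$ the candidate $y$ has strictly smaller score than $x$ (this is exactly where the gap between the largest and smallest of $s_1, \dots, s_{m-1}$ enters), so $x$ is not a loser. A small case distinction handles whether $\vs^{(m)}$ attains its minimum only at the bottom position (the ``nearly Veto-shaped'' case, e.g.\ Borda) or also elsewhere, and the degenerate case of a constant $\vs^{(m)}$ is immediate since then $\seqLo{\vs}$ returns every ranking. The main obstacle is making the perturbation $\sigma$ and the accompanying score inequalities go through uniformly across these sub-cases; the rest is routine bookkeeping on top of \Cref{thm:seqLoCharacterization}.
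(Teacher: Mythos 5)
Your argument is correct, but note that the paper does not actually prove this statement: it imports it verbatim as Theorem~2 of \citet{FBC14a}, so there is no in-paper proof to compare against. What you have written is a self-contained reconstruction on top of \Cref{thm:seqLoCharacterization} (which the paper likewise only cites), and it goes through. The ``if'' direction is fine: a candidate ranked last by a strict majority receives strictly more vetoes than any other candidate (each voter vetoes exactly one candidate), so it is the unique Veto-loser, and the induction on $|C|$ closes via the recursive definition of \seqLoAb{Veto}. For the ``only if'' direction, your construction does work: with $j$ a minimum-score position among $1,\dots,m-1$, set $a = s_j - s_m$ and $b = s_{\max} - s_m$; unanimity forces $s_m$ to be the \emph{strict} minimum of $\vs^{(m)}$ (so $a>0$, and the maximum is attained at some position below $m$), and the assumption that $s_1,\dots,s_{m-1}$ are not all equal gives $b > a$. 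Then for any $k > a/(b-a)$, candidate $y$ has score $(k+1)s_j + k s_m < (k+1)s_m + k s_{\max}$, which is the score of $x$, so $x$ is not an $\vs$-loser and the majority ranking $\succ^*$ is not selected --- the desired contradiction. Two small remarks: the case distinction you flag at the end is unnecessary, since unanimity already rules out both a constant $\vs^{(m)}$ and a minimum attained anywhere other than position $m$; and the affine-invariance step is legitimate because replacing $\vs^{(m)}$ by $\lambda\vs^{(m)}+\mu$ with $\lambda>0$ leaves the set of $\vs$-losers unchanged at every level of the recursion.
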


Using \Cref{lem:equiv,lem:axiom-equiv}, we can deduce a characterization of \seqWiAb{Plurality} as the only Sequential-Winner rule that copies majority rankings.

\begin{theorem}
	A social preference function $f$ satisfies anonymity, neutrality, unanimity, continuity, reinforcement at the top, independence at the top, and copies majority rankings if and only if $f$ equals \seqWi{Plurality}.
\end{theorem}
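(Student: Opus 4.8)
The plan is to reduce this characterization of \seqWiAb{Plurality} to the already-established characterization of \seqLoAb{Veto} (aka Coombs) from \citet{FBC14a}, via the duality machinery of \Cref{lem:equiv} and \Cref{lem:axiom-equiv}, in exactly the same pattern used to prove the preceding characterization of \seqWiAb{\vs} rules. First I would check the ``easy'' direction: that \seqWiAb{Plurality} satisfies all seven listed axioms. Satisfying anonymity, neutrality, unanimity, continuity, reinforcement at the top, and independence at the top is routine (indeed it follows from the general \seqWiAb{\vs} characterization theorem proved just above, since Plurality is a scoring system). The only genuinely new thing to verify is that \seqWiAb{Plurality} copies majority rankings; but this is precisely the statement already recalled in \Cref{sec:axioms} (``we can similarly characterize \seqWiAb{Plurality} as the only method in its class that copies a majority ranking''), and one can see it directly: if more than half the voters report $\succ$, then the top candidate of $\succ$ is the unique Plurality winner, so it is placed first, and after deleting it the same reasoning applies recursively to $P|_{C\setminus\{\cand(\succ,1)\}}$.

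For the ``hard'' direction, suppose $f$ satisfies the seven axioms. Apply \Cref{lem:axiom-equiv} to the dual $f^*(P) = \rev(f(\rev(P)))$: anonymity, neutrality, unanimity, continuity, and ``copies a majority ranking'' are self-dual, while ``reinforcement at the top'' for $f$ becomes ``reinforcement at the bottom'' for $f^*$ and ``independence at the top'' becomes ``independence at the bottom''. Hence $f^*$ satisfies anonymity, neutrality, unanimity, continuity, reinforcement at the bottom, independence at the bottom, and copies majority rankings. By the \citet{FBC14a} characterization (Theorem 2 as quoted, the Coombs characterization), $f^*$ equals \seqLoAb{Veto}. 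Finally, apply \Cref{lem:equiv}: since $f^*$ is \seqLoAb{Veto} $=$ \seqLoAb{$\text{Plurality}^*$}, we get that $f = (f^*)^*$ is \seqWiAb{Plurality} (using $\text{Plurality}^* = \text{Veto}$, so $\text{Veto}^* = \text{Plurality}$, and $(f^*)^* = f$).

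The only subtle point — and the one I would be careful about — is making sure the ``copies a majority ranking'' axiom really is self-dual under the $(\cdot)^*$ operation, i.e.\ that $f$ copies majority rankings iff $f^*$ does. This is because if $\succ$ makes up a strict majority of $P$ then $\rev(\succ)$ makes up a strict majority of $\rev(P)$, and $f(\rev(P)) = \{\rev(\succ)\}$ iff $f^*(P) = \rev(\rev(\{\rev(\succ)\}))$\dots wait, more carefully: $f^*(P) = \rev(f(\rev(P)))$, so $f^*(P) = \{\succ\}$ iff $f(\rev(P)) = \{\rev(\succ)\}$, which by the majority-copying property of $f$ (applied to $\rev(P)$, where $\rev(\succ)$ is the majority ranking) holds. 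So the equivalence is immediate, and indeed this is already asserted in the first bullet of \Cref{lem:axiom-equiv}; I would simply invoke it. There is essentially no real obstacle here — the proof is a three-line application of two lemmas and one cited theorem, mirroring the proof of the general \seqWiAb{\vs} characterization given immediately above it.
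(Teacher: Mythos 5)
Your proposal is correct and follows exactly the route the paper intends: the paper itself gives no separate proof for this theorem, merely noting that it follows ``using \Cref{lem:equiv,lem:axiom-equiv}'' from the \citet{FBC14a} characterization of \seqLo{Veto}, which is precisely your dualization argument (and mirrors the paper's explicit proof of the general \seqWiAb{\vs} characterization). Your extra care in verifying that ``copies a majority ranking'' is self-dual and that \seqWiAb{Plurality} satisfies it directly is a welcome addition, not a deviation.
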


Turning towards \citeauthor{Tide87a}'s \citeyearpar{Tide87a} independence of clones property, \citet{FBC14a} adapt the axiom to the ranking context. Given a profile $P$, we say that $C' \subseteq C$ is a \emph{clone set} if in every ranking in $P$, the candidates in $C'$ appear consecutively (i.e., for each ${\succ} \in P$  and each $a \in C \setminus C'$ we either have $a \succ c'$ for all $c' \in C'$ or $c' \succ a$ for all $c' \in C'$.
\begin{itemize}
	\item The rule $f$ satisfies \emph{independence of clones (with top replacement)} if the following property is satisfied. Let $P$ be a profile and let $C'$ be a clone set. Let $P'$ be a profile obtained from $P$ by replacing the set $C'$ by a single candidate $a \not \in C$. Then for all rankings ${\succ} \in \mathcal{L}(C)$, we have ${\succ} \in f(P)$ if and only if ${\succ'} \in f(P')$ where $\succ'$ is the ranking obtained from $\succ$ by deleting the candidates in $C'$ and putting $a$ at the position of the highest-ranked member of $C'$ in $\succ$.
\end{itemize}
\citet{FBC14a} prove that within the family of Sequential-Loser rules, \seqLoAb{Plurality} (aka STV) is characterized by this axiom.%
\footnote{In their statement of the characterization, they appear to have forgotten to include continuity.}
\begin{theorem}[\citealp{FBC14a}, Theorem 1]
	A social preference function $f$ satisfies anonymity, neutrality, unanimity, continuity, reinforcement at the bottom, independence at the bottom, and independence of clones (with top replacement) if and only if $f$ equals \seqLo{Plurality}.
\end{theorem}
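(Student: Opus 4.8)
This is Theorem~1 of \citet{FBC14a}, so the most direct plan is to invoke it as stated, keeping in mind the continuity caveat already noted in the footnote. If one instead wanted a self-contained argument, I would build on \Cref{thm:seqLoCharacterization}, which already does most of the work. For the ``only if'' direction: since \seqLo{Plurality} is a \seqLo{\vs} rule, the forward direction of \Cref{thm:seqLoCharacterization} immediately delivers anonymity, neutrality, unanimity, continuity, reinforcement at the bottom, and independence at the bottom, so the only axiom needing a fresh argument is independence of clones (with top replacement).

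The key structural fact here is that Plurality scores are \emph{merge-compatible}: if $C'$ is a clone set of $P$ and $P'$ is obtained by collapsing $C'$ into a single candidate $a$, then along any run of \seqLo{Plurality}, in every intermediate state in which at least one member of $C'$ still survives, each non-clone has exactly the same Plurality score in the surviving subprofile of $P$ as in the corresponding surviving subprofile of $P'$, while $a$'s Plurality score in $P'$ equals the sum of the Plurality scores of the surviving clones in $P$ (eliminating a clone only moves its top votes to another clone, as long as clones remain). I would use this to match runs of \seqLo{Plurality} on $P$ with runs on $P'$: a non-clone elimination is mirrored verbatim, an elimination of a non-last surviving clone corresponds to ``doing nothing'' on the $P'$ side, and the elimination of the last surviving clone corresponds to eliminating $a$; one then checks that a candidate selected in one run is indeed a loser in the matched run, so that the selected rankings correspond as the axiom demands.

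For the ``if'' direction: given $f$ satisfying all the listed axioms except clone independence, \Cref{thm:seqLoCharacterization} yields a scoring system $\vs$ with $f = \seqLo{\vs}$, and it remains to show that independence of clones forces every scoring vector $\vs^{(m)}$ to be an affine transformation of $(1,0,\dots,0)$; this suffices because \seqLo{\vs} is invariant under positively scaling, or adding a constant to, each scoring vector, so in that case $f = \seqLo{Plurality}$. To pin down $\vs^{(m)}$ I would feed $\seqLo{\vs}$ small, hand-crafted profiles over $m$ candidates containing a size-two clone set $\{c',c''\}$ and compare the first eliminated candidate on the profile with that on its collapse; varying the placement of the two clones relative to the remaining candidates turns the requirement ``the selected rankings of the original profile collapse onto those of the collapsed profile'' into the equalities $\vs^{(m)}_2 = \vs^{(m)}_3 = \dots = \vs^{(m)}_m$, which is precisely affine-Plurality.

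The genuine content -- and the place where the published proof does the real work -- is this last backward step: extracting from clone independence alone the conclusion that the scoring vector must be flat below the top entry. The forward direction is largely bookkeeping once merge-compatibility is isolated, although some care is needed about internal orderings within the clone set and about the fact that a loser of the collapsed profile need not be a loser of the original, so the argument is cleanest when phrased directly in terms of selected rankings rather than individual runs.
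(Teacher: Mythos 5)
Your primary plan---invoking Theorem~1 of \citet{FBC14a} directly, with the continuity caveat from the footnote---is exactly what the paper does: this statement is presented as a cited result and no proof is given in the paper. Your supplementary sketch of a self-contained argument (merge-compatibility of Plurality scores for the forward direction, and forcing a flat-below-the-top scoring vector via two-element clone sets for the converse) is a reasonable outline of how the original proof goes, but it is not needed to match the paper.
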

By invoking \Cref{lem:equiv,lem:axiom-equiv}, we can again obtain a related characterization of \seqWiAb{Veto}, but using a slightly different version of the clones axioms. In particular, let us define the axiom \emph{independence of clones (with bottom replacement)} exactly as before, except that the definition should end in saying ``putting $a$ at the position of the \emph{lowest}-ranked member of $C'$ in $\succ$.'' 
\begin{theorem}
	A social preference function $f$ satisfies anonymity, neutrality, unanimity, continuity, reinforcement at the top, independence at the top, and independence of clones (with bottom replacement) if and only if $f$ equals \seqWi{Veto}.
\end{theorem}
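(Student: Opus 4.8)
The plan is to mirror the proof of the characterization of \seqWiAb{Plurality} given just above, reducing the statement to \citet{FBC14a}'s characterization of \seqLoAb{Plurality} (aka STV) by means of the reversal operator $f \mapsto f^*$ with $f^*(P) = \rev(f(\rev(P)))$. Recall from the discussion preceding \Cref{lem:axiom-equiv} that if $f$ equals \seqWiAb{\vs} then $f^*$ equals \seqLoAb{$\vs^*$}, and that $\text{Veto}^* = \text{Plurality}$, so that $(\seqWiAb{Veto})^* = \seqLoAb{Plurality}$.

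The one new ingredient needed is a clone-axiom analogue of \Cref{lem:axiom-equiv}: a social preference function $f$ satisfies independence of clones (with bottom replacement) if and only if $f^*$ satisfies independence of clones (with top replacement). To establish this, I would chase the definitions through the bijection ${\succ} \mapsto \rev(\succ)$, using three observations. First, a set $C' \subseteq C$ is a clone set in a profile $P$ if and only if it is a clone set in $\rev(P)$, since ``appearing consecutively in every ranking'' is preserved by reversing each ranking. Second, if $P'$ is obtained from $P$ by collapsing $C'$ to a single candidate $a \notin C$, then $\rev(P')$ is obtained from $\rev(P)$ by the same collapse. Third, the highest-ranked member of $C'$ in a ranking $\succ$ is precisely the lowest-ranked member of $C'$ in $\rev(\succ)$; hence the ranking obtained from $\rev(\succ)$ by deleting $C'$ and inserting $a$ at the position of the lowest-ranked member of $C'$ is exactly the reverse of the ranking obtained from $\succ$ by deleting $C'$ and inserting $a$ at the position of the highest-ranked member of $C'$. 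Feeding these identities into the ``for all rankings $\succ \in \mathcal{L}(C)$'' condition and using $f^*(P) = \rev(f(\rev(P)))$ yields the claimed equivalence.

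Given this lemma, the theorem follows in the same pattern as the earlier results. For the ``if'' direction, I would note that $(\seqWiAb{Veto})^* = \seqLoAb{Plurality}$ satisfies anonymity, neutrality, unanimity, continuity, reinforcement at the bottom, independence at the bottom, and independence of clones (with top replacement) by \citet{FBC14a}'s theorem, so \Cref{lem:axiom-equiv} together with the clone-axiom lemma above transfers all seven properties to \seqWiAb{Veto}. For the ``only if'' direction, suppose $f$ satisfies the seven listed axioms. Then $f^*$ satisfies anonymity, neutrality, unanimity, continuity, reinforcement at the bottom, and independence at the bottom by \Cref{lem:axiom-equiv}, and independence of clones (with top replacement) by the new lemma. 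By \citet{FBC14a}'s characterization of \seqLoAb{Plurality}, $f^*$ equals \seqLoAb{Plurality}, and therefore $f = (f^*)^*$ equals \seqWiAb{$\text{Plurality}^*$} $=$ \seqWiAb{Veto} by \Cref{lem:equiv}.

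The main obstacle I anticipate is purely the bookkeeping in the clone-axiom reversal lemma: one has to be careful that reversal swaps ``highest-ranked member of $C'$'' with ``lowest-ranked member of $C'$'' (which is exactly why the variant of the axiom changes from top replacement to bottom replacement), and that the biconditional ``${\succ} \in f(P) \iff {\succ'} \in f(P')$'' transports correctly and completely under the involution ${\succ} \mapsto \rev(\succ)$ on both $\mathcal{L}(C)$ and $\mathcal{L}(C \setminus C' \cup \{a\})$. Everything else is a routine application of the already-stated lemmas.
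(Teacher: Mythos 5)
Your proposal is correct and follows exactly the route the paper intends: reduce to \citet{FBC14a}'s characterization of \seqLo{Plurality} via the reversal operator $f^*$, supplying the needed observation that reversal exchanges independence of clones with top replacement and with bottom replacement. The paper leaves this argument implicit (it only says the result follows ``by invoking'' \Cref{lem:equiv} and \Cref{lem:axiom-equiv}), and your detailed check of the clone-axiom transport under ${\succ}\mapsto\rev(\succ)$ fills in precisely the missing bookkeeping.
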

In the main body (and in particular \Cref{tbl:axioms}), we have taken the ``official'' version of independence of clones to be the one with top replacement.

\section{Additional Material for \Cref{sec:simulations}}\label{app:simulations}

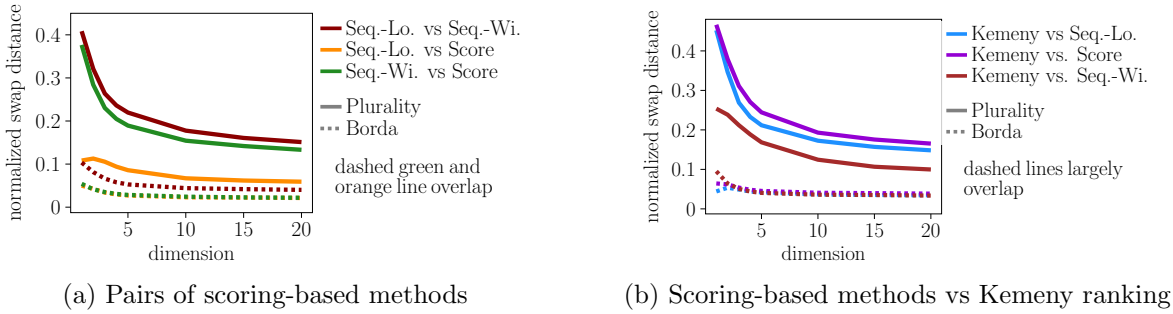
\begin{figure*}[t!]  
	\centering                  
	\begin{subfigure}[t]{0.485\textwidth} 
		\centering
		\resizebox{0.9\textwidth}{!}{
\begin{tikzpicture}[every plot/.append style={line width=3.5pt}]

\definecolor{color0}{rgb}{1,0.549019607843137,0}
\definecolor{color1}{rgb}{0.133333333333333,0.545098039215686,0.133333333333333}

\node[text width=5cm] at (10,1) 
{\LARGE dashed green and orange line overlap};

\begin{axis}[
ylabel shift = 1pt,
legend columns=1, 
legend cell align={left},
legend style={
  fill opacity=0.8,
  draw opacity=1,
  draw=none,
  text opacity=1,
  at={(1.45,1)},
  line width=3pt,
  anchor=north,
   /tikz/column 2/.style={
  	column sep=10pt,
  },
  font=\LARGE
},
every tick label/.append style={font=\LARGE}, 
label style={font=\LARGE},
tick align=outside,
tick pos=left,
x grid style={white!69.0196078431373!black},
xlabel={dimension},
xmin=0.0499999999999999, xmax=20.95,
xtick style={color=black},
y grid style={white!69.0196078431373!black},
ylabel={normalized swap distance},
ymin=-0.0127055555555556, ymax=0.45,
ytick style={color=black},
ytick={0,0.1,0.2,0.3,0.4},
yticklabels={0,0.1,0.2,0.3,0.4}
]
\addlegendentry{Seq.-Lo. vs Seq.-Wi.}
\addlegendentry{Seq.-Lo. vs Score}
\addlegendentry{Seq.-Wi. vs Score}
\addlegendentry{}
\addlegendentry{}
\addlegendentry{Plurality}
\addlegendentry{Borda}
\addlegendimage{red!54.5098039215686!black}
\addlegendimage{color0}
\addlegendimage{color1}
\addlegendimage{empty legend}
\addlegendimage{empty legend}
\addlegendimage{gray}
\addlegendimage{gray,dashed}
\addplot [semithick, red!54.5098039215686!black]
table {%
1 0.40844
2 0.320988888888889
3 0.263737777777778
4 0.236157777777778
5 0.219622222222222
10 0.177633333333333
15 0.160671111111111
20 0.151088888888889
};
\addplot [semithick, color0]
table {%
1 0.108215555555556
2 0.112782222222222
3 0.105455555555556
4 0.0940377777777778
5 0.0858355555555556
10 0.0669977777777778
15 0.0618311111111111
20 0.0591222222222222
};
\addplot [semithick, color1]
table {%
1 0.376575555555556
2 0.284184444444444
3 0.2303
4 0.204831111111111
5 0.18936
10 0.154293333333333
15 0.141888888888889
20 0.133171111111111
};

\addplot [semithick, red!54.5098039215686!black, dashed]
table {%
1 0.103524444444444
2 0.08154
3 0.0662622222222222
4 0.0577088888888889
5 0.05278
10 0.0441488888888889
15 0.0418622222222222
20 0.0401444444444444
};
\addplot [semithick, color0, dashed]
table {%
1 0.0499711111111111
2 0.0426711111111111
3 0.0337977777777778
4 0.0295911111111111
5 0.0271288888888889
10 0.0229088888888889
15 0.0222911111111111
20 0.02134
};
\addplot [semithick, color1,dashed]
table {%
1 0.0541622222222222
2 0.0414733333333333
3 0.0350422222222222
4 0.0307222222222222
5 0.0284377777777778
10 0.0243066666666667
15 0.0226911111111111
20 0.0218355555555556
};
\end{axis}

\end{tikzpicture}}
		\caption{Pairs of scoring-based methods}\label{fig:euc-general-comparsion}
	\end{subfigure}%
	\hfill
	\begin{subfigure}[t]{0.475\textwidth}
		\centering 
		\resizebox{0.9\textwidth}{!}{
\begin{tikzpicture}[every plot/.append style={line width=3.5pt}]

\definecolor{color0}{rgb}{1,0.549019607843137,0}
\definecolor{color1}{rgb}{0.133333333333333,0.545098039215686,0.133333333333333}
\definecolor{color2}{rgb}{0.117647058823529,0.564705882352941,1}
\definecolor{color3}{rgb}{0.580392156862745,0,0.827450980392157}
\definecolor{color4}{rgb}{0.647058823529412,0.164705882352941,0.164705882352941}

\node[text width=5cm] at (10,1) 
{\LARGE dashed lines largely overlap};

\begin{axis}[
legend columns=1, 
legend cell align={left},
legend style={
  fill opacity=0.8,
  draw opacity=1,
  draw=none,
  text opacity=1,
  at={(1.45,1)},
  line width=3pt,
  anchor=north,
   /tikz/column 2/.style={
  	column sep=10pt,
  },
  font=\LARGE
},
every tick label/.append style={font=\LARGE}, 
label style={font=\LARGE},
tick align=outside,
tick pos=left,
x grid style={white!69.0196078431373!black},
xlabel={dimension},
xmin=0.0499999999999999, xmax=20.95,
xtick style={color=black},
y grid style={white!69.0196078431373!black},
ylabel={normalized swap distance},
ymin=-0.0127055555555556, ymax=0.48,
ytick style={color=black},
ytick={0,0.1,0.2,0.3,0.4},
yticklabels={0,0.1,0.2,0.3,0.4}
]
\addlegendentry{Kemeny vs Seq.-Lo.}
\addlegendentry{Kemeny vs. Score}
\addlegendentry{Kemeny vs. Seq.-Wi.}
\addlegendentry{}
\addlegendentry{}
\addlegendentry{Plurality}
\addlegendentry{Borda}

\addlegendimage{color2}
\addlegendimage{color3}
\addlegendimage{color4}
\addlegendimage{empty legend}
\addlegendimage{empty legend}
\addlegendimage{gray}
\addlegendimage{gray,dashed}

\addplot [semithick, color2]
table {%
1 0.452335555555556
2 0.346791111111111
3 0.268931111111111
4 0.232924444444444
5 0.211762222222222
10 0.172577777777778
15 0.157006666666667
20 0.14824
};
\addplot [semithick, color3]
table {%
1 0.466266666666667
2 0.379404444444444
3 0.311062222222222
4 0.270708888888889
5 0.244451111111111
10 0.193042222222222
15 0.17562
20 0.165268888888889
};
\addplot [semithick, color4]
table {%
1 0.253846666666667
2 0.238451111111111
3 0.211815555555556
4 0.188651111111111
5 0.168504444444444
10 0.124522222222222
15 0.106677777777778
20 0.0997422222222222
};

\addplot [semithick, color2, dashed]
table {%
1 0.0441666666666667
2 0.0535933333333333
3 0.0489266666666667
4 0.0440288888888889
5 0.04194
10 0.0367111111111111
15 0.0352355555555556
20 0.0345266666666667
};
\addplot [semithick, color3, dashed]
table {%
1 0.0643644444444444
2 0.0619755555555556
3 0.0539066666666667
4 0.0484111111111111
5 0.0454644444444444
10 0.0408288888888889
15 0.0398422222222222
20 0.0386311111111111
};
\addplot [semithick, color4, dashed]
table {%
1 0.0952644444444444
2 0.0644488888888889
3 0.0506244444444444
4 0.0439244444444444
5 0.0404222222222222
10 0.0358377777777778
15 0.0350977777777778
20 0.0336088888888889
};
\end{axis}

\end{tikzpicture}}
		\caption{Scoring-based methods vs Kemeny ranking}\label{fig:euc-Kemeny-comparison}
	\end{subfigure}
	\caption{Pairwise average normalized swap distance between rankings produced by different methods for Plurality (solid) and Borda (dashed) on Euclidean profiles with $10$ candidates and $100$ voters.} \label{fig:Euc}
\end{figure*}

\subsection{Setup---Euclidean Model}

In addition to ranking profiles generated from the Mallows model, we have also considered profiles generated from the \emph{Euclidean model}. This model is parameterized by the dimension $d \ge 1$. 
To sample a Euclidean profile, for each candidate and voter we sample a point from the $d$-dimensional hypercube $[0,1]^d$ uniformly at random.
In the corresponding profile, each voter ranks the candidates in increasing order of their Euclidean $\ell_2$-distance to the voter.

\subsection{Comparison of Ranking Methods---Euclidean Model}

In the main body, in \Cref{sub:sim-scorecomp} and \Cref{sub:Kemeny-scorecomp}, we have analyzed the relation between rankings selected by the different rules on profiles generated using the Mallows model. 
To verify our results, we reran these experiments on profiles generated using the Euclidean model. 
Specifically, for each dimension $d\in \{1,2,3,4,5,10,15,20\}$, we generated \num{10000} profiles with $100$ voters and $10$ candidates.
We depict the results in \Cref{fig:Euc} (which is analogous to \Cref{fig:Mal}).

\paragraph{Comparison of Scoring-Based Ranking Methods}
We start by analyzing \Cref{fig:euc-general-comparsion}, where we compare the rankings selected by our different scoring-based rules. 
The general trend here is quite similar to Mallows profiles for a large dispersion parameter (see \Cref{fig:general-comparsion}):
For Borda, the agreement of the three methods is much higher than for Plurality, with \Score{Borda} producing rankings close to the other two. 
For Plurality, \seqLoAb{Plurality} and \Score{Plurality} produce again similar results, whereas the ranking produced by \seqWiAb{Plurality} differs more. 
The general level of disagreement between the rules for Plurality is remarkably high here. 
For $d=1$, the difference between \seqWiAb{Plurality} and the other two methods is around $0.4$, which is almost $0.5$ (the expected distance of two rankings drawn uniformly at random). %
Moreover, even for larger $d$, the level of disagreement remains high and is in particular around the level of disagreement for Mallows profiles with parameter $\normphi=1$. 
This is somewhat surprising, as profiles produced by the Mallows model with $\normphi=1$ are ``maximally chaotic'' and thus give the rules only limited information to distinguish the strength of candidates.

\paragraph{Comparison to Kemeny Ranking}
We now turn to the comparison of scoring-based ranking rules to Kemeny's method (\Cref{fig:euc-Kemeny-comparison}). For Plurality, like we have seen in Mallows profiles,  \seqWiAb{Plurality} produces the best results, followed by \seqLoAb{Plurality}, and lastly \Score{Plurality}. 
Considering the influence of the dimension~$d$, 
the difference between the method's distance to the Kemeny ranking is more or less the same for all dimensions with $d=1$ being the only exception: At $d=1$, \seqLoAb{Plurality} and \Score{Plurality} are at normalized swap distance $0.43$ from the Kemeny ranking, whereas \seqWiAb{Plurality} is at distance $0.27$, highlighting again that Euclidean profiles with $d=1$ are particularly challenging and that  \seqWiAb{Plurality} does best.

In contrast, for Borda, the rankings produced by the three methods are all around the same small distance from the Kemeny ranking (mostly independently of the dimension).

\begin{figure*}[t!]  
	\centering                  
	\begin{subfigure}[t]{0.475\textwidth} 
		\centering
		\resizebox{0.9\textwidth}{!}{
\begin{tikzpicture}[every plot/.append style={line width=3.5pt}]

\definecolor{color0}{rgb}{1,0.549019607843137,0}
\definecolor{color1}{rgb}{0.133333333333333,0.545098039215686,0.133333333333333}

\begin{axis}[
ylabel shift = 1pt,
legend columns=1, 
legend cell align={left},
legend style={
  fill opacity=0.8,
  draw opacity=1,
  draw=none,
  text opacity=1,
  at={(1.45,1)},
  line width=3pt,
  anchor=north,
   /tikz/column 2/.style={
  	column sep=10pt,
  },
  font=\LARGE
},
every tick label/.append style={font=\LARGE}, 
label style={font=\LARGE},
tick align=outside,
tick pos=left,
x grid style={white!69.0196078431373!black},
xlabel={position},
xmin=-0.45, xmax=9.45,
xtick style={color=black},
y grid style={white!69.0196078431373!black},
ylabel={average position displacement},
ymin=-0.07515, ymax=1.57815,
ytick style={color=black},
xtick={0,2,4,6,8},
xticklabels={1,3,5,7,9},
ytick={-0.2,0,0.2,0.4,0.6,0.8,1,1.2,1.4,1.6},
yticklabels={−0.2,0.0,0.2,0.4,0.6,0.8,1.0,1.2,1.4,1.6}
]
\addlegendentry{Seq.-Lo. vs Seq.-Wi.}
\addlegendentry{Seq.-Lo. vs Score}
\addlegendentry{Seq.-Wi. vs Score}
\addlegendentry{}
\addlegendentry{}
\addlegendentry{Plurality}
\addlegendentry{Borda}
\addlegendimage{red!54.5098039215686!black}
\addlegendimage{color0}
\addlegendimage{color1}
\addlegendimage{empty legend}
\addlegendimage{empty legend}
\addlegendimage{gray}
\addlegendimage{gray,dashed}
\addplot [semithick, red!54.5098039215686!black]
table {%
0 0.53355
1 0.7955
2 0.98605
3 1.1646
4 1.27845
5 1.33715
6 1.33685
7 1.2593
8 1.1313
9 1.06245
};
\addplot [semithick, color0]
table {%
0 0.548
1 0.75025
2 0.86375
3 0.87785
4 0.83635
5 0.7377
6 0.58085
7 0.3836
8 0.13705
9 0
};
\addplot [semithick, color1]
table {%
0 0
1 0.37
2 0.7599
3 1.0351
4 1.2132
5 1.31915
6 1.32915
7 1.2713
8 1.14705
9 1.07795
};

\addplot [semithick, red!54.5098039215686!black, dashed]
table {%
0 0.22215
1 0.37785
2 0.4335
3 0.43265
4 0.4215
5 0.4231
6 0.4269
7 0.43275
8 0.37015
9 0.21265
};
\addplot [semithick, color0, dashed]
table {%
0 0.21445
1 0.35205
2 0.35795
3 0.3114
4 0.2596
5 0.21285
6 0.1721
7 0.1281
8 0.0525
9 0
};
\addplot [semithick, color1,dashed]
table {%
0 0
1 0.04795
2 0.12455
3 0.17425
4 0.21345
5 0.26175
6 0.31135
7 0.3532
8 0.34125
9 0.20565
};
\end{axis}

\end{tikzpicture}}
		\caption{Pairs of scoring-based methods}\label{fig:posdisp-our}
	\end{subfigure}%
	\qquad 
	\begin{subfigure}[t]{0.475\textwidth}
		\centering   
		\resizebox{0.9\textwidth}{!}{
\begin{tikzpicture}[every plot/.append style={line width=3.5pt}]

\definecolor{color0}{rgb}{1,0.549019607843137,0}
\definecolor{color1}{rgb}{0.133333333333333,0.545098039215686,0.133333333333333}
\definecolor{color2}{rgb}{0.117647058823529,0.564705882352941,1}
\definecolor{color3}{rgb}{0.580392156862745,0,0.827450980392157}
\definecolor{color4}{rgb}{0.647058823529412,0.164705882352941,0.164705882352941}

\node[text width=5cm] at (10,1) 
{\LARGE dashed purple line overlaps with dashed red and blue line};

\begin{axis}[
ylabel shift = 1pt,
legend columns=1, 
legend cell align={left},
legend style={
  fill opacity=0.8,
  draw opacity=1,
  draw=none,
  text opacity=1,
  at={(1.45,1)},
  line width=3pt,
  anchor=north,
   /tikz/column 2/.style={
  	column sep=10pt,
  },
  font=\LARGE
},
every tick label/.append style={font=\LARGE}, 
label style={font=\LARGE},
xtick={0,2,4,6,8},
xticklabels={1,3,5,7,9},
tick align=outside,
tick pos=left,
x grid style={white!69.0196078431373!black},
xlabel={position},
xmin=-0.45, xmax=9.45,
xtick style={color=black},
y grid style={white!69.0196078431373!black},
ylabel={average position displacement},
ymin=-0.07515, ymax=1.57815,
ytick style={color=black},
ytick={-0.2,0,0.2,0.4,0.6,0.8,1,1.2,1.4,1.6},
yticklabels={−0.2,0.0,0.2,0.4,0.6,0.8,1.0,1.2,1.4,1.6}
]
\addlegendentry{Kemeny vs Seq.-Lo.}
\addlegendentry{Kemeny vs. Score}
\addlegendentry{Kemeny vs. Seq.-Wi.}
\addlegendentry{}
\addlegendentry{}
\addlegendentry{Plurality}
\addlegendentry{Borda}

\addlegendimage{color2}
\addlegendimage{color3}
\addlegendimage{color4}
\addlegendimage{empty legend}
\addlegendimage{empty legend}
\addlegendimage{gray}
\addlegendimage{gray,dashed}
\addplot [semithick, color2]
table {%
0 0.14595
1 0.5036
2 0.8544
3 1.11865
4 1.29735
5 1.37995
6 1.37545
7 1.3135
8 1.19765
9 1.1203
};
\addplot [semithick, color3]
table {%
0 0.6269
1 0.9432
2 1.18795
3 1.3525
4 1.45885
5 1.503
6 1.46045
7 1.3768
8 1.23685
9 1.1383
};
\addplot [semithick, color4]
table {%
0 0.59825
1 0.8097
2 0.9357
3 0.97625
4 0.96675
5 0.8845
6 0.80365
7 0.6474
8 0.40605
9 0.14455
};

\addplot [semithick, color2, dashed]
table {%
0 0.0626
1 0.17815
2 0.29525
3 0.3617
4 0.38675
5 0.40375
6 0.4215
7 0.4289
8 0.3658
9 0.2032
};
\addplot [semithick, color3, dashed]
table {%
0 0.2011
1 0.3698
2 0.42375
3 0.4339
4 0.416
5 0.4209
6 0.41795
7 0.4222
8 0.3661
9 0.1973
};
\addplot [semithick, color4,dashed]
table {%
0 0.20735
1 0.3727
2 0.4288
3 0.43645
4 0.4147
5 0.3845
6 0.34975
7 0.30175
8 0.1878
9 0.064
};
\end{axis}

\end{tikzpicture}}	
		\caption{Scoring-based methods vs Kemeny ranking}\label{fig:posdisp-Kemeny}
	\end{subfigure}
	\caption{For pairs of rankings, average position displacement on each position for profiles generated using the Mallows model with $\normphi=0.8$ with $10$ candidates and $100$ voters.} \label{fig:posdisp}
\end{figure*}
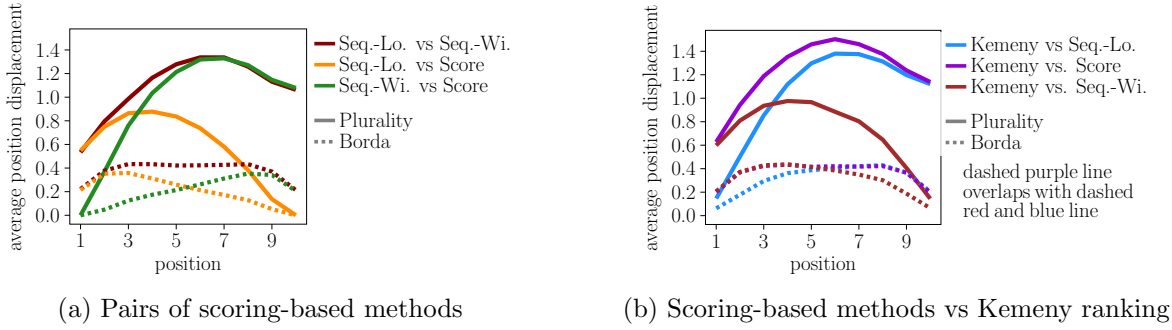
\subsection{Similarity in Different Vote Parts}
To shed some further light on the relation of the different methods, we next analyze in which parts of the computed ranking the considered methods agree or disagree most. 
For this, for two rankings ${\succ},{\succ'}\in \mathcal{L}(C)$, we define the \emph{position displacement} in position $i\in [|C|]$ as 
\[
\tfrac{1}{2} \left( |i-\text{pos}(\succ,\text{cand}(\succ',i))|+|i-\text{pos}(\succ',\text{cand}(\succ,i))|
\right).
\] 
The position displacement quantifies how far away the candidates ranked in position $i$ in one ranking are ranked in the other ranking.
Consider as an example the following two rankings: 
\begin{align*}
	a &\succ b \succ c \succ d \\
	d &\succ' c \succ' a \succ' b
\end{align*}
Then, the position displacement on position $1$ is $\frac{1}{2}\cdot (|1-\text{pos}(\succ,d)|+|1-\text{pos}(\succ',a)|)=\frac{1}{2}\cdot (|1-4|+|1-3|)=\frac{5}{2}$, whereas the position displacement on position $2$ is $\frac{1}{2}\cdot (|2-3|+|2-4|)=\frac{3}{2}$.
In \Cref{fig:posdisp}, we show the average position displacement on $\num{10000}$ profiles with $100$ voters and $10$ candidates sampled from the Mallows model with $\normphi=0.8$.%
\footnote{We rerun this experiment using profiles sampled from the Euclidean model with $d=10$ producing a very similar picture.}
(We chose $\normphi=0.8$ in order to ensure that the tie-breaking rule plays no critical role, while still keeping some structure in the profile.)

First, the general picture for Plurality and Borda is similar in the sense that for all comparisons of rules, the shape of the respective curves is similar. 
Thus, all observations described in the following hold for both Plurality and Borda.
Second, all four methods have a generally higher agreement on the top and bottom positions than on the middle positions.
Third, focusing on the comparison of the different scoring-based methods (\Cref{fig:posdisp-our}), by design \seqWiAbS and \ScoreS always place the same candidate in the first position. 
However, the agreement of the rules remains high in the second position and decreases continuously until position $7$. 
This indicates the intuitive behavior that the more candidates are present in the current round, the higher is the correlation between the scores of the candidates in the initial profile and their score in this round.
Comparing \seqLoAbS to \ScoreS, a reverse effect is present. 
For \seqLoAbS compared to \seqWiAbS, we almost have a symmetric curve with a generally slightly higher agreement on the top than on the bottom.  
Fourth, focusing on the comparison of the scoring-based methods and the Kemeny ranking (\Cref{fig:posdisp-Kemeny}),  
interestingly, on the top positions the Kemeny ranking agrees most with \seqLoAbS while on the bottom it agrees most with \seqWiAbS. 
This suggests that one should use \seqLoAbS for identifying the best candidates and \seqWiAbS for identifying the worst candidates.

\subsection{Number of Ties in Executions of the Rules}
The main motivation for our complexity analysis is that ties might occur in the execution of our rules. 
To better understand whether ties actually occur in practice (and to provide evidence for the explanation we gave in the main body about the behavior of \Score{Plurality} and \seqLoAb{Plurality} for small dispersion parameters), we conducted the following experiment.
We again sampled $\num{10000}$ profiles from the Mallows model, for each $\normphi\in \{0,0.1,\dots, 1\}$. 
For each profile, we executed our rules, as usual breaking ties according to a tie-breaking order $\succ_{\text{tie}}$ that we sampled randomly, and checked in each round whether a tie is present.\footnote{We use the following interpretation of the Score method. We start by computing the scores of the candidates in the initial profile. Subsequently, we eliminate a candidate with the highest number of points and add it at the first position in the selected ranking. However, we do not recompute the scores. In the second round, we eliminate the remaining candidate with the maximum number of points (in the initial profile) and add it in the second position. We repeat this process until all candidates are eliminated.}
The average number of rounds when a tie occurred is shown in \Cref{fig:ties}. 

\begin{figure*}[t!]  
	\centering                  
	\begin{subfigure}[t]{0.45\textwidth} 
		\centering
		\resizebox{0.9\textwidth}{!}{
\begin{tikzpicture}[every plot/.append style={line width=3.5pt}]

\definecolor{color0}{rgb}{1,0.549019607843137,0}
\definecolor{color1}{rgb}{0.133333333333333,0.545098039215686,0.133333333333333}

\begin{axis}[
ylabel shift = 1pt,
legend columns=1, 
legend cell align={left},
legend style={
	fill opacity=0.8,
	draw opacity=1,
	draw=none,
	text opacity=1,
	at={(1.25,1)},
	line width=3pt,
	anchor=north,
	/tikz/column 2/.style={
		column sep=10pt,
	},
	font=\LARGE
},
legend image post style={line width =3pt},
every tick label/.append style={font=\LARGE}, 
label style={font=\LARGE},
tick align=outside,
tick pos=left,
x grid style={white!69.0196078431373!black},
xlabel={normalized dispersion parameter},
xmin=-0.05, xmax=1.05,
xtick style={color=black},
xtick={-0.2,0,0.2,0.4,0.6,0.8,1,1.2},
xticklabels={−0.2,0.0,0.2,0.4,0.6,0.8,1.0,1.2},
y grid style={white!69.0196078431373!black},
ylabel={average number of rounds with tie},
ymin=-0.47365, ymax=9.94665,
ytick style={color=black}
]
\addplot [semithick, red!54.5098039215686!black]
table {%
0 8
0.1 5.4792
0.2 4.2857
0.3 3.2373
0.4 2.3061
0.5 1.6928
0.6 1.4306
0.7 1.3786
0.8 1.5189
0.9 1.7882
1 2.0056
};
\addlegendentry{Seq.-Lo.}
\addplot [semithick, color0]
table {%
0 0
0.1 0
0.2 0
0.3 0.0026
0.4 0.0269
0.5 0.1103
0.6 0.2514
0.7 0.482
0.8 0.8209
0.9 1.3222
1 1.5881
};
\addlegendentry{Seq.-Wi.}
\addplot [semithick, color1]
table {%
0 9.473
0.1 5.9603
0.2 3.7098
0.3 2.3344
0.4 1.4456
0.5 0.9963
0.6 0.869
0.7 0.9181
0.8 1.1343
0.9 1.5426
1 1.8357
};
\addlegendentry{Score}
\end{axis}

\end{tikzpicture}}
		\caption{Plurality}\label{fig:tie-plural}
	\end{subfigure}%
	\qquad
	\begin{subfigure}[t]{0.45\textwidth}
		\centering    
		\resizebox{0.9\textwidth}{!}{
\begin{tikzpicture}[every plot/.append style={line width=3.5pt}]

\definecolor{color0}{rgb}{1,0.549019607843137,0}
\definecolor{color1}{rgb}{0.133333333333333,0.545098039215686,0.133333333333333}

\begin{axis}[
ylabel shift = 1pt,
legend columns=1, 
legend cell align={left},
legend style={
	fill opacity=0.8,
	draw opacity=1,
	draw=none,
	text opacity=1,
	at={(1.25,1)},
	line width=3pt,
	anchor=north,
	/tikz/column 2/.style={
		column sep=10pt,
	},
	font=\LARGE
},
legend image post style={line width =3pt},
every tick label/.append style={font=\LARGE}, 
label style={font=\LARGE},
tick align=outside,
tick pos=left,
x grid style={white!69.0196078431373!black},
xlabel={normalized dispersion parameter},
xmin=-0.05, xmax=1.05,
xtick style={color=black},
xtick={-0.2,0,0.2,0.4,0.6,0.8,1,1.2},
xticklabels={−0.2,0.0,0.2,0.4,0.6,0.8,1.0,1.2},
y grid style={white!69.0196078431373!black},
ylabel={average number of rounds with tie},
ymin=-0.032845, ymax=0.689745,
ytick style={color=black},
ytick={-0.1,0,0.1,0.2,0.3,0.4,0.5,0.6,0.7},
yticklabels={−0.1,0.0,0.1,0.2,0.3,0.4,0.5,0.6,0.7}
]
\addplot [semithick, red!54.5098039215686!black]
table {%
0 0
0.1 0
0.2 0
0.3 0.0017
0.4 0.0091
0.5 0.0355
0.6 0.0828
0.7 0.1422
0.8 0.253
0.9 0.4457
1 0.6569
};
\addlegendentry{Seq.-Lo.}
\addplot [semithick, color0]
table {%
0 0
0.1 0
0.2 0
0.3 0.0009
0.4 0.009
0.5 0.0346
0.6 0.074
0.7 0.1461
0.8 0.2495
0.9 0.4347
1 0.6513
};
\addlegendentry{Seq.-Wi.}
\addplot [semithick, color1]
table {%
0 0
0.1 0
0.2 0
0.3 0.0009
0.4 0.0036
0.5 0.0112
0.6 0.0206
0.7 0.0378
0.8 0.0567
0.9 0.1153
1 0.2026
};
\addlegendentry{Score}
\end{axis}

\end{tikzpicture}}
		\caption{Borda}\label{fig:tie-borda}
	\end{subfigure}
	\caption{Average number of rounds in which a tie occurs for \seqWiAb{Plurality}, \seqLoAb{Plurality}, and \Score{Plurality}on profiles with $10$ candidates and $100$ voters sampled from the Mallows model.}\label{fig:ties}
\end{figure*}
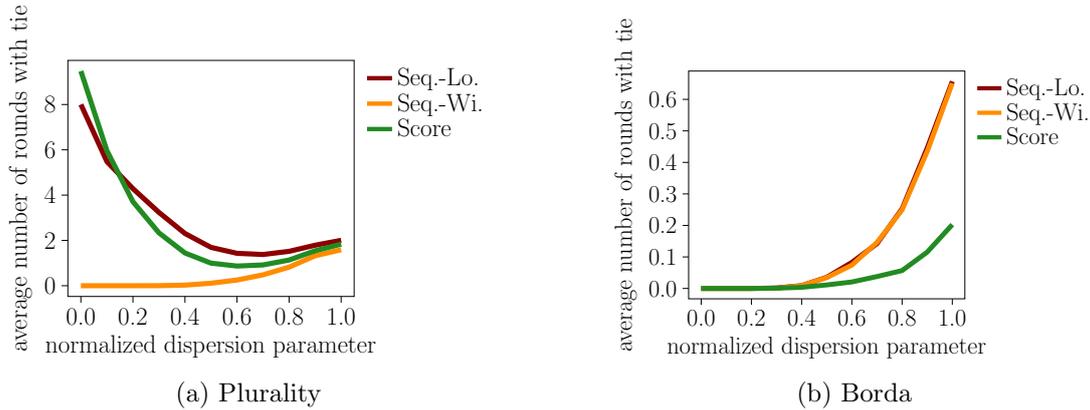  

For Plurality, we previously observed that on Mallows profiles with a small dispersion parameter, \seqLoAb{Plurality} and \Score{Plurality} produce very similar rankings. We mentioned that this could be explained by a large number of ties (that get resolved via the same tie-breaking order). This observation is clearly confirmed here (see \Cref{fig:tie-plural}).
In contrast, for \seqWiAb{Plurality} there are only very few (or no) ties for a small dispersion parameter, while more ties appear if the dispersion parameter is increased.
In general, for each choice of the normalized dispersion parameter,  \seqWiAb{Plurality}  produces the smallest number of ties among the three methods based on Plurality.
For $\normphi=1$, for all three rules, there are ties in around $2$ rounds on average (we were slightly surprised that this number is so high). 
Overall, these results show the importance of tie-breaking for ranking methods based on Plurality scores. 
They also provide an argument in favor of \seqWiAb{Plurality} over the other Plurality-based rules, as this rule leaves fewer decisions to the tie-breaking rule.

For Borda (\Cref{fig:tie-borda}), \Score{Borda} produces the least number of ties.
\seqWiAb{Borda} and \seqLoAb{Borda} have more ties, at very similar rates to each other.
The number of rounds with ties increases with a higher dispersion parameter. 
However, even for $\normphi=1$, there are only very few ties for all Borda-based rules: on average around $0.6$ rounds with a tie for \seqWiAb{Borda} and \seqLoAb{Borda}, and only $0.2$ rounds for \Score{Borda}. 
(Compare this to 2 rounds for the Plurality-based rules.)

For Euclidean models, the number of rounds with a tie mostly does not vary with the dimension, and there are many fewer ties than for the Mallows model. 
In particular, for \seqWi{Plurality}, the average number of rounds  with a tie is around $0.7$, for \Score{Plurality} it is around $1$, and for \seqLo{Plurality} it is around $1.4$.
Again we see that \seqWi{Plurality} produces the fewest ties among the Plurality-based rules. 
Moreover, for Borda the number of ties is again much lower than for Plurality, with \Score{Borda} producing the fewest ties, while \seqWiAb{Borda} and \seqLoAb{Borda} give  results very similar to each other. 
Specifically, for \Score{Borda} the average number of rounds with a tie is around $0.05$, and for \seqWiAb{Borda} and \seqLoAb{Borda} it is around $0.2$.

\subsection{Influence of Profile Size}
So far, we have focused on profiles with $n=100$ voters and $m=10$ candidates. Now, we examine the influence of the size of our profile on the results. 
First, we will analyze the influence of varying the number of voters and second, we will analyze the influence of varying the number of candidates.

\begin{figure*}[t!]  
	\centering                  
	\begin{subfigure}[t]{0.45\textwidth} 
		\centering
		\resizebox{0.9\textwidth}{!}{
\begin{tikzpicture}[every plot/.append style={line width=3.5pt}]

\definecolor{color0}{rgb}{1,0.549019607843137,0}
\definecolor{color1}{rgb}{0.133333333333333,0.545098039215686,0.133333333333333}
\definecolor{color2}{rgb}{0.117647058823529,0.564705882352941,1}
\definecolor{color3}{rgb}{0.580392156862745,0,0.827450980392157}
\definecolor{color4}{rgb}{0.647058823529412,0.164705882352941,0.164705882352941}

\begin{axis}[
ylabel shift = 1pt,
legend columns=1, 
legend cell align={left},
legend style={
	fill opacity=0.8,
	draw opacity=1,
	draw=none,
	text opacity=1,
	at={(1.45,1)},
	line width=3pt,
	anchor=north,
	/tikz/column 2/.style={
		column sep=10pt,
	},
	font=\LARGE
},
legend image post style={line width =3pt},
every tick label/.append style={font=\LARGE}, 
label style={font=\LARGE},
tick align=outside,
tick pos=left,
x grid style={white!69.0196078431373!black},
xlabel={number of voters},
xmin=1.25, xmax=523.75,
xtick style={color=black},
y grid style={white!69.0196078431373!black},
ylabel={normalized swap distance},
ymin=0.0116888888888889, ymax=0.272755555555556,
ytick style={color=black},
ytick={0,0.05,0.1,0.15,0.2,0.25,0.3},
yticklabels={0.00,0.05,0.10,0.15,0.20,0.25,0.30}
]
\addplot [semithick, red!54.5098039215686!black]
table {%
	25 0.209091111111111
	50 0.176697777777778
	100 0.137026666666667
	200 0.09988
	300 0.0796622222222222
	400 0.0665444444444444
	500 0.0575066666666667
};
\addlegendentry{Seq.-Lo. vs Seq.-Wi.}
\addplot [semithick, color0]
table {%
	25 0.0731844444444444
	50 0.0781577777777778
	100 0.0686577777777778
	200 0.0529177777777778
	300 0.0424911111111111
	400 0.0359866666666667
	500 0.0309711111111111
};
\addlegendentry{Seq.-Lo. vs Score}
\addplot [semithick, color1]
table {%
	25 0.189186666666667
	50 0.156766666666667
	100 0.122542222222222
	200 0.0896644444444444
	300 0.0725177777777778
	400 0.06102
	500 0.0534911111111111
};
\addlegendentry{Seq.-Wi. vs Score}
\addplot [semithick, color2, dashed]
table {%
	25 0.2291
	50 0.180926666666667
	100 0.132515555555556
	200 0.0911977777777778
	300 0.0713822222222222
	400 0.0585977777777778
	500 0.0505755555555556
};
\addlegendentry{Kemeny vs Seq.-Lo.}
\addplot [semithick, color3, dashed]
table {%
	25 0.256835555555556
	50 0.209431111111111
	100 0.160022222222222
	200 0.113977777777778
	300 0.0905355555555555
	400 0.0748555555555556
	500 0.0646666666666667
};
\addlegendentry{Kemeny vs Score}
\addplot [semithick, color4, dashed]
table {%
	25 0.157564444444444
	50 0.124166666666667
	100 0.0872888888888889
	200 0.0556733333333333
	300 0.0409377777777778
	400 0.0319022222222222
	500 0.0254244444444444
};
\addlegendentry{Kemeny vs Seq.-Wi.}
\end{axis}

\end{tikzpicture}}
		\caption{Mallows model for $\normphi=0.8$}\label{fig:n-mallows}
	\end{subfigure}%
	\qquad
	\begin{subfigure}[t]{0.45\textwidth}
		\centering    
		\resizebox{0.9\textwidth}{!}{
\begin{tikzpicture}[every plot/.append style={line width=3.5pt}]

\definecolor{color0}{rgb}{1,0.549019607843137,0}
\definecolor{color1}{rgb}{0.133333333333333,0.545098039215686,0.133333333333333}
\definecolor{color2}{rgb}{0.117647058823529,0.564705882352941,1}
\definecolor{color3}{rgb}{0.580392156862745,0,0.827450980392157}
\definecolor{color4}{rgb}{0.647058823529412,0.164705882352941,0.164705882352941}

\begin{axis}[
ylabel shift = 1pt,
legend columns=1, 
legend cell align={left},
legend style={
	fill opacity=0.8,
	draw opacity=1,
	draw=none,
	text opacity=1,
	at={(1.45,1)},
	line width=3pt,
	anchor=north,
	/tikz/column 2/.style={
		column sep=10pt,
	},
	font=\LARGE
},
legend image post style={line width =3pt},
every tick label/.append style={font=\LARGE}, 
label style={font=\LARGE},
tick align=outside,
tick pos=left,
x grid style={white!69.0196078431373!black},
xlabel={number of voters},
xmin=1.25, xmax=523.75,
xtick style={color=black},
y grid style={white!69.0196078431373!black},
ylabel={normalized swap distance},
ymin=0.0472888888888889, ymax=0.273155555555556,
ytick style={color=black},
ytick={0,0.05,0.1,0.15,0.2,0.25,0.3},
yticklabels={0.00,0.05,0.10,0.15,0.20,0.25,0.30}
]
\addplot [semithick, red!54.5098039215686!black]
table {%
	25 0.220215555555556
	50 0.198222222222222
	100 0.177215555555556
	200 0.163326666666667
	300 0.157622222222222
	400 0.153148888888889
	500 0.152908888888889
};
\addlegendentry{Seq.-Lo. vs Seq.-Wi.}
\addplot [semithick, color0]
table {%
	25 0.0620977777777778
	50 0.0683911111111111
	100 0.0673555555555556
	200 0.0631933333333333
	300 0.0604866666666667
	400 0.0592733333333333
	500 0.0579355555555556
};
\addlegendentry{Seq.-Lo. vs Score}
\addplot [semithick, color1]
table {%
	25 0.201851111111111
	50 0.178204444444444
	100 0.154082222222222
	200 0.138155555555556
	300 0.131895555555556
	400 0.126537777777778
	500 0.125395555555556
};
\addlegendentry{Seq.-Wi. vs Score}
\addplot [semithick, color2, dashed]
table {%
	25 0.238066666666667
	50 0.202075555555556
	100 0.171926666666667
	200 0.151322222222222
	300 0.144068888888889
	400 0.138262222222222
	500 0.135684444444444
};
\addlegendentry{Kemeny vs Seq.-Lo.}
\addplot [semithick, color3, dashed]
table {%
	25 0.259951111111111
	50 0.224248888888889
	100 0.192984444444444
	200 0.170031111111111
	300 0.161297777777778
	400 0.154637777777778
	500 0.151828888888889
};
\addlegendentry{Kemeny vs Score}
\addplot [semithick, color4, dashed]
table {%
	25 0.152522222222222
	50 0.136191111111111
	100 0.122413333333333
	200 0.112457777777778
	300 0.108091111111111
	400 0.105184444444444
	500 0.104837777777778
};
\addlegendentry{Kemeny vs Seq.-Wi.}
\end{axis}

\end{tikzpicture}}
		\caption{Euclidean model for $d=10$}\label{fig:n-Euclidean}
	\end{subfigure}
	\caption{Pairwise average normalized swap distance between the Kemeny method, \seqWiAb{Plurality}, \seqLoAb{Plurality}, and \Score{Plurality} rankings on profiles with $10$ candidates and a varying number of voters. }\label{fig:n-comparision}
\end{figure*}
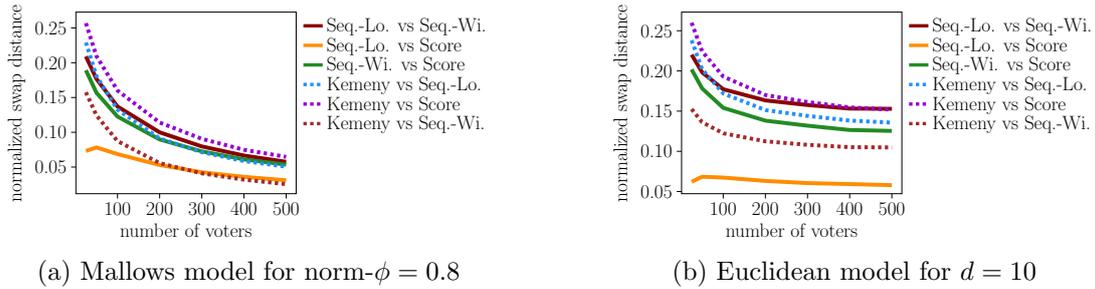  

\paragraph{Varying the number of voters}
In \Cref{fig:n-comparision}, we depict the pairwise difference of our ranking methods for Plurality and the Kemeny ranking, for profiles with $10$ candidates and a varying number of voters.
The profiles are generated using the Mallows model with $\normphi=0.8$ (\Cref{fig:n-mallows}) and the Euclidean model with $d=10$ (\Cref{fig:n-Euclidean}). 
For each $n\in \{25,50,100,200,300,400,500\}$, we generated $\num{10000}$ profiles.
For both generation models, the different methods become pairwise more similar with a higher  number of voters.
For Mallows the distance between the rankings decreases steeply, while for the Euclidean model the decrease (after $n=100$) is slower. 
Generally speaking, increasing the number of voters gives us additional information about the strengths of the candidates and reduces the probability of artifacts. 
For Mallows profiles, there exists a clear ordering of the candidates in terms of their strengths (namely the central order), and additional voters clarify this situation. 
For Euclidean profiles, candidates are less clearly distinguishable; explaining why the four approaches do not all ``converge'' to the same ranking as the number of voters increases (unlike for Mallows). 
For both models, the ordering of pairs of our three scoring-based ranking methods in terms of their similarity is independent of the number of voters (the same is also the case for their ordering with respect to their similarity to the Kemeny ranking).

The above described general trends are also present if we use Borda instead of Plurality.

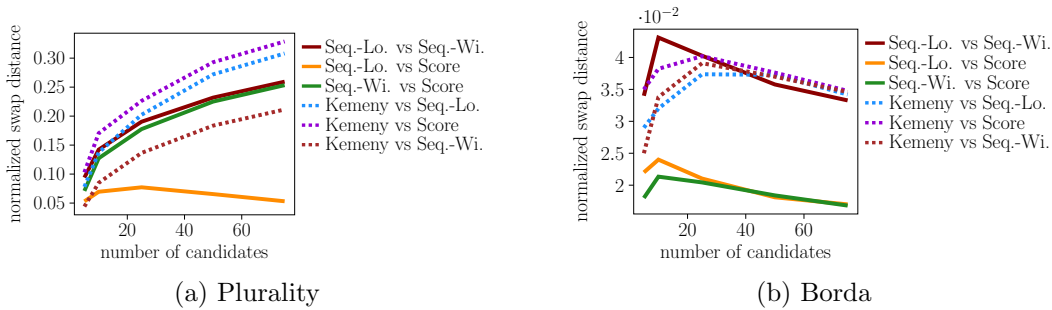
\begin{figure*}[t!]  
	\centering                  
	\begin{subfigure}[t]{0.45\textwidth} 
		\centering
		\resizebox{0.9\textwidth}{!}{
\begin{tikzpicture}[every plot/.append style={line width=3.5pt}]

\definecolor{color0}{rgb}{1,0.549019607843137,0}
\definecolor{color1}{rgb}{0.133333333333333,0.545098039215686,0.133333333333333}
\definecolor{color2}{rgb}{0.117647058823529,0.564705882352941,1}
\definecolor{color3}{rgb}{0.580392156862745,0,0.827450980392157}
\definecolor{color4}{rgb}{0.647058823529412,0.164705882352941,0.164705882352941}

\begin{axis}[
ylabel shift = 1pt,
legend columns=1, 
legend cell align={left},
legend style={
	fill opacity=0.8,
	draw opacity=1,
	draw=none,
	text opacity=1,
	at={(1.45,1)},
	line width=3pt,
	anchor=north,
	/tikz/column 2/.style={
		column sep=10pt,
	},
	font=\LARGE
},
legend image post style={line width =3pt},
every tick label/.append style={font=\LARGE}, 
label style={font=\LARGE},
tick align=outside,
tick pos=left,
x grid style={white!69.0196078431373!black},
xlabel={number of candidates},
xmin=1.5, xmax=78.5,
xtick style={color=black},
y grid style={white!69.0196078431373!black},
ylabel={normalized swap distance},
ymin=0.0297612612612613, ymax=0.343013513513514,
ytick style={color=black},
ytick={0,0.05,0.1,0.15,0.2,0.25,0.3,0.35},
yticklabels={0.00,0.05,0.10,0.15,0.20,0.25,0.30,0.35}
]
\addplot [semithick, red!54.5098039215686!black]
table {%
5 0.094
10 0.142222222222222
25 0.190366666666667
50 0.231918367346939
75 0.259427027027027
};
\addlegendentry{Seq.-Lo. vs Seq.-Wi.}
\addplot [semithick, color0]
table {%
5 0.053
10 0.0695555555555556
25 0.0772
50 0.0655102040816326
75 0.0530990990990991
};
\addlegendentry{Seq.-Lo. vs Score}
\addplot [semithick, color1]
table {%
5 0.071
10 0.127333333333333
25 0.177633333333333
50 0.22505306122449
75 0.253398198198198
};
\addlegendentry{Seq.-Wi. vs Score}
\addplot [semithick, color2, dashed]
table {%
5 0.078
10 0.137333333333333
25 0.2024
50 0.272057142857143
75 0.308057657657658
};
\addlegendentry{Kemeny vs Seq.-Lo.}
\addplot [semithick, color3, dashed]
table {%
5 0.103
10 0.170444444444444
25 0.226333333333333
50 0.292979591836735
75 0.328774774774775
};
\addlegendentry{Kemeny vs Score}
\addplot [semithick, color4, dashed]
table {%
5 0.044
10 0.0848888888888889
25 0.136566666666667
50 0.183583673469388
75 0.21141981981982
};
\addlegendentry{Kemeny vs Seq.-Wi.}
\end{axis}

\end{tikzpicture}}
		\caption{Plurality}\label{fig:m-plural}
	\end{subfigure}%
	~ 
	\begin{subfigure}[t]{0.45\textwidth} 
		\centering    
		\resizebox{0.9\textwidth}{!}{
\begin{tikzpicture}[every plot/.append style={line width=3.5pt}]

\definecolor{color0}{rgb}{1,0.549019607843137,0}
\definecolor{color1}{rgb}{0.133333333333333,0.545098039215686,0.133333333333333}
\definecolor{color2}{rgb}{0.117647058823529,0.564705882352941,1}
\definecolor{color3}{rgb}{0.580392156862745,0,0.827450980392157}
\definecolor{color4}{rgb}{0.647058823529412,0.164705882352941,0.164705882352941}

\begin{axis}[
ylabel shift = 1pt,
legend columns=1, 
legend cell align={left},
legend style={
	fill opacity=0.8,
	draw opacity=1,
	draw=none,
	text opacity=1,
	at={(1.45,1)},
	line width=3pt,
	anchor=north,
	/tikz/column 2/.style={
		column sep=10pt,
	},
	font=\LARGE
},
legend image post style={line width =3pt},
every tick label/.append style={font=\LARGE}, 
label style={font=\LARGE},
tick align=outside,
tick pos=left,
x grid style={white!69.0196078431373!black},
xlabel={number of candidates},
xmin=1.5, xmax=78.5,
xtick style={color=black},
y grid style={white!69.0196078431373!black},
ylabel={normalized swap distance},
ymin=0.015492012012012, ymax=0.0444263063063063,
ytick style={color=black}
]
\addplot [semithick, red!54.5098039215686!black]
table {%
5 0.034
10 0.0431111111111111
25 0.0402666666666667
50 0.0357714285714286
75 0.0332720720720721
};
\addlegendentry{Seq.-Lo. vs Seq.-Wi.}
\addplot [semithick, color0]
table {%
5 0.022
10 0.024
25 0.0210333333333333
50 0.0181061224489796
75 0.016990990990991
};
\addlegendentry{Seq.-Lo. vs Score}
\addplot [semithick, color1]
table {%
5 0.018
10 0.0213333333333333
25 0.0204333333333333
50 0.0184
75 0.0168072072072072
};
\addlegendentry{Seq.-Wi. vs Score}
\addplot [semithick, color2, dashed]
table {%
5 0.029
10 0.032
25 0.0373666666666667
50 0.0372816326530612
75 0.0341873873873874
};
\addlegendentry{Kemeny vs Seq.-Lo.}
\addplot [semithick, color3, dashed]
table {%
5 0.035
10 0.0382222222222222
25 0.0402
50 0.0376244897959184
75 0.0347675675675676
};
\addlegendentry{Kemeny vs Score}
\addplot [semithick, color4, dashed]
table {%
5 0.025
10 0.0337777777777778
25 0.0391
50 0.0369877551020408
75 0.0345873873873874
};
\addlegendentry{Kemeny vs Seq.-Wi.}
\end{axis}

\end{tikzpicture}}
		\caption{Borda}\label{fig:m-borda}
	\end{subfigure}
	\caption{Pairwise average normalized swap distance between the Kemeny ranking and rankings produced by our three scoring-based methods on profiles sampled from Mallows model with $\normphi=0.8$ with $100$ voters and a varying number of candidates.}\label{fig:m-comparision}
\end{figure*}

\paragraph{Varying the number of candidates}
We now turn to analyzing the influence of the number of candidates. 
In \Cref{fig:m-comparision}, we depict the pairwise distances between our methods for Plurality (\Cref{fig:m-plural}) and Borda (\Cref{fig:m-borda}) in profiles with $100$ voters and a varying number of candidates.
The profiles are generated using the Mallows model with $\normphi=0.8$. For the Euclidean model with $d=10$, the results are similar, so we omit them. 
For each $m\in \{5,10,25,50,75\}$, we generated $100$ profiles. 
Note that our use of \emph{normalized} distances is particularly convenient when comparing results with differing numbers of candidates.

We start by examining the results for Plurality (\Cref{fig:m-plural}). 
Here, as the number of candidates increases, the average number of Plurality points per candidate decreases and in particular more candidates get a Plurality score of zero. 
This leads to more ties in the execution of  \seqLoAb{Plurality} and \Score{Plurality} when determining the ordering of candidates at the bottom of the output ranking.
Because both methods break ties using the same tie-breaking order $\succ_{\text{tie}}$, \seqLoAb{Plurality} and \Score{Plurality} become more similar as the number of candidates increases. 
On the flip side, when comparing  \seqWiAb{Plurality} to \seqLoAb{Plurality} (or when comparing \seqWiAb{Plurality} to \Score{Plurality}), they become less similar as the number of candidates increases. 
This is because \seqWiAb{Plurality} is able to distinguish candidates who initially have low Plurality scores and does not need to rely on tie-breaking. Indeed, when \seqWiAb{Plurality} starts to rank the weak candidates, all other candidates have already been deleted and thus the average Plurality score of the weak candidates is higher and more informative. 

Comparing our three scoring-based methods to the Kemeny ranking, we see that the distance increases with more candidates. 
This is because decreasing the average number of Plurality points per candidates makes it harder for our three scoring-based methods to distinguish the strengths of candidates; intuitively speaking, as we increase the number of candidates, the information provided by only examining the first position decreases (but, as discussed above, this effect is smaller for \seqWiAb{Plurality}).

For Borda (\Cref{fig:m-borda}), the general trend is reversed. With small exceptions, the higher the number of candidates, the more similar are the rankings produced by the different methods.
One effect that potentially contributes to this is that for a higher number of candidates the range of awarded points increases thereby allowing for a clearer distinction of the candidates.
Nevertheless, for both Plurality and Borda, the ordering of the pairs of methods remains largely unaffected by changing the number of candidates.

\medskip

Overall, the results from this section suggest that while the size of the profile in question influences the level of similarity of the different methods, the general trends observed in the previous sections hold mostly independent of the size of the profile.

\begin{figure*}[t!]  
	\centering                  
	\begin{subfigure}[t]{0.475\textwidth} 
		\centering
		\resizebox{0.9\textwidth}{!}{
\begin{tikzpicture}[every plot/.append style={line width=3.5pt}]

\definecolor{color0}{rgb}{1,0.549019607843137,0}
\definecolor{color1}{rgb}{0.133333333333333,0.545098039215686,0.133333333333333}
\definecolor{color2}{rgb}{0.117647058823529,0.564705882352941,1}
\definecolor{color3}{rgb}{0.580392156862745,0,0.827450980392157}
\definecolor{color4}{rgb}{0.647058823529412,0.164705882352941,0.164705882352941}

\begin{axis}[
ylabel shift = 1pt,
legend columns=1, 
legend cell align={left},
legend style={
	fill opacity=0.8,
	draw opacity=1,
	draw=none,
	text opacity=1,
	at={(1.45,1)},
	line width=3pt,
	anchor=north,
	/tikz/column 2/.style={
		column sep=10pt,
	},
	font=\LARGE
},
legend image post style={line width =3pt},
every tick label/.append style={font=\LARGE}, 
label style={font=\LARGE},
tick align=outside,
tick pos=left,
x grid style={white!69.0196078431373!black},
xlabel={normalized dispersion parameter},
xmin=-0.05, xmax=1.05,
xtick style={color=black},
xtick={-0.2,0,0.2,0.4,0.6,0.8,1,1.2},
xticklabels={−0.2,0.0,0.2,0.4,0.6,0.8,1.0,1.2},
y grid style={white!69.0196078431373!black},
ylabel={normalized swap distance},
ymin=-0.0200004444444444, ymax=0.420009333333333,
ytick style={color=black},
ytick={-0.05,0,0.05,0.1,0.15,0.2,0.25,0.3,0.35,0.4,0.45},
yticklabels={−0.05,0.00,0.05,0.10,0.15,0.20,0.25,0.30,0.35,0.40,0.45}
]
\addplot [semithick, red!54.5098039215686!black]
table {%
0 0.400008888888889
0.1 0.195264444444444
0.2 0.126728888888889
0.3 0.0895466666666667
0.4 0.07388
0.5 0.0716511111111111
0.6 0.0799222222222222
0.7 0.100095555555556
0.8 0.137942222222222
0.9 0.202384444444444
1 0.254024444444444
};
\addlegendentry{Seq.-Lo. vs Seq.-Wi.}
\addplot [semithick, color0]
table {%
0 0.400008888888889
0.1 0.195264444444444
0.2 0.12678
0.3 0.0898844444444444
0.4 0.0745911111111111
0.5 0.0722222222222222
0.6 0.0785977777777778
0.7 0.09506
0.8 0.123095555555556
0.9 0.168168888888889
1 0.200193333333333
};
\addlegendentry{Seq.-Lo. vs Score}
\addplot [semithick, color1]
table {%
0 0
0.1 1.33333333333333e-05
0.2 0.000228888888888889
0.3 0.00117777777777778
0.4 0.00407555555555556
0.5 0.0103133333333333
0.6 0.0206977777777778
0.7 0.0386355555555556
0.8 0.0687755555555556
0.9 0.114882222222222
1 0.147404444444444
};
\addlegendentry{Seq.-Wi. vs Score}
\addplot [semithick, color2, dashed]
table {%
0 0
0.1 0
0.2 0
0.3 0.000117777777777778
0.4 0.00154888888888889
0.5 0.00724666666666667
0.6 0.0198933333333333
0.7 0.04306
0.8 0.0871622222222222
0.9 0.176882222222222
1 0.26872
};
\addlegendentry{Kemeny vs Seq.-Lo.}
\addplot [semithick, color3, dashed]
table {%
0 0.400008888888889
0.1 0.195264444444444
0.2 0.12678
0.3 0.0899222222222222
0.4 0.0750866666666667
0.5 0.0744333333333333
0.6 0.0856066666666667
0.7 0.11108
0.8 0.159364444444444
0.9 0.251846666666667
1 0.336415555555556
};
\addlegendentry{Kemeny vs Score}
\addplot [semithick, color4, dashed]
table {%
0 0.400008888888889
0.1 0.195264444444444
0.2 0.126728888888889
0.3 0.0895488888888889
0.4 0.07382
0.5 0.0708044444444444
0.6 0.0777311111111111
0.7 0.0958888888888889
0.8 0.131971111111111
0.9 0.211675555555556
1 0.294557777777778
};
\addlegendentry{Kemeny vs Seq.-Wi.}
\end{axis}

\end{tikzpicture}}
		\caption{Veto}\label{fig:veto}
	\end{subfigure}%
	\qquad
	\begin{subfigure}[t]{0.475\textwidth}
		\centering 
		\resizebox{0.9\textwidth}{!}{
\begin{tikzpicture}[every plot/.append style={line width=3.5pt}]

\definecolor{color0}{rgb}{1,0.549019607843137,0}
\definecolor{color1}{rgb}{0.133333333333333,0.545098039215686,0.133333333333333}
\definecolor{color2}{rgb}{0.117647058823529,0.564705882352941,1}
\definecolor{color3}{rgb}{0.580392156862745,0,0.827450980392157}
\definecolor{color4}{rgb}{0.647058823529412,0.164705882352941,0.164705882352941}

\begin{axis}[
ylabel shift = 1pt,
legend columns=1, 
legend cell align={left},
legend style={
	fill opacity=0.8,
	draw opacity=1,
	draw=none,
	text opacity=1,
	at={(1.45,1)},
	line width=3pt,
	anchor=north,
	/tikz/column 2/.style={
		column sep=10pt,
	},
	font=\LARGE
},
legend image post style={line width =3pt},
every tick label/.append style={font=\LARGE}, 
label style={font=\LARGE},
tick align=outside,
tick pos=left,
x grid style={white!69.0196078431373!black},
xlabel={normalized dispersion parameter},
xmin=-0.05, xmax=1.05,
xtick style={color=black},
xtick={-0.2,0,0.2,0.4,0.6,0.8,1,1.2},
xticklabels={−0.2,0.0,0.2,0.4,0.6,0.8,1.0,1.2},
y grid style={white!69.0196078431373!black},
ylabel={normalized swap distance},
ymin=-0.006824, ymax=0.348295111111111,
ytick style={color=black},
ytick={-0.05,0,0.05,0.1,0.15,0.2,0.25,0.3,0.35},
yticklabels={−0.05,0.00,0.05,0.10,0.15,0.20,0.25,0.30,0.35}
]
\addplot [semithick, red!54.5098039215686!black]
table {%
0 0.332153333333333
0.1 0.0791466666666667
0.2 0.0324488888888889
0.3 0.0228066666666667
0.4 0.0227066666666667
0.5 0.0280755555555556
0.6 0.03816
0.7 0.0536977777777778
0.8 0.0813444444444444
0.9 0.140146666666667
1 0.208308888888889
};
\addlegendentry{Seq.-Lo. vs Seq.-Wi.}
\addplot [semithick, color0]
table {%
0 0.216391111111111
0.1 0.0511111111111111
0.2 0.0219711111111111
0.3 0.0157577777777778
0.4 0.0165355555555556
0.5 0.0213666666666667
0.6 0.02948
0.7 0.0422244444444444
0.8 0.0640755555555556
0.9 0.109271111111111
1 0.164742222222222
};
\addlegendentry{Seq.-Lo. vs Score}
\addplot [semithick, color1]
table {%
0 0.187575555555556
0.1 0.0315377777777778
0.2 0.0131666666666667
0.3 0.00996
0.4 0.0112866666666667
0.5 0.0157533333333333
0.6 0.0228311111111111
0.7 0.0332911111111111
0.8 0.0537711111111111
0.9 0.0977466666666667
1 0.149926666666667
};
\addlegendentry{Seq.-Wi. vs Score}
\addplot [semithick, color2, dashed]
table {%
0 0.266906666666667
0.1 0.0489244444444444
0.2 0.0198688888888889
0.3 0.0139466666666667
0.4 0.0143288888888889
0.5 0.0186466666666667
0.6 0.0262755555555556
0.7 0.0398311111111111
0.8 0.0630111111111111
0.9 0.112473333333333
1 0.178855555555556
};
\addlegendentry{Kemeny vs Seq.-Lo.}
\addplot [semithick, color3, dashed]
table {%
0 0.22168
0.1 0.0441066666666667
0.2 0.0186355555555556
0.3 0.0131755555555556
0.4 0.0138955555555556
0.5 0.0187955555555556
0.6 0.0283377777777778
0.7 0.04246
0.8 0.0690333333333333
0.9 0.122655555555556
1 0.193593333333333
};
\addlegendentry{Kemeny vs Score}
\addplot [semithick, color4, dashed]
table {%
0 0.210393333333333
0.1 0.03072
0.2 0.0127133333333333
0.3 0.00931777777777778
0.4 0.0100088888888889
0.5 0.0142111111111111
0.6 0.0220355555555556
0.7 0.0347066666666667
0.8 0.0583422222222222
0.9 0.107962222222222
1 0.173777777777778
};
\addlegendentry{Kemeny vs Seq.-Wi.}
\end{axis}

\end{tikzpicture}}
		\caption{Half}\label{fig:half}
	\end{subfigure}
	\caption{Pairwise average normalized swap distance between the Kemeny ranking and rankings produced by our three scoring-based methods for two different scoring systems on Mallows profiles with $100$ voters and $10$ candidates.}\label{fig:furtherrules}
\end{figure*}
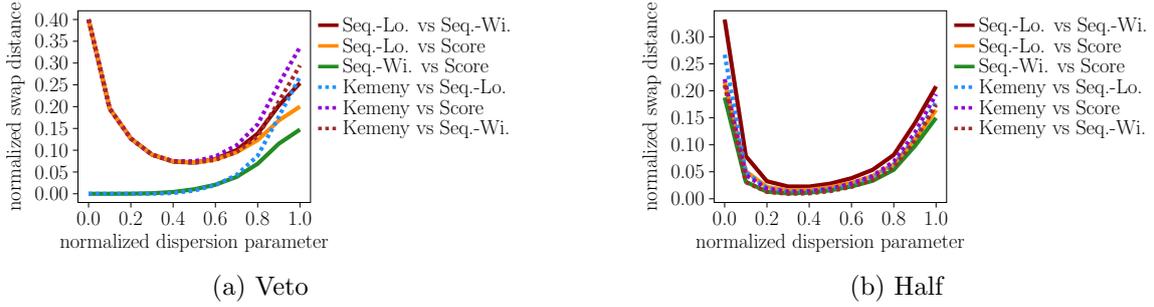

\subsection{Further Voting Rules}
In this section, we briefly examine two additional scoring systems. 
In accordance with our theoretical analysis, we start by examining ranking methods based on the Veto scoring system with scoring vector $(0,\dots, 0,-1)$. 
\Cref{fig:veto} shows the results of our experiment for the Mallows model (again we generated $\num{10000}$ profiles for each $\normphi\in \{0,0.1,\dots, 0.9,1\}$). 
Comparing \Cref{fig:veto} to the analogous plots for Plurality (see \Cref{fig:general-comparsion} and \Cref{fig:Kemeny-comparison}), we see that the results are very similar to each other after swapping the roles of \seqWiAbS and \seqLoAbS: 
For Plurality, \seqLoAb{Plurality} and \Score{Plurality} are closely related, while \seqWiAb{Plurality} produces results most similar to the Kemeny ranking. 
In contrast, for Veto, \seqWiAb{Veto} and \Score{Veto} are closely related and  \seqLoAb{Veto} is most similar to Kemeny. 
The switched role of \seqWiAbS and \seqLoAbS is to be expected, recalling our equivalence \Cref{lem:equiv}. 
(For profiles sampled using the Euclidean model, similar conclusions apply.)

Recalling that for Veto, \seqWiAbS is similar to \ScoreS, whereas for Plurality \seqLoAbS is similar to \ScoreS, we wanted to check the behavior of our rules in between these two extremes. 
Thus, we introduce a new scoring system, which we call \emph{Half}. This scoring system assigns one point to the first  $\floor*{\frac{m}{2}}$ candidates and zero points to all other candidates, where $m$ is the number of candidates.
We depict the results for profiles sampled from the Mallows model in \Cref{fig:half} (again we generated $\num{10000}$ profiles for each $\normphi\in \{0,0.1,\dots, 0.9,1\}$). 
Indeed, in this case \seqWiAbS and \seqLoAbS are both at the same distance to \ScoreS and at the same distance to the Kemeny ranking.
Nevertheless, naturally the rankings produced by \seqWiAb{Half} and \seqLoAb{Half} are still different. 
Remarkably, our three scoring-based ranking methods based on Half produce rankings that are closer to the Kemeny ranking then the rankings produced by any of our methods based on Plurality or Veto. 
This indicates the advantage of allowing voters to distinguish between many candidates to identify candidate strength.

\section{Additional Material for \Cref{sec:complexity-loser}}

\restatehere{parameterm}
\begin{proof}
It remains to prove the correctness of the algorithm described in the main body. 
For this, it is sufficient to prove that the recurrence relation is correct. 
To this end, assume a subset $C'\subseteq C$ of candidates is an elimination set as witnessed by the selected ranking $\succ$ and let $c:=\cand(\succ,|C'|)$. 
Then, as witnessed by $\succ$, $C'\setminus \{c\}$ is an elimination set, and no candidate has a higher $\vs$-score than $c$ after deleting all candidates from $C'\setminus \{c\}$. 
If $T[C']$ is set to true because $C'\setminus \{c\}$ is an elimination set (as witnessed by $\succ$) for some candidate $c\in C'$, then $C'$ is clearly an elimination set as we can eliminate candidates from $C'\setminus c$ in the first $|C'|-1$ rounds (breaking ties according to $\succ$) and $c$ in round $|C'|$.
\end{proof}

\subsection{Plurality}

Before we present our ETH-based lower bound for the parameter $m$, we state some relevant results.

\begin{proposition}
	[\citealp{amiri2021note}, extending results about subcubic vertex cover by \citealp{JS99} and \citealp{Kom18} ]
	\label{prop:eth-cubic-vertex-cover}
	If the Exponential Time Hypothesis (ETH) is true, there does not exist an algorithm solving \textsc{Cubic Vertex Cover} in time $2^{o(n)} \cdot  \mathrm{poly}(n)$, where $n$ is the number of vertices.
\end{proposition}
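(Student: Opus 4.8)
The plan is to prove the lower bound by exhibiting a \emph{linear} reduction from \textsc{3-SAT} to \textsc{Cubic Vertex Cover}, i.e.\ a polynomial-time reduction that maps a formula with $n$ variables and $m$ clauses to a $3$-regular graph on $O(n+m)$ vertices. Combined with the Sparsification Lemma of \citet{DBLP:journals/jcss/ImpagliazzoPZ01}, which under the ETH rules out any $2^{o(n+m)}$-time algorithm for \textsc{3-SAT}, such a reduction immediately yields the claim: a hypothetical $2^{o(N)}\cdot\mathrm{poly}(N)$-time algorithm on $N$-vertex cubic graphs, composed with the reduction, would decide \textsc{3-SAT} in $2^{o(n+m)}\cdot\mathrm{poly}(n,m)$ time, contradicting the ETH. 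Throughout I would work with \textsc{Independent Set} and transfer to \textsc{Vertex Cover} only at the very end, using the complementarity $\alpha(G)+\tau(G)=|V(G)|$ (a set is independent iff its complement is a vertex cover), which holds on the \emph{same} graph and hence preserves $3$-regularity.

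First I would normalize the formula so that every literal occurs only a constant number of times. Replace each variable $x$ appearing $d_x$ times by $d_x$ fresh copies $x^1,\dots,x^{d_x}$, assign one copy to each occurrence, and append the cyclic chain of equivalence clauses $(\neg x^i\vee x^{i+1})$ with indices taken modulo $d_x$, which forces all copies to share a common truth value. Since $\sum_x d_x\le 3m=O(m)$, the resulting bounded-occurrence formula still has $O(n+m)$ variables and clauses. Applying the classical clause-gadget reduction (a triangle per clause, a consistency gadget per variable, and an edge joining each occurrence to its clause vertex) then produces a graph $G$ with $|V(G)|=O(n+m)$ and, crucially, \emph{bounded maximum degree}, because each literal now appears a constant number of times.

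Next I would drive the maximum degree down to $3$ and then up to exactly $3$. To reach the subcubic regime, I apply the standard degree-reduction gadget: a vertex $v$ of degree $d$ is replaced by an even cycle of $2d$ vertices whose two alternating maximum independent sets encode ``$v$ in'' versus ``$v$ out'', with each of the $d$ ``in''-copies carrying a single incident edge so that the copies are rigidly forced to agree; as the degrees are already constant this multiplies the vertex count by only a constant factor. The delicate step---and the one I expect to be the \textbf{main obstacle}---is turning the resulting max-degree-$3$ graph into a genuinely $3$-regular (cubic) one. Each vertex of degree strictly below $3$ must be padded with a small, \emph{rigid} widget that raises its degree to exactly $3$ while contributing a fixed, predictable amount to every maximum independent set (equivalently, minimum vertex cover), so that the optimum value shifts only by a known additive constant and no shortcut solutions are introduced; one cannot simply add pendant edges, since those create degree-$1$ vertices. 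Designing gadgets that simultaneously enforce exact $3$-regularity, keep the vertex count linear, and remain solution-preserving is precisely the technical content supplied by \citet{JS99}, \citet{Kom18}, and \citet{amiri2021note}, and it is what separates the easy subcubic lower bound from the cubic one.

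Finally, composing the transformations above yields a single polynomial-time map from \textsc{3-SAT} instances to cubic graphs of size $O(n+m)$ whose minimum vertex cover determines satisfiability. Since every step blows up the instance size by only a constant factor, a $2^{o(N)}\cdot\mathrm{poly}(N)$-time algorithm for \textsc{Cubic Vertex Cover} would solve \textsc{3-SAT} in $2^{o(n+m)}\cdot\mathrm{poly}(n,m)$ time, contradicting the ETH and establishing the proposition.
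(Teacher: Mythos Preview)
The paper does not prove this proposition; it is stated as a known result from the literature (with attribution to \citet{amiri2021note}, building on \citet{JS99} and \citet{Kom18}) and is then used as a black box in the subsequent ETH lower bounds. There is therefore no ``paper's own proof'' to compare against.

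Your sketch is a faithful outline of the standard route to such a bound: Sparsification Lemma to get a $2^{o(n+m)}$ lower bound for \textsc{3-SAT}, then a size-linear chain of reductions through bounded-occurrence \textsc{3-SAT}, bounded-degree \textsc{Independent Set}/\textsc{Vertex Cover}, subcubic graphs, and finally exactly $3$-regular graphs. You correctly flag that the only genuinely nontrivial step is the last one---padding degree-deficient vertices with rigid gadgets that are themselves $3$-regular and shift the optimum by a known constant---and you correctly attribute that step to the cited works rather than claiming to have supplied it yourself. One small quibble: the even-cycle degree-reduction gadget you describe for going from bounded degree to maximum degree $3$ is more commonly realized as a path (or caterpillar) of copies rather than a cycle, but either variant works and the difference is cosmetic. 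Overall your plan is sound as a high-level reconstruction, though since the paper simply cites the result, the expected ``proof'' here is just the citation.
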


\begin{corollary}
	\label{prop:eth-regular-clique}
	If the ETH is true, there does not exist an algorithm solving \textsc{Regular Clique} (the \textsc{Clique} problem restricted to graphs where every vertex has the same degree) in time $2^{o(n)} \cdot  \mathrm{poly}(n)$, where $n$ is the number of vertices.
\end{corollary}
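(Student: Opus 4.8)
The plan is to obtain \Cref{prop:eth-regular-clique} by a vertex-preserving reduction from \textsc{Cubic Vertex Cover}, so that the lower bound of \Cref{prop:eth-cubic-vertex-cover} transfers directly. First I would recall the classical correspondence between vertex covers, independent sets, and cliques: in any graph $G$ on vertex set $V$, a set $S \subseteq V$ is a vertex cover of $G$ if and only if $V \setminus S$ is an independent set of $G$, which in turn holds if and only if $V \setminus S$ is a clique in the complement graph $\overline{G}$. Hence, writing $n = |V|$, the graph $G$ has a vertex cover of size at most $k$ if and only if $\overline{G}$ has a clique of size at least $n - k$.

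The key observation is that if $G$ is cubic (that is, $3$-regular), then its complement $\overline{G}$ is $(n-4)$-regular, and in particular is a regular graph (this also holds trivially in the degenerate case $G = K_4$, where $\overline{G}$ is edgeless and hence $0$-regular; for $n < 4$ there is no cubic graph). Therefore the map $(G, k) \mapsto (\overline{G},\, n-k)$ is a polynomial-time many-one reduction from \textsc{Cubic Vertex Cover} to \textsc{Regular Clique} which does not increase the number of vertices.

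Consequently, an algorithm solving \textsc{Regular Clique} in time $2^{o(n)} \cdot \mathrm{poly}(n)$ would, by composing it with this reduction, yield an algorithm solving \textsc{Cubic Vertex Cover} in time $2^{o(n)} \cdot \mathrm{poly}(n)$, contradicting \Cref{prop:eth-cubic-vertex-cover}, and hence contradicting the ETH. I do not foresee any real obstacle here; the only points that need to be stated carefully are that the complement of a $3$-regular graph is genuinely regular (with degree $n-4$) and that the complement can be computed in polynomial time without changing the vertex set, both of which are immediate.
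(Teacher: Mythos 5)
Your argument is correct and is essentially identical to the paper's proof: both pass to the complement graph, note that the complement of a cubic graph is regular, use the vertex cover / independent set / clique correspondence to translate ``vertex cover of size at most $k$'' into ``clique of size at least $n-k$'', and observe that the reduction preserves the number of vertices so the $2^{o(n)}$ lower bound from \Cref{prop:eth-cubic-vertex-cover} transfers. Your remark that the complement is $(n-4)$-regular is a harmless extra precision not present in the paper.
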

\begin{proof}
	Suppose there was such an algorithm. Let $G = (V,E)$ be a cubic graph, and $k$ be a target size for a vertex cover. Then the complement graph $\overline{G}$ is regular. Recall that a set $T \subseteq V$ is a vertex cover in $G$ if and only if $V \setminus T$ is an independent set in $G$ if and only if $V \setminus T$ is a clique in $\overline{G}$. Hence by applying the algorithm to find a clique of size at least $n - k$ in $\overline{G}$, we can find a vertex cover of size at most $k$ in $G$ in time $2^{o(n)} \cdot  \mathrm{poly}(n)$, contradicting \Cref{prop:eth-cubic-vertex-cover}.
\end{proof}

\restatehere{eth}
\begin{proof}
 We reduce from \textsc{Cubic Vertex Cover} 
 (given a graph where each vertex has degree three and an integer~$t$,
 the problem asks whether there is a \emph{vertex cover} of size~$t$,
 that is, a set of $t$~vertices such that each edge is incident to at
 least one of these vertices).
 
 Given a cubic graph~$G$ with $n$~vertices and $3n/2$~edges, and some integer~$t$,
 we will create an equivalent instance of \Winner for \seqLoAb{Plurality}.
 Since our reduction will only use $2n+3n/2+3$ candidates, this implies the claimed ETH-result by \Cref{prop:eth-cubic-vertex-cover}.
 
 \paragraph{Candidates} For each vertex~$i$, we create two \emph{vertex candidates}~$v_i$ and $v'_i$.
 For each edge~$j$, we create one \emph{edge candidate}~$e_j$.
 Moreover, we reference the edge candidates corresponding to edges incident to vertex~$i$ by~$e^i_1$, $e^i_2$, and $e^i_3$.
 Finally we have the candidates~$d$, $w$, and~$q$.
 
 \paragraph{Idea} The idea is as follows.
 In the first $n$~rounds, for each $i \in [n]$ either~$v_i$ or $v'_i$ is eliminated.
 Eliminating~$v_i$ will correspond to selecting the $i$th vertex as part of the vertex cover.
 In round~$n+1$, in order to make candidate~$d$ win the election,
 we must eliminate candidate~$w$, because this is the latest point where
 it has smaller Plurality score than~$d$.
 Eliminating any other candidate that can be eliminated in round~$n+1$ will result in~$w$
 having score larger than the score of~$d$ and this can never change afterwards.
 To be able to eliminate $w$~in round~$n+1$, however, we must ensure that the candidates~$e_j$, $j \in [3n/2]$,
 and candidate~$q$ have at least the same score as~$w$.
 This can only happen if the set $\{i \mid \text{$v_i$ is eliminated before round $n+1$}\}$ is a vertex cover.
 (Whenever some $v_i$~is eliminated, the edge candidate corresponding to edges incident to vertex~$i$ gain
  one point and reach at least the same score as~$w$.)
 Moreover the set $\{i \mid \text{$v'_i$ is eliminated before round $n+1$}\}$ must be of size at least~$n-t$, ensuring that we have only selected $t$ vertices to be part of the vertex cover.
 (Whenever some $v'_i$~is eliminated, candidate~$q$ gains three points and needs in total at least~$3(n-t)$
  additional points to reach at least the same score as~$w$ in round~$n+1$.)
 
 \paragraph{Voters} \quad We have the following voters.
 \begin{alignat*}{4}
  \text{$105n$ voters} \quad & d \succ w \succ \dots & \\
  \text{$99n$ voters}  \quad & w \succ d \succ \dots & \\
  \text{$99n -1$ voters} \quad & e_j \succ w \succ d \succ \dots && \forall j\in [m] \\
  \text{$99n -3(n-t)$ voters} \quad & q \succ w \succ d \succ \dots & \\
  \text{$60n-3$ voters} \quad & v_i \succ v'_i \succ w \succ d \succ \dots && \forall i \in [n] \\
  \text{$1$ voter} \quad & v_i \succ e^i_1 \succ w \succ d \succ \dots && \forall i \in [n] \\
  \text{$1$ voter} \quad & v_i \succ e^i_2 \succ w \succ d \succ \dots && \forall i \in [n] \\
  \text{$1$ voter} \quad & v_i \succ e^i_3 \succ w \succ d \succ \dots && \forall i \in [n] \\
  \text{$60n-3$ voters} \quad & v'_i \succ v_i \succ w \succ d \succ \dots && \forall i \in [n] \\
  \text{$3$ voters} \quad & v'_i \succ q \succ w \succ d \succ \dots &\:\:& \forall i \in [n]
 \end{alignat*}
 
 This completes the construction.
 
\paragraph{Key Observations}
 Observe that in round~$\ell, \ell \in [n]$ either $v_i$ or $v'_i$ for some $i \in [n]$
 must be eliminated, since these candidates have score~$60n$, while every other candidate has score
 at least $99n-3(n-t)>96n$.
 If some vertex candidate is eliminated, then the respective other vertex candidate gains
 $60n-3$~additional points and, hence, will not be eliminated before all edge candidates.
 In round~$n+1$ either some edge candidate~$e_j$ (with score between $99n-1$ and $99n+1$),
 candidate~$q$ (with score between $96n$ and $102n$),
 or candidate~$w$ (with score $99n$) is eliminated.
 If any candidate different from~$w$ is eliminated in round~$n+1$,
 than $w$~gains more than~$95n$ additional points and will finally win the election.
 If candidate~$w$ is eliminated in round~$n+1$, then candidate~$d$ wins the election.
 Independent of whether we have eliminated candidate $w$ in round $n+1$ or we are still in round $n+1$, next, all edge candidates and candidate~$q$ will be eliminated.
 Candidate~$w$ or, if $w$~is eliminated than $d$~receives the votes of all candidates eliminated after round~$n$.
 If $w$~is not eliminated, than $d$~has still score~$105n$ and will be eliminated next.
 Then, the remaining vertex candidates are eliminated and finally either~$w$ or $d$~wins.
 
\paragraph{Correctness}
 We show that $d$~is a winner for \seqLoAb{Plurality} of the constructed profile if and only if
 graph~$G$ contains a vertex cover of size~$t$.
 
 For the ``if''-part, assume that there is a vertex cover of size~$t$.
 Without loss of generality, let the first~$t$ vertices denote such a vertex cover.
 To see that~$d$ is a winner of the election, consider the following elimination order.
 In round $\ell, \ell \in [t]$, eliminate candidate~$v_\ell$.
 In round $\ell, t+1 \le \ell \le n$, eliminate candidate~$v'_\ell$.
 Now, each edge candidate has score at least~$99n$ and also candidate~$q$ has score $99n$.
 Thus, we next eliminate~$w$, then~$q$ and then the edge candidates in an arbitrary order
 that is consistent with their scores (some may have score~$99n$ being covered once,
 some have score~$99n+1$ being covered twice).
 Finally eliminate the remaining vertex candidates (with scores between $120n-6$ and $120n-3$)
 so that only candidate~$d$ remains and wins.
 It is easy to verify that this elimination ordering is indeed consistent with the Plurality
 scores in the respective rounds.
 
 For the ``only if''-part, recall the idea and key observations.
 Assuming~$d$ wins, it must be that~$w$ is eliminated in round~$n+1$.
 To do this, $V^*=\{i \mid \text{$v_i$ is eliminated before round $n+1$}\}$ must be a vertex cover,
 since each edge candidate must have gained at least one additional point in the first $n$ rounds to have a score of $99n$ in round $n+1$ where $w$~is eliminated.
 Moreover, since candidate~$q$ also needs score at least~$99n$ in round $n+1$ to allow $w$~to be eliminated in this round,
 it must hold that $\{i \mid \text{$v'_i$ is eliminated before round $n+1$}\}$ is of size at least~$n-t$.
 Thus, $V^*$~is a vertex cover of size at most~$t$.
\end{proof}

\subsection{Veto}

\restatehere{CoombsNP}
\begin{proof}
	We reduce from \textsc{Regular Clique}, i.e. \textsc{Clique} restricted to regular graphs (where all vertices have the same degree). Our reduction will imply NP-hardness and also the claimed ETH-result by \Cref{prop:eth-regular-clique}. Let $(G, k)$ be an instance of \textsc{Regular Clique}, where $G = (V,E)$ is regular and each vertex has degree $r$. We construct a profile on candidate set $C = \{d, w\} \cup V \cup \{ s_v : v \in V \}$ (the candidates from $\{ s_v : v \in V \}$ act as dummy candidates). The question is whether $d$ is a Coombs winner.
	\begin{alignat*}{4}
		\text{$k(k-2) + r + 1$ voters} &\quad & \cdots \succ d & \\
		\text{$r + 1$ voters} &\quad &  \cdots \succ w & \\
		\text{$k(k-2)$ voters} &\quad & \cdots \succ s_v \succ v & \quad\text{for each $v \in V$} \\
		\text{1 voter} &\quad & \cdots \succ d \succ v & \quad\text{for each $v \in V$} \\
		\text{1 voter} &\quad & \cdots \succ w \succ u \succ v & \quad\text{for each $v \in V$ and each $u \in V$ with $\{v,u\} \in E$}
	\end{alignat*}
	In these votes, ``$\cdots$'' is replaced by all unmentioned candidates according to some common canonical order in which $d$ is ranked first. To avoid doubt, for each edge $\{v,u\} \in E$, we introduce two votes of the bottom type, one with $v \succ u$ and one with $u \succ v$. 
	For convenience (to avoid talking about negative numbers), we say that the \emph{bottom count} of a candidate is the number of times the candidate is ranked in last position. Thus the bottom count is the negative of the veto score, and Coombs proceeds by eliminating candidates with the highest bottom count. 
	Throughout the proof, for a vertex $v \in V$, we write $\text{Nghbhd}(v) = \{u \in V : \{u,v\} \in E\}$ for the neighborhood of $v$ in $G$.
	\\
	
	The intuition behind the construction is that to avoid eliminating $d$ (so as to make $d$ a winner), we need to first eliminate $w$, despite its initially low bottom count. The way to increase its count is to eliminate vertices forming a dense subgraph, since then many `edge' votes (of the bottom type) are transferred to $w$. The rule will start by eliminating vertices. Once some are eliminated, it is then only possible to eliminate vertices which are adjacent to all previously eliminated vertices. Thus, elimination sequences encode cliques. \\
	
	We start by proving the forward direction. Suppose $G$ contains a clique $T = \{ v_1, \dots, v_k \}$ of size $k$. Note that in the initial profile, the bottom counts are:
	\begin{itemize}[itemsep=0pt]
		\item $d$ has count $k(k-2) + r + 1$,
		\item $w$ has count $r + 1$,
		\item every $v \in V$ has count $k(k-2) + r + 1$,
		\item every $s_v$, $v \in V$, has count $0$.
	\end{itemize}
	Thus, we can eliminate $v_1$. After that,
	\begin{itemize}[itemsep=0pt]
		\item $d$ has count $k(k-2) + r + 2$,
		\item $w$ has count $r + 1$,
		\item every $v \in \text{Nghbhd}(v_1)$ has count $k(k-2) + r + 2$,
		\item every $v \not\in \text{Nghbhd}(v_1)$ has count $k(k-2) + r + 1$,
		\item $s_{v_1}$ has count $k(k-2)$,
		\item every $s_v$, $v \neq v_1$, has count $0$.
	\end{itemize}
	Thus, we can eliminate $v_2$ (since it is a neighbor of $v_1$). After this, since we have eliminated both endpoints of the edge $\{ v_1, v_2 \}$, the votes of the two voters corresponding to that edge are transferred to $w$. 
	
	In fact, we can eliminate candidates in the order $v_1, \dots, v_k$. After we have eliminated candidates $v_1, \dots, v_p$, $p \le k$, we have the following situation.
	\begin{itemize}[itemsep=0pt]
		\item $d$ has count $k(k-2) + r + 1 + p$,
		\item $w$ has count $r + 1 + p(p-1)$,
		\item every $v \in V$ has count $k(k-2) + r + 1 + |\{v_1, \dots, v_p\} \cap \text{Nghbhd}(v)|$,
		\item $s_{v_1}, \dots, s_{v_p}$ have count $k(k-2)$,
		\item every other $s_v$ has count $0$.
	\end{itemize}
	In particular, if $p < k$, then $v_{p+1}$ can be eliminated in the next step, since its bottom count is $k(k-2) + r + 1 + p$. After all members of the clique have been eliminated, we reach bottom counts
	\begin{itemize}[itemsep=0pt]
		\item $d$ has count $k(k-2) + r + 1 + k = k(k-1) + r + 1$,
		\item $w$ has count $r + 1 + k(k-1)$,
		\item every $v \in V$ has count $k(k-2) + r + 1 + |\{v_1, \dots, v_k\} \cap \text{Nghbhd}(v)| \le k(k-1) + r + 1$,
		\item $s_{v_1}, \dots, s_{v_k}$ have count $k(k-2)$,
		\item every other $s_v$ has count $0$.
	\end{itemize}
	Hence, $w$ is a candidate with highest bottom count, and thus now be eliminated. All its votes are transferred to the last (uneliminated) candidate in the canonical order, call it $x$. Because previously $w$ had maximum bottom count, $x$ now has maximum bottom count and can be eliminated. This argument applies repeatedly, and hence from now on the remaining candidates are eliminated in the canonical order, finishing with $d$ (which was placed first in the canonical order), so $d$ is a Coombs winner. \\
	
	Conversely, suppose $d$ is a Coombs winner. Thus, there is a way to eliminate candidates so that $d$ is eliminated last. Now, at each of the first $k$ elimination steps (assuming that $d$ is eliminated last) the set of eliminateable candidates is a subset of $\{d\} \cup V$. To see this by induction, note that it holds at the first step. Further, after we have eliminated some candidates $W \subseteq V$ with $|W| = p < k$, writing $e_W$ for the number of edges between vertices in $W$, the resulting bottom counts are
	\begin{itemize}[itemsep=0pt]
		\item $d$ has count $k(k-2) + r + 1 + p$,
		\item $w$ has count $ r + 1 + 2\cdot e_W$,
		\item every $v \in V$ has count $k(k-2) + r + 1 + |W \cap \text{Nghbhd}(v)| $,
		\item for $v \in W$, $s_v$ has count $k(k-2)$,
		\item every other $s_v$ has count $0$.
	\end{itemize}
	Since $2\cdot e_W \le 2 \cdot \binom{p}{2} = p(p-1) < k(k-2) + p$ (since $p < k$), the bottom count of $w$ is smaller than the score of $d$, so the candidate $w$ cannot be eliminated in round $p$; also clearly no candidate $s_v$ can be eliminated. Thus, only a subset of $\{d\} \cup V$ can be eliminated at this step. In fact, we see that a vertex $v \in V \setminus W$ can be eliminated at this step if and only if $v$ is adjacent to all candidates from $W$. It follows that if during the first $k$ steps we do not eliminate $d$, then we eliminate $k$ vertices. These vertices must form a clique in $G$.
\end{proof}

\restatehere{CoombsWhard}
\begin{proof}
	We prove this statement by proving the hardness of an equivalent problem for \seqWiAb{Plurality}: 
	Specifically, we show that it is W[1]-hard parameterized by $n$ to decide whether given a ranking profile $P$ and a designated candidate $d$ there is a selected ranking by \seqWiAb{Plurality} where $d$ is ranked last. 
	Then by applying \Cref{lem:equiv} the theorem follows.

	In some round, we call the elimination of a candidate $c$ \emph{valid} if $c$ is a Plurality winner in the election from this round.
	
	We say that a candidate is \emph{present} in some round if the candidate has not been deleted in some previous round.
	
	For a set $S$ and an element $s\in S$, we write $S-s$ as a shorthand notation for $S\setminus \{s\}$
	
	\paragraph{Construction}
	We prove hardness by a reduction from the W[1]-hard \textsc{Multicolored Independent Set} problem parameterized  by the solution size~$\ell$.
	In \textsc{Multicolored Independent Set}, we are given a $\ell$-partite graph $(V^1 \cupdot V^2 \cupdot \dots \cupdot V^\ell, E)$  and the question is whether there is an independent set~$X$ of size $\ell$ with $X\cap V^j\neq \emptyset$ for 
	all $j\in [\ell]$.
	To simplify notation, we assume that $V^j = \{v^j_1, \dots, v^j_\nu\}$ for all~$j\in [\ell]$.
	We refer to the elements of $[\ell]$ as \emph{colors} and say that a vertex $v$ has
	color~$j\in [\ell]$ if~$v\in V^j$. 
	
	For each $j\in [\ell]$ and $i\in [\nu]$, we introduce two \emph{vertex candidates} $c_i^j$ and $q_i^j$. 
	Moreover, for each edge $e\in E$, we introduce an \emph{edge candidate} $f_e$. Let $F=\cup_{e\in E} f_e$. 
	For $j\in [\ell]$ and $i\in [\nu]$, let $F^j_{i}$ be the set of all edge candidates corresponding to edges incident to $v_i^j$. 
	Moreover, we introduce $\ell$ \emph{blocker candidates} $B=\{b^1,\dots, b^{\ell}\}$. 
	Lastly, we add \emph{dummy candidates} $G=\{g_1,\dots, g_4 \}$ and $T=\{t_1,\dots, t_4\}$, and add the designated candidate $d$. 
	
	For a subset $C'$ of candidates, let $[C']$ be the lexicographic strict ordering of the candidates in $C'$. 
	In particular let $[T]$ be $t_1\succ t_2\succ t_3\succ t_4$, $[B]$ be $b^1\succ \dots \succ b^{\ell}$, and $[B-b^j]$ be $b^1\succ \dots \succ b^{j-1}\succ b^{j+1}\succ \dots \succ  b^{\ell}$.

	We now describe the ranking profile.
	We complete each ranking by appending the remaining candidates in an arbitrary order. 
	First for each color $j\in [\ell]$, we introduce two \emph{color rankings} and refer to them as the \emph{first} and \emph{second} color ranking: 
	\begin{align*}
	c^j_1\succ [F^j_1]\succ q^j_1\succ c^j_2\succ [F^j_2]\succ q^j_2\succ \dots \succ\\ c^j_{\nu}\succ [F^j_{\nu}]\succ q^j_{\nu}\succ b^{j}\succ [B- b^{j}]\succ d\\
	q^j_{\nu}\succ[F^j_{\nu}]\succ c^j_{\nu}\succ q^j_{\nu-1}\succ[F^j_{\nu-1}]\succ c^j_{\nu-1}\succ \dots \succ\\ q^j_{1}\succ[F^j_{1}]\succ c^j_{1}\succ b^{j}\succ [B- b^{j}]\succ d
	\end{align*}
	
	Moreover, we introduce four \emph{global rankings} as depicted in \Cref{fig:Coombs-NP} (note that the last three rankings only ``differ'' in the beginning).
	
	\begin{figure*}
		\resizebox{\linewidth}{!}{
			\begin{minipage}{\linewidth}
				\begin{align*}
					g_1\succ g_2\succ g_3\succ g_4\succ  [T] \succ c_\nu^1\succ c_{\nu-1}^1 \succ \dots \succ c_{1}^1\succ
					c_{\nu}^2\succ \dots \succ  c_{1}^2\succ \dots \succ  c_{\nu}^{\ell}\succ \dots \succ c_{1}^{\ell} \succ [B] \succ d \\
					g_2\succ g_3\succ g_4\succ [T] \succ c_1^1\succ q_1^1\succ c_2^1 \succ q_2^1 \succ \dots \succ c_{\nu}^1\succ q_{\nu}^1\succ c_1^2\succ   q_1^2\succ \dots \succ  c_{\nu}^2\succ q_{\nu}^2\succ \dots \succ  c_1^{\ell}\succ q_1^{\ell}\succ \dots \succ  c_{\nu}^{\ell} \succ q_{\nu}^{\ell} \succ [B]\succ d  \\
					g_3\succ g_4\succ [T] \succ c_1^1\succ q_1^1\succ c_2^1 \succ q_2^1 \succ \dots \succ c_{\nu}^1\succ q_{\nu}^1\succ c_1^2\succ   q_1^2\succ \dots \succ   c_{\nu}^2\succ q_{\nu}^2\succ \dots \succ  c_1^{\ell}\succ q_1^{\ell}\succ \dots \succ  c_{\nu}^{\ell} \succ q_{\nu}^{\ell} \succ [B]\succ d  \\
					g_4\succ [T] \succ c_1^1\succ q_1^1\succ c_2^1 \succ q_2^1 \succ \dots \succ c_{\nu}^1\succ q_{\nu}^1\succ c_1^2\succ   q_1^2\succ \dots \succ c_{\nu}^2\succ q_{\nu}^2\succ \dots \succ  c_1^{\ell}\succ q_1^{\ell}\succ \dots \succ  c_{\nu}^{\ell} \succ q_{\nu}^{\ell} \succ [B]\succ d
				\end{align*}%
			\end{minipage}%
		}
		\caption{Global rankings from the construction for \Cref{th:Coombs-NP}.}\label{fig:Coombs-NP}
	\end{figure*}

	Lastly, we add five \emph{dummy rankings} (these will be the rankings that contribute to the Plurality score of $d$ at some point): 
	\begin{align*}
	&t_1\succ t_2\succ t_3\succ t_4\succ d  \\
	&t_2\succ t_3\succ t_4\succ d  \\
	&t_3\succ t_4\succ d \\
	&t_4\succ d\\
	& d
	\end{align*}

	All rankings are completed arbitrarily.
	The general intuition is that we need to eliminate $g_1$ in some ``early'' round where all candidates still have a Plurality score of at most one, as $g_1$ is only ranked before $d$ in one ranking. 
	The elimination of $g_1$ then triggers immediately some uniquely determined follow-up eliminations (specifically of all candidates in $G\cup T$). 
	After that $d$ has a Plurality score of five. 
	As $d$ needs to be eliminated last, all candidates that are still present after this need to have at least Plurality score five when they are eliminated. 
	As each edge candidate appears only in four rankings before $d$, this implies that all edge candidates must have already been deleted. 
	Moreover, in Statement 5 of \Cref{claim:back}, we prove that for each color $j\in [\ell]$,  there is some $i\in [\nu]$ such that $c_{i_j}^j$ and $q_{i_j}^j$ are still present in the round where $g_1$ is eliminated. 
	As all edge candidates must have been deleted before, from this one can conclude that $\{v^1_{i_1}, \dots, v^{\ell}_{i_{\ell}}\}$ is an independent set. 
	
	\paragraph{Forward Direction}
	For the forward direction, assume that we are given a multicolored independent set $V'=\{v_{i_1}^1,\dots, v_{i_{\ell}}^{\ell}\}$. 
	
	We now describe a valid elimination order of the candidates for which $d$ is eliminated last.
	We proceed in four phases.
	
	In the first phase every candidate has Plurality score at most one in each round and it is thus valid to delete a candidate if it is ranked first in at least one ranking. 
	For each $j\in [\ell]$, we do the following:
	We delete all candidates that are ranked before $c_{i_j}^j$ in the first color ranking for color $j$ in the order in which they appear in this ranking. 
	After that we delete all candidates that are ranked before $q_{i_j}^j$ in the second color ranking for color $j$ in the order in which they appear in this ranking.

	As $V'$ is an independent set, after this, for each $j\in [\ell]$ the color rankings for color $j$ are: 
	\begin{align*}
	&c^j_{i_j}\succ  q^j_{i_j}\succ b^{j}\succ [B- b^{j}]\succ d\\
	&  q^j_{i_j}\succ  c^j_{i_j}\succ b^{j}\succ [B- b^{j}]\succ d
	\end{align*}
	At the end of phase one, every candidate has a Plurality score of at most one. 
	
	Subsequently in the second phase, we start by eliminating candidate $g_1$. 
	After that, for the next few rounds the elimination order is unique because one candidate is the unique Plurality winner. 
	Specifically, afterwards we eliminate $g_2$, then $g_3$, then $g_4$, then $t_1$, then $t_2$, then $t_3$, and lastly $t_4$. 
	Afterwards, $d$ is ranked in the first position in the five dummy rankings. 
	The global rankings became: 
	$$c^1_{i_1}\succ c^2_{i_2} \succ \dots\succ  c^{\ell}_{i_{\ell}}\succ [B]\succ d,$$ 
	and three times: 
	$$c^1_{i_1}\succ q^1_{i_1}\succ c^2_{i_2}\succ q^2_{i_2} \succ \dots\succ c^{\ell}_{i_{\ell}}\succ q^{\ell}_{i_{\ell}}\succ [B]\succ d.$$

	In the third phase, for $j=1$ to $j=\ell$, we first eliminate $c^{j}_{i_{j}}$ and then $q^{j}_{i_{j}}$ (and subsequently increment $j$ by one).
	To argue why all these eliminations are valid observe that in each round in this phase the Plurality score of all candidates is at most five:
	For $d$ this follows directly from the fact that $d$ appears after the blocker candidates in all but five rankings. 
	For the blocker candidates, observe that in each round in the third phase at least one vertex candidate is present. 
	Thus, no blocker candidate is ever ranked in first place in one of the last three global rankings. 
	Ignoring the last three global rankings and all rankings where the blocker candidates appear after $d$, each blocker candidate appears in at most three rankings before all other blocker candidates. 
	Thus, each blocker candidate has at most a Plurality score of three in some round in this phase. 
	To see why vertex candidates have a Plurality score of at most give, fix some $j\in [\ell]$.
	For candidate  $q^{j}_{i_{j}}$ the statement clearly holds, as $q^{j}_{i_{j}}$ only appears in five rankings before $d$. 
	For candidate $c^j_{i_j}$, observe that $c^j_{i_j}$ appears before $d$ in six rankings; however, in one of these rankings $c^j_{i_j}$ is ranked after $q^{j}_{i_{j}}$. 
	As $q^{j}_{i_{j}}$ is eliminated after $c^j_{i_j}$, the Plurality score of $c^j_{i_j}$ is at most five in each round.  
	
	Using that the Plurality score of all candidates in each round in the third phase is upper bounded by five, we now argue why all eliminations are valid. 
	For this, let us examine the situation for $j=1$ (so the first round in the third phase): 
	As $c^1_{i_1}$ is ranked first by the four global rankings and the first color ranking for color $1$, $c^1_{i_1}$ has Plurality score five and eliminating it is valid. 
	Subsequently, $q^1_{i_1}$ is ranked first by the last three global rankings and both color rankings for color $1$. 
	Thus, $q^1_{i_1}$ has Plurality score five and eliminating it is valid. 
	The same argument also applies for increasing $j$, establishing the validity of this phase. 
	After the third phase only blocker candidates and $d$ remain.
	
	In the fourth phase, we eliminate $b^i$ for $i=1$ to $i=\ell$.
	The designated candidate $d$ has a Plurality score of five in each round of the fourth phase. 
	Note that after the third phase, $b^1$ has six Plurality points and is thus the unique Plurality winner. 
	After eliminating $b^1$, $b^2$ has eight Plurality points and is thus the unique Plurality winner. 
	The same reasoning applies until $i=\ell$.
	Afterwards, only $d$ is left, which completes the argument.

	\paragraph{Backward direction}
	For the backward direction, assume that there is an execution of \seqWiAb{Plurality} such that $d$ is eliminated in the last round. 
	We will now reason about the elimination order of the candidates in this execution of \seqWiAb{Plurality}. 
	Let $x$ be the round in which candidate $g_1$ is eliminated. 
	We will now prove a series of claims that are the cornerstone of the proof of correctness: 
	\begin{claim} \label{claim:back}
		Assume that $d$ is eliminated last and let $x$ be the round in which $g_1$ is eliminated. Then, 
		\begin{enumerate}
			\item Every candidate has at most Plurality score one in round $x$. 
			\item All candidates from $G\cup T$ are present in round $x$. Every candidate which is not part of $G\cup T$ and that is present in round $x$ has a Plurality score of at least five in the round in which it is eliminated. 
			\item All candidates from $F$ have been eliminated before round $x$. 
			\item For each $j\in [\ell]$, there are $i,i'\in [\nu]$ such that $c_{i}^j$ and $q_{i'}^j$ are present in round $x$. 
			\item For each $j\in [\ell]$, there is some $i\in [\nu]$ such that both $c_{i}^j$ and $q_{i}^j$ are present in round $x$. 
		\end{enumerate}
	\end{claim}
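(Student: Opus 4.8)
The plan is to establish Statements~1--5 in order, using throughout that $d$ is never eliminated (hence present in every round) and that $g_1$ is a Plurality winner in round~$x$. For Statement~1: in every ranking other than the first global ranking the candidate $g_1$ is unmentioned, hence lies in the arbitrary tail below the explicitly listed $d$; so while $d$ is present (which is always), $g_1$ can head only the first global ranking, and therefore has Plurality score exactly~$1$ in round~$x$. Since $g_1$ is a winner there, every candidate has score at most~$1$ in round~$x$.

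For Statement~2, the first step is to show $g_1$ is eliminated before each of $g_2,g_3,g_4,t_1,\dots,t_4$. If one of these --- say $g_i$ or $t_i$ --- were eliminated first, in some round $y<x$, then at round~$y$ it heads exactly its ``own'' ranking (the $i$-th global or dummy ranking) and sits below $d$ elsewhere, so its score is~$1$, hence every candidate has score $\le 1$ in round~$y$; deleting it promotes a single candidate to score~$2$, who is then the unique strict winner, and iterating this yields a forced cascade eliminating all of $g_2,\dots,t_4$, after which $d$ heads all five dummy rankings (score $\ge 5$) while everyone else has score $\le 1$, so $d$ is the unique strict winner and is forced out while $g_1$ and others remain --- contradicting that $d$ is last. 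Hence all of $G\cup T$ are present in round~$x$, and the same cascade applied from round~$x$ on forces $g_2,g_3,g_4,t_1,\dots,t_4$ to be eliminated in that order in the seven rounds after~$x$. Once $T$ is gone, $d$ heads all five dummy rankings, so $d$ has score $\ge 5$ in every later round; thus a candidate not in $G\cup T$ that is present in round~$x$ survives these seven forced rounds and is eliminated only afterwards, at a round where $d$ already has score $\ge 5$, so being a winner it must itself have score $\ge 5$.

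Statement~3 then follows immediately: an edge candidate $f_e$ lies above $d$ only in the (at most four) color rankings of its two endpoints, so its score never exceeds~$4$, and by Statement~2 it cannot be present in round~$x$. For Statements~4 and~5 the relevant structure is that, by Statement~3, the color rankings restricted to the candidates present in round~$x$ contain only vertex candidates, blockers, and $d$, and that for each color~$j$ the first and second color rankings list the set $Q_j$ of surviving color-$j$ vertex candidates in opposite interleaved orders, with the blocker $b^j$ directly below them.

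The argument for Statements~4 and~5 combines three observations. First (via Statement~1): $Q_j$ is neither empty nor a singleton, since otherwise the common top survivor of the two color rankings for color~$j$ --- a blocker, or the unique surviving $c^j_i$ or $q^j_i$ --- would have score $\ge 2$ in round~$x$; a refinement (if $Q_j$ held two $c$'s $c^j_{a_1},c^j_{a_2}$ with $a_1<a_2$ and no $q$, then $q^j_{a_1}$ could never have been eliminated, as it lies below a present candidate in both color rankings and in the global rankings) shows $Q_j$ contains at least one $c$-candidate and one $q$-candidate --- this is Statement~4, and in particular no color is exhausted in any round up to~$x$. Second: before round~$x$ a candidate $q^j_i$ (resp.\ $c^j_i$) can be a Plurality winner only while it heads the second (resp.\ first) color ranking for its color, since it sits below $d$ everywhere else and below the head of the global rankings (the colors being unexhausted); this rules out $Q_j$ containing a ``crossed'' pair $c^j_a,q^j_b$ with $a<b$, for then $q^j_a$ would be uneliminable yet $q^j_a\notin Q_j$. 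Third: after the cascade $d$ has score $\ge 5$, so some surviving vertex candidate must also reach score $\ge 5$; counting how head-of-ranking votes can accumulate (at most three ``for free'' from the now-identical second, third, and fourth global rankings, the remaining two forcing $Q_j$ to have collapsed to a single index) pins $Q_j$ down to a matched pair $\{c^j_i,q^j_i\}$ for each~$j$ --- this is Statement~5. The main obstacle is this last case analysis: one must carefully track which candidate heads which ranking at each stage --- before, during, and after the cascade --- and argue both that ``bad'' configurations of $Q_j$ leave some candidate uneliminable before round~$x$ and that they make $d$ the unique strict winner afterwards; by contrast Statements~1--3 are routine once the ``below $d$ in the tail'' observation is in place.
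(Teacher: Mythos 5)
Your treatment of Statements 1--3 is sound and essentially matches the paper's: $g_1$ heads only the first global ranking, the forced cascade through $G\cup T$ gives $d$ a score of at least five from round $x+8$ onwards (for an early elimination of a member of $G\cup T$ the paper reaches its contradiction slightly differently, by noting that the cascade would already give $d$ score at least two in round $x$, violating Statement~1), and edge candidates are capped at four points. The problems are in Statements 4 and 5. For Statement 4, your refinement only excludes the case where all surviving colour-$j$ vertex candidates are $c$-candidates; to conclude that $Q_j$ contains both a $c$ and a $q$ you must also exclude the all-$q$ case, which you never address. It can be handled (e.g.\ by showing that $c^j_{b}$, for $b$ the largest surviving $q$-index, is permanently blocked and hence could never have been eliminated, or, as the paper does, by a mutual-blocking argument showing that two surviving $q$'s cap each other at four points forever, contradicting Statement~2), but as written the claimed conclusion does not follow.

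For Statement 5 the argument does not go through. Your ``crossed pair'' claim is incorrect: from $c^j_a,q^j_b\in Q_j$ with $a<b$ one cannot infer $q^j_a\notin Q_j$, so there is no contradiction --- the correct conclusion of your own blocking observation is that $q^j_a$ is uneliminable and hence \emph{is} in $Q_j$, which in that case already produces the matched pair $(c^j_a,q^j_a)$. What this leaves open is exactly the hard case: every surviving $q$ has strictly smaller index than every surviving $c$ (take $a=\min\{i: c^j_i\in Q_j\}$ and $b=\max\{i:q^j_i\in Q_j\}$ with $b<a$). The paper disposes of it by noting that among the six rankings in which $c^j_a$ or $q^j_b$ precedes $d$, each of the two candidates is preceded by the other in all but a few, so while both are present neither can reach score five; by Statement~2 whichever is eliminated first would need score five, a contradiction. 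Your ``third observation'' gestures at a head-of-ranking count after the cascade but is never carried out (you flag it yourself as the main obstacle), and it aims at the stronger, unnecessary, and unproved claim that $Q_j$ collapses to exactly one matched pair. As submitted, the proofs of Statements 4 and 5 are incomplete.
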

	\begin{claimproof}
		
		\textbf{Proof of Statement 1.} As $g_1$ is only ranked once before $d$ and $d$ is eliminated last, $g_1$ has a Plurality score of one in round $x$. 
		Thus, all candidates have a Plurality score of at most one in round $x$.

		\textbf{Proof of Statement 2.} 
		We prove the following from which Statement 2 directly follows, as $d$ is the last candidate to be eliminated: 
		In rounds $x$ to $x+8$ exactly the candidates $G\cup T$ are eliminated. The  Plurality score of $d$ is at least five in every round after round $x+8$.

		Assume that no candidate from $G\cup T$ has been eliminated before round $x$. 
		Then, after $g_1$ is eliminated, in round $x+1$ $g_2$ has Plurality score two and all other candidates have a Plurality score of at most one. 
		Thus, $g_2$ will be eliminated. 
		Following this reasoning, $g_3$ will be eliminated in round $x+2$, $g_4$ in round $x+3$, $t_1$ in round $x+4$, $t_2$ in round $x+5$, $t_3$ in round $x+6$, and $t_4$ in round $x+7$. 
		Afterwards $d$ is ranked first in the five dummy rankings. 
		
		For the sake of contradiction, assume that some candidate from $G\cup T$ is eliminated before round $x$. 
		Let $h$ be the first candidate from $G\cup T$ that is eliminated. 
		Then in the round where $h$ is eliminated $h$ has a Plurality score of one, as in all rankings where $h$ is not ranked first and appears before $d$ it is ranked after some candidates from $G\cup T$. 
		As the elimination of $h$ distributes one Plurality point to some other candidate from $G\cup T$, the elimination of $h$ triggers the above described elimination procedure from this candidate onwards ultimately leading to $d$ gaining at least one Plurality point. 
		Accordingly, $d$ has a Plurality score of at least two in round $x$ in this case, a contradiction to Statement 1.

		\textbf{Proof of Statement 3.}
		Assume that for some $e\in E$, $f_e$ is eliminated after round $x-1$. 
		Observe that each edge candidate appears only in four rankings before $d$ (the four color rankings of colors of its endpoints). 
		Thus, before $d$ is eliminated the Plurality score of $f_e$ is at most four, a contradiction to Statement 2 and $d$ being eliminated last.

		\textbf{Proof of Statement 4.} 
		Fix some $j\in [\ell]$.
		We prove the statement in three steps by first excluding that no vertex candidate for color $j$ is present, then excluding that only vertex candidates of the form $q_i^j$ are present, and finally excluding that only vertex candidates of the form $c_i^j$ are present. 
		From these three parts the statement directly follows.
		
		First, assume for the sake of contradiction that for all $i\in [\nu]$, $c_{i}^j$ and $q_{i}^j$ have been eliminated before round $x$. 
		By this and Statement 3, it follows that either some blocker candidate or $d$ is ranked first in the two color rankings for color $j$ in round $x$. 
		However, this implies that this candidate has Plurality score at least two in round $x$, a contradiction to Statement 1. 
		
		Second, assume for the sake of contradiction that for all $i\in [\nu]$ $c_{i}^j$ has been eliminated before round $x$ but that there is some $i'\in [\nu]$ such that $q_{i'}^j$ is present in round $x$.
		
		We make a case distinction based on whether $q_{i'}^j$ is the only present vertex candidate for this color or not: 
		We claim that if there is an $i''\in [\nu]-i'$ such that $q_{i''}^j$ is present in round $x$, then the Plurality score of both $q_{i'}^j$ and $q_{i''}^j$ is at most four in any round before $d$ is eliminated. 
		To see this note that there are only five rankings in which  $q_{i'}^j$ and $q_{i''}^j$ are ranked before $d$ and that in fact these are the same five rankings for both. 
		However, for the two color rankings for color $j$ it holds that in one of them $q_{i'}^j$ is ranked before $q_{i''}^j$ and in the other $q_{i''}^j$ is ranked before $q_{i'}^j$. 
		Thus, as long as both $q_{i'}^j$ and $q_{i''}^j$ are not eliminated, both of them can have at most four Plurality points. 
		As $q_{i'}^j$ and $q_{i''}^j$ are present in round $x$, using Statement 2 it follows that none of $q_{i'}^j$ and $q_{i''}^j$ can reach a Plurality score of five before $d$'s elimination, a contradiction to $d$ being eliminated last. 
		
		Otherwise for all $i'' \in [\nu]-i'$ $q_{i''}^j$ has been eliminated before round $x$ (and by our initial assumption for all $i\in [\nu]$ $c_{i}^j$ has also been eliminated before round $x$). Using Statement 3, it follows that $q_{i'}^j$ is ranked first in both color rankings for color $j$ in round $x$, a contradiction to Statement 1.

		Third, assume for the sake of contradiction that for all $i\in [\nu]$ $q_{i}^j$ has been eliminated before round $x$ but that there is some $i'\in [\nu]$ such that $c_{i'}^j$ has not been eliminated before round $x$.
		We make a case distinction similar as above. 
		We claim that if there is a $i''\in [\nu]-i'$ such that $c_{i''}^j$ is present in round $x$, then both $c_{i'}^j$ and $c_{i''}^j$ have at most four Plurality points in any round before $d$ is eliminated.
		Assume without loss of generality hat $i'<i''$.
		Both $c_{i'}^j$ and $c_{i''}^j$ are only ranked before $d$ in the two color rankings for color $j$ and the four global rankings. 
		However, in four of these six rankings $c_{i'}^j$ is ranked before $c_{i''}^j$ (i.e., the first color ranking for color $j$ and the last three global rankings), while in the other two $c_{i''}^j$ is ranked before $c_{i'}^j$. 
		Thus, as long as $c_{i'}^j$, $c_{i''}^j$, and $d$ are present, both $c_{i'}^j$ and $c_{i''}^j$ have at most four Plurality points. 
		With the help of Statement 2, we again reach a contradiction to $d$ being eliminated last.

		Otherwise for all $i'' \in [\nu]-i'$, $c_{i''}^j$ has been eliminated before round $x$. Using Statement 2, it follows that $c_{i'}^j$ is ranked first in both color rankings for color $j$ in round $x$, a contradiction to Statement 1.

		\textbf{Proof of Statement 5.} 
		Fix some $j\in [\ell]$. 
		Let $i'$ be the smallest $i$ such that $c_i^j$ is present in round $x$ (form Statement 4 we know that such an $i$ needs to exist). 
		Assume for the sake of contradiction that there is some $i''< i'$ such that $q_{i''}^j$ is present in round $x$. 
		We argue that in this case neither $c_{i'}^j$ nor $q_{i''}^j$ can ever reach five Plurality points before $d$ is eliminated, a contradiction to Statement 2 and $d$ being eliminated last: 
		The only rankings in which one of $c_{i'}^j$ and $q_{i''}^j$ appears before $d$ are the two color rankings for color $j$ and the four global rankings. 
		However, in the first color ranking and the last three global rankings $q_{i''}^j$ appears before $c_{i'}^j$, whereas in the other two rankings $c_{i'}^j$ appears before $q_{i''}^j$.
		Thus, as long as the other candidate is present none of the two can get a Plurality score of five. 
		
		Finally, assume for the sake of contradiction that $q_{i'}^j$ is not present in round $x$, implying that it was eliminated in round $y$ for some $y<x$. 
		Together with our above observation that excludes the presence of $q_{i'''}^j$ for all $i'''< i'$ in round $x$ and Statement 4, this implies that there is some $i''>i'$ such that $q_{i''}^j$ is present in round $x$. 
		Clearly, $q_{i'}^j$ was ranked in first position by at least one ranking in round $y$. 
		As all candidates from $G\cup T$ are still present in round $x$ (Statement 2), $q_{i'}^j$ is not ranked first in one of the global rankings in round $y$. 
		The only remaining two rankings where $q_{i'}^j$ appears before $d$ are the two color rankings for color $j$. 
		However, as $c_{i'}^j$ and $q_{i''}^j$ are present in round $y$ and $q_{i'}^j$ is ranked behind $c_{i'}^j$ in the first color ranking and behind $q_{i''}^j$ in the second color ranking for 
		color $j$, $q_{i'}^j$ is also not ranked first in these rankings in round $y$, a contradiction. 
	\end{claimproof}
	
	By Statement $5$ of \Cref{claim:back}, we get that for each color $j\in [\ell]$, there is some $i_j\in [\nu]$ such that $c^j_{i_j}$ and $q^j_{i_j}$ are present in round $x$. 
	We claim that $V'=\{v^1_{i_1},\dots, v^{\ell}_{i_{\ell}}\}$ is an independent set in the given graph. 
	Assume for the sake of contradiction that there are two colors $j\neq j'\in [\ell]$ such that $e=\{v^j_{i_j}, v^{j'}_{i_{j'}}\}\in E$. 
	From Statement $3$ of \Cref{claim:back}, we get that $f_e$ has been eliminated before round $x$. 
	The only rankings where $f_e$ is ranked before $d$ are the four  color rankings for colors $j$ and $j'$. 
	In each of these rankings $f_e$ is ranked between $c^j_{i_j}$ and $q^j_{i_j}$ or between $c^{j'}_{i_{j'}}$ and $q^{j'}_{i_{j'}}$. 
	As all these four candidates are still present in round $x$, it follows that $f_e$ was never ranked in the first position in some ranking before round $x$, a contradiction to $f_e$ being eliminated before round $x$. 
	Thus, $V'$ is an independent set, which is clearly multicolored. 
\end{proof}

\restatehere{vetoXP}
\begin{proof}
	For a ranking profile $P'=(\succ'_1,\dots, \succ'_{n'})$ over $m'$ candidates,  the \emph{bottom-list} $(\cand(\succ_1,m'), \dots ,\cand(\succ_{n'},m'))$ of $P'$ contains for each voter its least preferred candidate.
		
	Assume we are given a set $C$ of $m$ candidate and a ranking profile $P=(\succ_i,\dots, \succ_n)$ of $n$ voters. 
	Knowing for each round in an execution of \seqLo{Veto} the  bottom-list is clearly sufficient for us to reconstruct the selected ranking. 
	Even more, if we only know the bottom-list in some round, then we can  reconstruct which candidates have been deleted in previous rounds. These are exactly the candidates that appear behind the currently bottom candidate of a voter in its original vote (note that they clearly need to be deleted and that not more candidates could have been deleted because they have never appeared in the last place and thus never had the lowest Veto score).
	To formalize this, 
	for a tuple $\mathbf{x}=(x_1,\dots, x_n)\in C^n$, 
	let $D(\mathbf{x})$ be the set of candidates that are ranked behind $x_i$ for $i\in [n]$ in $\succ_i$, i.e., $D(\mathbf{x})=\{c\in C \mid \exists i\in [n]: \rank(x_i,\succ_i)<\rank(c,\succ_i)\}$.
	Intuitively speaking, we need to delete all candidates from $D(\mathbf{x})$ to make $\mathbf{x}$ a bottom-list of our profile. 
	However, in case that $D(\mathbf{x})$ contains some of the candidates appearing in $\mathbf{x}$,  $\mathbf{x}$ will actually not be the bottom list of the resulting profile. 
	Accordingly, we call $\mathbf{x}$ \emph{valid} if $\mathbf{x}$ is the bottom list of $P|_{C\setminus D(\mathbf{x})}$.

	Using this notation, we solve the problem via dynamic programming. 
	For this we introduce a table $T[i,c,c_1,\dots,c_n]$ for $i\in [k]$ and $c,c_1,\dots, c_n\in C$.
	Moreover, we add a dummy cell $T[0,\emptyset, \cand(\succ_1,m), \dots ,\cand(\succ_n,m)]$, which we set to true.
	An entry $T[i,c,c_1,\dots,c_n]$ is true if there is an execution of  \seqLo{Veto} resulting in ranking $\succ$ such that $c$ is ranked in position $i$ in $\succ$ and the bottom list of the profile after round $i$ is  $(c_1,\dots, c_n)$.
	
	For increasing $i\in [k]$, we fill table $T$ by setting $T[i,c,c_1,\dots,c_n]$ to true if $(c_1,\dots, c_n)$  is valid and 	
	there exists candidates $c',c'_1,\dots, c'_n \in C$ such that 
	\begin{itemize}
		\item $D((c_1,\dots, c_n))=D((c'_1,\dots, c'_n))\cup \{c\}$ 
		\item no candidate appears more often than $c$ among $c'_1,\dots, c'_n$, and 
		\item $T[i-1,c',c'_1,\dots, c'_n]$ is true.
	\end{itemize}

	After $T$ is filled, we can simply check whether there are candidates $c_1,\dots,c_n\in C$ such that $T[k,d,c_1,\dots, c_n]$ is true in which case we return yes and no otherwise. 	
\end{proof}

\subsection{Borda}
We now prove that \Winner for \seqLo{Borda} is NP-hard, even for a constant number of voters. We begin with a useful observation for reasoning about \seqLo{Borda}.

\begin{remark}
	[Weighted majority graph, C2-Borda scores]
	\label{rem:c2-borda}
	Every ranking profile induces a \emph{weighted majority graph} (aka the C2-graph) which is an edge-weighted directed graph whose vertex set is the set of candidates, and for $c,d \in C$, the weight of the edge $c \to d$ is $w_{cd} = |\{i \in N : c \succ_i d\}| - |\{i \in N : d \succ_i c\}|$.
	Given a ranking profile and its induced weighted majority graph, the \emph{C2-Borda score} of an alternative $c \in C$ is $\sum_{d \in C \setminus \{c\}} w_{cd}$. 
	It is well-known that the Borda score of an alternative is an affine transformation of its C2-Borda score (indeed, the C2-Borda score is equivalent to the difference between the candidates' Borda score and the average Borda score of candidates).
	Hence, to obtain the output of \seqLo{Borda} at a given profile, we only need to know the profile's weighted majority graph.
\end{remark}

In the construction of our reduction we build a weighted majority graph to reason about the scores of candidates.
Using well-known results of \citet{McGa53a} and \citet{Debo87a}, any arc-weighted digraph (in which all arc weights have the same parity; in our reductions we only ever use weights 0 and 2) can be realized as the weighted majority graph of a ranking profile, and this profile can be constructed in polynomial time.
In some of our reductions, we will need to prove that a particular digraph can be constructed using a small number of rankings. For this, we use the following lemma: 
\begin{lemma}[\citealp{ErMo64a}, \citealp{bachmeier2019}]
	\label{lem:bilevel}
	We call an edge-weighted digraph $G = (V,A)$ \emph{bilevel} if we can partition its vertex set as $V = (C_1 \cup \cdots \cup C_s) \cup (D_1 \cup \cdots \cup D_s)$, where all subsets are pairwise disjoint but some of them may be empty, such that
	\[
	A = (C_1 \times D_1) \cup \cdots \cup (C_s \times D_s)
	\]
	with all arcs having weight 2 (see \Cref{fig:bilevel}).
	
	If $G$ is bilevel, then it can be induced as the weighted majority graph of a 2-voter profile.
\end{lemma}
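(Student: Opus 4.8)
The plan is to construct the two rankings explicitly and then verify, pair by pair, that the weighted majority graph they induce is exactly $G$. Since there are only two voters, every ordered pair $(c,d)$ of distinct vertices receives weight $+2$ (both voters rank $c$ above $d$), $-2$ (both rank $d$ above $c$), or $0$ (the voters disagree). So the task is to choose $\succ_1$ and $\succ_2$ whose set of unanimously-agreed ordered pairs is precisely $A$; equivalently, we are realizing the strict partial order $A$ as the intersection of two linear orders. Note first that the partition hypothesis makes this plausible: all the $C_t$ and $D_t$ are pairwise disjoint, so $A$ contains no two-step path (an arc goes from a $C_t$ to the matching $D_t$, and no arc leaves any $D_t$), hence $A$ is a strict partial order, $C_t$-vertices have only out-arcs, $D_t$-vertices have only in-arcs, and there are no arcs inside a single $C_t$ or $D_t$ nor between classes of different index.

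First I would fix, for each $t\in[s]$, an arbitrary linear order $\pi_t$ of $C_t$ and an arbitrary linear order $\rho_t$ of $D_t$, and let the \emph{block} $B_t$ denote the concatenation ``$C_t$ listed via $\pi_t$, then $D_t$ listed via $\rho_t$'', with $\overline{B_t}$ the same block but with both internal orders reversed. I then define
\begin{align*}
 \succ_1 &= B_1 \succ B_2 \succ \dots \succ B_s, \\
 \succ_2 &= \overline{B_s} \succ \overline{B_{s-1}} \succ \dots \succ \overline{B_1}.
\end{align*}
Both are genuine linear orders of $V$, because the partition guarantees each vertex lies in exactly one block.

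Then I would carry out a case analysis over an arbitrary ordered pair $(c,d)$ of distinct vertices, according to the blocks containing them. (i) If $c\in C_t$ and $d\in D_t$ for the same $t$: within $B_t$ and within $\overline{B_t}$ the whole of $C_t$ precedes the whole of $D_t$ (reversing the internal order of $C_t$ and of $D_t$ separately does not interleave them), so $c$ beats $d$ in both rankings, weight $+2$ — matching $(c,d)\in A$. (ii) If $c,d$ lie in the same $C_t$, or in the same $D_t$: the internal-order reversal between $B_t$ and $\overline{B_t}$ makes the two voters rank them oppositely, weight $0$ — matching that $A$ has no such arc. (iii) If $c$ and $d$ lie in blocks of distinct indices $t\neq t'$: block $t$ precedes block $t'$ in $\succ_1$ iff $t<t'$, but in $\succ_2$ iff $t>t'$, so the voters disagree, weight $0$ — matching that $A$ has no arc between distinct index classes. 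Since every vertex lies in exactly one block, these cases are exhaustive, so the weighted majority graph of $(\succ_1,\succ_2)$ is exactly $G$.

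I do not expect a serious obstacle; the only thing to watch is that the construction is finely balanced and the three mechanisms must not interfere: the within-block reversal kills agreement on intra-$C_t$ and intra-$D_t$ pairs, the reversal of the block ordering kills agreement on cross-index pairs, and keeping ``$C_t$ before $D_t$'' in both copies is exactly what manufactures the desired agreements. In the write-up I would therefore make the case split explicitly disjoint (e.g.\ an intra-block pair sitting in a late block is covered by case (ii), never by case (iii)) and check each of the three weight values against membership in $A$, which settles the lemma.
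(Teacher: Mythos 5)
Your construction is exactly the one the paper uses (blocks $C_t$ then $D_t$ in order for voter 1, blocks in reverse order with internal orders reversed for voter 2), and your three-case verification correctly fills in what the paper dismisses as ``easy to check.'' The proposal is correct and takes essentially the same approach as the paper.
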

\begin{figure}[t]
	\centering
	\begin{tikzpicture}
	[vertex/.style={circle, fill=black, inner sep=1.5pt},
	bracepath/.style={decorate, decoration = {brace}},
	mirror/.style={decoration = {mirror}},
	setname/.style={transform shape, yscale=0.8, xscale=1.23},
	xscale=0.65]
	\node [vertex] at (0.5,0) (a) {};
	\node [vertex] at (1.5,0) (b) {};
	\node [vertex] at (2.5,0) (c) {};
	
	\node [vertex] at (4,0) (e) {};
	\node [vertex] at (5,0) (f) {};
	
	\node [vertex] at (6.5,0) (g) {};
	
	\node [vertex] at (8,0) {};
	\node [vertex] at (9,0) {};
	\node [vertex] at (10,0) {};
	
	\node [vertex] at (0,-1) (1) {};
	\node [vertex] at (1,-1) (2) {};
	\node [vertex] at (2,-1) (3) {};
	\node [vertex] at (3,-1) (4) {};
	
	\node [vertex] at (4.5,-1) (5) {};
	
	\node [vertex] at (6.5,-1) (6) {};
	
	\draw[-latex, thin]
	(a) edge (1) edge (2) edge (3) edge (4)
	(b) edge (1) edge (2) edge (3) edge (4)
	(c) edge (1) edge (2) edge (3) edge (4)
	
	(e) edge (5)
	(f) edge (5)
	
	(g) edge (6)
	;
	
	\node [setname] at (1.5, 0.6) {$C_1$};
	\node [setname] at (4.5, 0.6) {$C_2$};
	\node [setname] at (6.5, 0.6) {$C_3$};
	\node [setname] at (9.0, 0.6) {$C_4$};
	
	\draw [bracepath] (0.2, 0.25) -- (2.8, 0.25);
	\draw [bracepath] (3.7, 0.25) -- (5.3, 0.25);
	\draw [bracepath] (6.2, 0.25) -- (6.8, 0.25);
	\draw [bracepath] (7.7, 0.25) -- (10.3, 0.25);
	
	\node [setname] at (1.5, -1.6) {$D_1$};
	\node [setname] at (4.5, -1.6) {$D_2$};
	\node [setname] at (6.5, -1.6) {$D_3$};
	\node [setname] at (9.0, -1.6) {$D_4 = \emptyset$};
	
	\draw [bracepath,mirror] (-0.3, -1.25) -- (3.3, -1.25);
	\draw [bracepath,mirror] (4.2, -1.25) -- (4.8, -1.25);
	\draw [bracepath,mirror] (6.2, -1.25) -- (6.8, -1.25);
	\end{tikzpicture}
	\caption{A bilevel graph as described in \Cref{lem:bilevel}.}
	\label{fig:bilevel}
\end{figure}
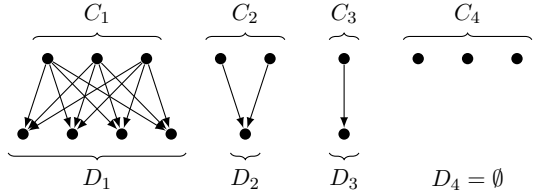
\begin{proof}
	Consider the sets $C_1, \dots, C_s, D_1, \dots, D_s$ as sets that are linearly ordered in some arbitrary way. Then construct the following two rankings $\succ_1$ and $\succ_2$:
	\begin{align*}
	&C_1 \succ_1 D_1 \succ_1 C_2 \succ_1 D_2 \succ_1 \dots \succ_1 C_s \succ_1 D_s, \\
	&\rev(C_s) \succ_2 \rev(D_s) \succ_2 \dots \succ_ 2 \rev(C_1) \succ_2 \rev(D_1). 
	\end{align*}
	It is easy to check that these two voters induce $D$ as their weighted majority graph.
\end{proof} 

We are now ready to prove the theorem.

\restatehere{baldwinNP}
\begin{proof}
	We will give the proof for the case $n = 8$. It can be extended to any larger even number of voters by repeatedly adding 2 opposite rankings (one the reverse of the other) to the profile constructed in the reduction. Adding opposite rankings does not change the induced weighted majority scores. Thus, the C2-Borda scores of the candidates do not change, and hence this does not change the result of \seqLo{Borda}.
	
	We reduce from \textsc{Cubic Vertex Cover}. (Then the ETH-based claim follows using \Cref{prop:eth-cubic-vertex-cover}.) Let $G=(V,E)$ be a graph with $q$ vertices where each vertex $v \in V$ is incident to exactly 3 edges, and let $t$ be the target vertex cover size. 
	We construct an instance of the \Winner problem as follows.
	
	The candidate set consists of one candidate for each vertex, one candidate for each edge, a designated candidate $d$, and dummy candidates $B = \{b_1, b_2, b_3, b_4\}$, $F = \{f_1, \dots, f_{q-t+8}\}$, $G = \{g_1\}$, $H = \{h_1, h_2,h_3, h_4\}$, and $K = \{k_1, \dots, k_5\}$.
	Let $C = V \cup E \cup \{d\} \cup B \cup F \cup G\cup H \cup K$. 

	The ranking profile will be constructed so as to induce a desired weighted majority graph, where all arcs will have weight 2. The arcs are as follows:
	\begin{align*}
		A = &\{ (e, v) \in E \times V : v\in e \}  \cup (B \times V) \cup (G \times E) \cup  {}  \\
		& (H \times E) \cup (F \times B) \cup (K \times H) \cup (H \times \{d\})
	\end{align*}
	
	\begin{figure}
		\centering
		\begin{tikzpicture}
			[multinode/.style={circle,draw=black,inner sep=1.5pt,minimum width=10pt},
			element/.style={circle,draw=black,inner sep=1.5pt,minimum width=4pt},
			set/.style={rectangle,draw=black,inner sep=3pt,minimum width=4pt},
			every edge quotes/.style={fill=white,inner sep=0.5pt},
			every label/.style={color=red!60!black}]
			\node [label={[label distance=-3pt]90:$-6$}] (e) {$e$};
			\node [below left =0.6cm and 0.6cm of e] (v1) {$v_1$};
			\node [label={[label distance=-6pt]30:$-14$},below right=0.6cm and 0.6cm of e] (v2) {$v_2$};
			\node [multinode, label={[label distance=-2pt]20:$2t-16$}, above=0.8cm of e] (B) {$B$};
			\node [multinode, label={[label distance=-2pt]20:$4$}, above=of B] (F) {$F$};
			\node [multinode, label={[label distance=-2pt]20:$2|E|$},left=1.5cm of B] (G) {$G$};
			\node [multinode, label={[label distance=-2pt]20:$2|E|-8$},right=1.5cm of B] (H) {$H$};
			\node [label={[label distance=-4pt]20:$-8$},right= 2.5cm of e] (d) {$d$};
			\node [multinode, label={[label distance=-2pt]20:$8$},above=of H] (K) {$K$};
			
			\draw[-latex]
			(e) edge (v1)
			edge (v2)
			(B) edge [bend right=10] (v1)
			edge [bend left=10] (v2)
			(F) edge (B)
			(G) edge (e)
			(H) edge (e)
			(H) edge (d)
			(K) edge (H)
			;
		\end{tikzpicture}
		\caption{An illustration of the reduction of \Cref{th:Baldwin-NP}. All arcs have weight 2. Red superscripts denote the difference between the weight of outgoing and ingoing arcs for a candidate.} \label{fi:Baldwin-NP}
	\end{figure}
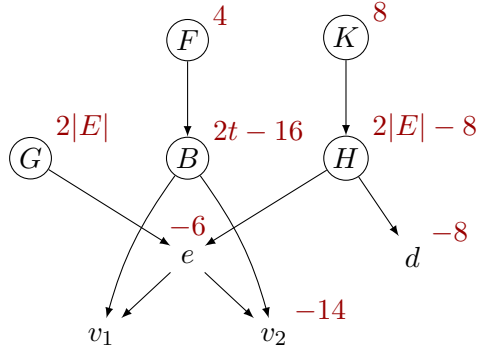

	The constructed weighted majority graph is depicted in \Cref{fi:Baldwin-NP}. 
	We now describe how to write the weighted majority graph $(C,A)$ as a sum of 4 bilevel graphs (as defined in \Cref{lem:bilevel}). 
	For each vertex $v$, we  label the three edges incident to it arbitrarily as $e_v^1$, $e_v^2$, $e_v^3$. Consider the following arc sets:
	\begin{align*}
		A_1 &= (B \times V) \cup ((H \cup G) \times E), \\
		A_2 &= \{ (e_v^1, v) : v \in V \} \cup  (F \times B), \\
		A_3 &= \{ (e_v^2, v) : v \in V \} \cup (K \times H),\\
		A_4 &= \{ (e_v^3, v) : v \in V \} \cup (H \times \{d\}).
	\end{align*}
	It is clear that each of these sets describe bilevel graphs, that they are pairwise disjoint, and that $A = A_1 \cup A_2 \cup A_3 \cup A_4$.
	By invoking \Cref{lem:bilevel}, we get a profile $P$ containing $8$ voters with the depicted weighted majority graph.  
	
	The C2-Borda scores (see \Cref{rem:c2-borda}) in this profile are:
	\begin{itemize}
		\item $d$ has score $-8$ (since it is beaten by 4 $H$-candidates)
		\item each $b \in B$ has score $2t-16$
		\item each $v \in V$ has score $-14$ (since $v$ is beaten by 3 edge candidates, and by 4 $B$-candidates)
		\item each $e \in E$ has score $-6$ (since $e$ beats 2 vertex candidates but is beaten by 4 $H$-candidates and 1 $G$-candidate).
		\item each $h \in H$ has score $2|E|-8$ (since it beats $|E|$ edge candidates and $d$, but is beaten by 5 $K$-candidates)
		\item Candidates in $F$, $G$, and $K$ will always have non-negative scores (since they are beaten by no candidates)
	\end{itemize}
	
	Suppose that $T = \{v_1, \dots, v_t\}$ is a vertex cover of $G$.
	Then the following is a valid elimination ordering according to \seqLoAb{Borda}, where $d$ is eliminated last and thus ranked first in the corresponding selected ranking.
	\begin{itemize}
		\item Eliminate $T$: In the first $t$ rounds, vertex candidates have the lowest C2-Borda score ($-14$), so we can eliminate $T$ in an arbitrary ordering. Each time we eliminate a vertex candidate, the score of the $B$-candidates goes down by $2$, and the score of all incident edge candidates also goes down by $2$.
		\item Eliminate $B$: Starting in round $t+1$, candidates in $B$ have the uniquely lowest C2-Borda score of $-16$, so we can eliminate $B$ in an arbitrary ordering. As we do so, the scores of the remaining vertex candidates go up.
		\item Eliminate $E$: The remaining vertex candidates in $V \setminus T$ currently have score $-6$ because $B$ has been eliminated. Edge candidates have a C2-Borda score of either $-8$ or $-10$ depending on whether $T$ contains one or both of the endpoints of the edge (note that for each edge at least one endpoint has been deleted because $T$ is a vertex cover). These are the lowest C2-Borda scores, so we can eliminate all edge candidates (in an arbitrary ordering except that the $-10$ edges get eliminated first). As we eliminate edge candidates, the scores of $H$-candidates go down.
		\item Eliminate $H$: Just after we finish eliminating $E$, the score of $H$ has dropped to $-8$, which is the lowest C2-Borda score. So we can eliminate $H$ in an arbitrary order.
		\item Eliminate $F \cup G \cup K$: At this point, all remaining candidates have C2-Borda score $0$ (and they have no arcs between them, so eliminating candidates does not change the scores). So we can eliminate all candidate except $d$ in an arbitrary order.
		\item Eliminate $d$.
	\end{itemize}
	
	Conversely, suppose there is a ranking selected by \seqWiAb{Borda} where $d$ is eliminated last. In the first $t-1$ rounds, only vertex candidates can be eliminated since only they have the lowest C2-Borda score of $-14$. But then, in round $t$, the $B$-candidates also have score $-14$. We distinguish two cases: whether another vertex is eliminated, or whether a $B$-candidate is eliminated.
	\begin{itemize}
		\item Case 1: In round $t$, a vertex candidate is eliminated. Define $T$ to be the set of vertices whose candidates were eliminated in the first $t$ rounds. Starting in round $t+1$, the $B$-candidates have score $-16$, which is uniquely lowest, so they will then all be consecutively eliminated.
		\item Case 2: In round $t$, a $B$-candidate is eliminated. Then the C2-Borda score of the remaining vertex candidates goes up to $-12$ while the remaining $B$-candidates have score $-14$, which makes them the candidates with the uniquely lowest C2-Borda score. Thus, in the following rounds, all the $B$-candidates will be consecutively eliminated. Define $T$ to be the set of vertices whose candidates were eliminated in the first $t-1$ rounds.
	\end{itemize}
	In either of the two cases, we have now reached a stage where for a set $T$ of at most $t$ vertices (to be precise, either $t$ or $t-1$ vertices), the corresponding vertex candidates have been eliminated, followed by the elimination of set $B$. We will prove that $T$ is a vertex cover.
	
	After the above-described eliminations, since $B$ is eliminated, the remaining vertex candidates in $V \setminus T$ have score $-6$. Candidate $d$ has score $-8$. Edge candidates have score $-6$ if neither of their endpoints were contained in $T$, and otherwise they have a score of $-8$ or $-10$. All other candidates have score $-6$ or higher. Thus, in the next rounds, edges that were covered have their candidates eliminated. While this happens, the score of $H$-candidates goes down, but they do not become eliminateable before all covered edge candidates are eliminated. If in this way all edge candidates are eliminated, then $T$ was a vertex cover and we are done. If there is an edge that is not covered by $T$, then after all covered edge candidates are eliminated, we end up in a situation where $d$ has score $-8$, but all other candidates have score $-6$ or higher, so $d$ would need to be eliminated next, a contradiction.
\end{proof}

\section{Additional Material for \Cref{sec:complexity-winner}} \label{app:complexity-winner}

We start by observing that for each scoring system  $\vs$ our general problem \PositionK is in XP, as we can simply guess which candidates are ranked on the first $k$ positions (in which ordering) in the selected ranking  and then verify whether it gives rise to a valid execution of \seqWi{$\vs$}.
\begin{observation}
	For every scoring system $\vs$, \PositionK for \seqWi{$\vs$} is in XP. 
\end{observation}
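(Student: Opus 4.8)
The plan is to convert the recursive definition of \seqWi{\vs} into a direct combinatorial description of the first $k$ positions of a selected ranking, and then brute-force over all candidate sequences of that length.

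First I would prove the following characterization: a ranking $\succ$ lies in \seqWi{\vs}$(P)$ with $\rank(\succ,d)=k$ if and only if the tuple $(c_1,\dots,c_k)=(\cand(\succ,1),\dots,\cand(\succ,k))$ satisfies $c_k=d$ and, for every $j\in[k]$, the candidate $c_j$ is an $\vs$-winner in the restricted profile $P|_{C\setminus\{c_1,\dots,c_{j-1}\}}$. The ``only if'' direction is just unrolling the recursion in the definition of \seqWi{\vs} ($j$ applications of the ``top choice is an $\vs$-winner'' clause). For the ``if'' direction, given such a length-$k$ ``elimination prefix'' one completes it to a full execution by repeatedly eliminating, in each of the remaining rounds, an arbitrary $\vs$-winner of the current profile; this is always possible because every non-empty profile has at least one $\vs$-winner, and by induction on the recursive definition the resulting ranking is a member of \seqWi{\vs}$(P)$ and places $d$ in position $k$.

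Second, I would state the algorithm and analyze it: enumerate all ordered tuples $(c_1,\dots,c_{k-1})$ of pairwise distinct candidates from $C\setminus\{d\}$, set $c_k=d$, and for each tuple test the $k$ ``$\vs$-winner'' conditions by directly computing the $\vs$-scores in the (at most $k$) restricted profiles $P|_{C\setminus\{c_1,\dots,c_{j-1}\}}$ and comparing them. The instance is a yes-instance if and only if some tuple passes all $k$ tests, by the characterization above. There are at most $m^{k-1}$ such tuples, and each test runs in $\mathrm{poly}(n,m)$ time, so the total running time is $m^{k}\cdot\mathrm{poly}(n,m)$, which is XP with respect to~$k$.

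I do not expect a serious obstacle here. The only step needing a little care is the completeness half of the characterization, namely that a valid length-$k$ elimination prefix always extends to a full selected ranking; this uses precisely that \seqWi{\vs} is defined via parallel-universe tie-breaking, so greedy partial choices are never retroactively invalidated. Everything else is routine bookkeeping of scores in restricted profiles.
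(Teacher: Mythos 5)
Your proposal is correct and matches the paper's argument: the paper justifies this observation in one sentence by guessing the ordered tuple of candidates in the first $k$ positions and verifying that it yields a valid execution of \seqWi{$\vs$}, which is exactly your enumeration of $m^{k-1}$ prefixes ending in $d$ with the $\vs$-winner check in each restricted profile. Your extra care about the completeness half (any valid elimination prefix extends to a full selected ranking because every non-empty profile has an $\vs$-winner) is the right and only non-trivial point, and it goes through as you describe.
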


\subsection{Plurality}

In this section, we consider \seqWiAb{Plurality}. 
We start by showing that \TopK is NP-hard and W[1]-hard with respect to $k$. 

\begin{proposition}
	\TopK for \seqWi{Plurality} is NP-hard and W[1]-hard with respect to $k$.
\end{proposition}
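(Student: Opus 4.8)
The plan is to design a single reduction that simultaneously gives both statements, reducing from a problem that is both NP-hard and W[1]-hard with a parameter that maps onto $k$. A natural choice is \textsc{Clique} (or \textsc{Multicolored Clique}), which is W[1]-hard parameterized by the clique size $\ell$; the value of $k$ in the constructed instance will be a function of $\ell$ (ideally $k = \Theta(\ell)$), so that an FPT algorithm in $k$ would give an FPT algorithm in $\ell$. Since \textsc{Clique} is also NP-hard, the same construction yields NP-hardness. An alternative, if the clique encoding proves awkward for \seqWiAb{Plurality}, is to mimic the structure of the STV hardness proof (\Cref{th:STV-NP}): there, literal candidates all start with the lowest Plurality score, and an arbitrary one must be deleted each round, so the first $n$ deletions encode a truth assignment. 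For \seqWiAb{Plurality} we instead want the first few \emph{winners} (highest-Plurality-score candidates) to encode a choice, so we should reverse the roles: arrange the gadget so that a small set of ``selector'' candidates are the only ones tied for the top score in the opening rounds, and selecting $\ell$ of them (one per color class, say) and then being able to push the designated candidate $d$ into position $k$ corresponds exactly to a valid clique/solution.

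Concretely, I would proceed as follows. First I set up the candidate set: for each vertex $v$ of the input graph a candidate, for each edge a candidate, a designated candidate $d$, plus a bank of ``absorber'' dummy candidates whose only role is to receive redistributed Plurality points and then be eliminated harmlessly. Second, I design the voter profile so that: (i) initially exactly the vertex candidates (and possibly $d$) share the maximum Plurality score, forcing the rule to select vertex candidates in the first rounds; (ii) after selecting a vertex candidate $v$, the Plurality scores of ``conflicting'' vertex candidates (non-adjacent vertices, if we want cliques) drop or rise in a controlled way, so that after $\ell$ rounds only a set of pairwise-adjacent vertices can have been selected; (iii) once $\ell$ vertex candidates forming a clique have been chosen, the edge candidates corresponding to the clique's edges, having had all their points reallocated, let the remaining dynamics deliver $d$ into one of the top $k$ positions — whereas if the selected vertices do not form a clique, some edge candidate blocks $d$ permanently, analogously to how candidate $w$ blocks $d$ in the STV proof. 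Third, I argue correctness in both directions: the forward direction picks the elimination order induced by a clique and checks each round's winner is valid; the backward direction uses a sequence of claims (as in \Cref{th:Coombs-NP}) showing that in any successful execution the first $\ell$ selected candidates must be pairwise-adjacent vertex candidates. Finally I observe $k$ in the instance is bounded by a function of $\ell$ plus the number of forced dummy eliminations, and since those dummies can be made $O(\ell)$-many, $k = f(\ell)$, giving the W[1]-hardness.

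The main obstacle I anticipate is the same subtlety that makes the Coombs and STV backward directions delicate: controlling \emph{which} candidates can legitimately be the Plurality winner in each round. Plurality scores are ``brittle'' — deleting one candidate can sharply spike a single other candidate's score — so I must engineer the profile so that after the intended selections the right candidates are forced, and after any unintended selection the designated candidate $d$ is provably blocked forever (its Plurality score can never again reach the maximum). Getting the numeric margins right (so that ties occur exactly where the reduction needs freedom, and strict inequalities occur exactly where it needs determinism) is the core technical work, and will likely require a careful case analysis of the first $\ell+O(\ell)$ rounds together with a monotonicity argument showing $d$'s deficit relative to the current leader never closes once a ``bad'' selection is made. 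I would also need to double-check that the dummy/absorber candidates truly get eliminated without perturbing the relevant scores, which is routine but must be stated precisely.
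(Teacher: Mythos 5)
Your high-level skeleton matches the paper's: reduce from a W[1]-hard problem parameterized by solution size $\ell$, pad the profile so that the first $\ell$ winners are forced to be vertex candidates, let the redistribution of Plurality points upon deletion verify the combinatorial property of the selected set, and set $k=\ell+O(1)$ so that W[1]-hardness in $\ell$ transfers to $k$ (NP-hardness coming for free). The paper, however, reduces from \textsc{Independent Set} rather than \textsc{Clique}, and that choice does real work. To certify an independent set one only needs to \emph{detect a single bad pair}, which the paper achieves with one spoiler candidate $e$: it is ranked directly behind $c_v$ in a block of $2\nu^2$ votes for every vertex $v$, and behind $c_v\succ b\succ c_w$ in one vote per edge $\{v,w\}$. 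After $\ell$ vertex deletions and the deletion of a blocker $b$, both $d$ and $e$ sit at exactly $2\ell\nu^2$ points if and only if no two selected vertices are adjacent; a single edge inside the selected set hands $e$ one extra point and puts it strictly above $d$ in round $\ell+2=k$. Your clique formulation instead has to certify that \emph{all} $\binom{\ell}{2}$ pairs are adjacent, which forces either a counting threshold or your ``only common neighbours remain selectable'' mechanism; both are plausible (the latter is essentially what the reduction for \Cref{th:Coombs-NP} does on the loser side), but they need substantially more bookkeeping than the single-spoiler trick.

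Two concrete issues. First, under \seqWiAb{Plurality} a surviving candidate's Plurality score can only weakly \emph{increase} as others are deleted, so the ``drop'' half of your item (ii) is not realizable; only the boost-the-neighbours direction is available, and it then requires padding all vertex candidates to a common initial score (the paper does this with dummy second-place candidates) and making the boost dominate every incidental score change. Second, and more importantly, what you have is a plan rather than a proof: the entire content of the result lies in the numeric margins you explicitly defer --- in particular, showing that $b$, $d$, $e$ and the dummies can never be Plurality winners during the first $\ell$ rounds, and that $d$'s deficit is irreparable by round $k$ after any bad selection. As written the argument cannot be checked; if you carry it out, I would recommend switching to the independent-set formulation, where the backward direction reduces to a one-line comparison of the scores of $d$ and the spoiler.
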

\begin{proof}
	We reduce from \textsc{Independent Set}, which is W[1]-hard when parameterized by the solution size.
	Given a graph $G=(V,E)$ with $|V|=\nu$, and an integer $\ell$, \textsc{Independent Set} asks whether there is an independent set of size $\ell$ in $G$ (we assume without loss of generality that $\nu>2$, $\ell>2$, $|E|>1$, and $\ell<\nu$). 
	From an instance of \textsc{Independent Set}, we construct an instance of \TopK as follows.
	We add a candidate $c_v$ for each vertex $v\in V$. 
	Moreover, we introduce $2\nu^3(\ell-1)+\nu^3$ dummy candidates, a designated candidate $d$, a blocker candidate $b$, and an edge candidate $e$. 
	
	We now turn to the description of the ranking profile.
	We first add the following rankings:
	\begin{alignat*}{4}
	2\nu^2\text{ voters} \quad & c_v \succ b \succ d  \succ \dots  && \forall v\in V \\
	2\nu^2\text{ voters} \quad & c_v \succ e \succ \dots && \forall v\in V \\
	1 \text{ voter} \quad & c_v \succ b \succ  c_w \succ e \succ \dots \quad && \forall \{v,w\}\in E 
	\end{alignat*}
	
	Moreover, for each $v\in V$, as long as $c_v$ has less than $2\nu^2(\ell-1)+\nu^2$ Plurality points in the current profile, we add a ranking where $c_v$ is ranked first and some so far never in second place appearing dummy candidate is ranked second. 
	We set $k:=\ell+2$. 
	
	Note that dummy candidates will clearly not be eliminated in the first $k$ rounds so we can ignore them.

	Assume that $V=\{v_1,\dots, v_{\ell}\}$ is an independent set. 
	Then, we eliminate in the first $\ell-1$ rounds candidates $c_{v_1}, \dots, c_{v_{\ell-1}}$. 
	Note that by this eliminations only the Plurality scores of dummy candidates, of $b$, and of $e$ have changed. 
	After round $\ell-1$, $e$ has score $2\nu^2(\ell-1)$. 
	Moreover, $b$ has score at most	$2\nu^2(\ell-1)+(\ell-1)\cdot \nu$. 
	Thus, the vertex candidates still have the highest score and we can eliminate $c_{v_{\ell}}$. 
	After round $\ell$, candidate $b$ clearly has the highest Plurality score, so we eliminate $b$. 
	The elimination of $b$ redistributes $2\ell\nu^2$ points to $d$, at most $\nu$ points to each vertex candidates, and no point to $e$ (because $V$ is an independent set and thus in each of the voters of the third type at least one vertex candidate ranked before $e$ is still present).
	Thus, we can eliminate $d$ in the next round $\ell+2$. 
	
	Conversely, assume that there is an execution of \seqWiAb{Plurality} such that $d$ is eliminated in round $\ell+2$ or before. 
	As argued before clearly in the first $\ell$ rounds vertex candidates need to be deleted. 
	Let $V'\subseteq V$ be the subset of vertices corresponding to these vertex candidates. 
	We claim that $V'$ needs to be an independent set. 
	As argued above, $b$ will be eliminated in round $\ell+1$. 
	Thus, $d$ has a Plurality score of $2\ell\nu^2$ in round $\ell+2$. 
	In round $\ell+2$, candidate $e$ is also ranked in the first position in the  $2\ell\nu^2$
	votes of the second type. 
	Moreover, if there is an edge $\{v,u\}\in E$ with $v,u\in V'$, then we have eliminated all candidates ranked before $e$ in the corresponding vote, giving $e$ a Plurality score of at least  $2\ell\nu^2+1$. 
	This leads to a contradiction, as this implies that $e$ has a higher score than $d$ in round $\ell+2$.
\end{proof}

Moreover, also when we parameterized by the number of $n$ of voters we still get W[1]-hardness by reusing some ideas for the reduction for \Winner for \seqLoAb{Veto} from \Cref{th:Coombs-NP}.

\begin{theorem}
	\TopK for \seqWi{Plurality} is W[1]-h. with respect to the number $n$ of voters. 
\end{theorem}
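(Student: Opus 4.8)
The plan is to reduce from \textsc{Multicolored Independent Set} (parameterized by the number $\ell$ of colors), reusing the core gadget from the proof of \Cref{th:Coombs-NP} --- which is itself phrased for \seqWiAb{Plurality}, but there forces the designated candidate into the \emph{last} position rather than near the top. Given an $\ell$-colored graph with color classes $V^j=\{v^j_1,\dots,v^j_\nu\}$ and edge set $E$, I would keep unchanged the two "vertex candidates" $c^j_i,q^j_i$ per vertex $v^j_i$, one "edge candidate" $f_e$ per edge $e\in E$, the $\ell$ "blocker" candidates $b^j$, and, for each color $j$, the two mirrored "color rankings" in which $f_e$ sits strictly between $c^j_i$ and $q^j_i$ whenever $v^j_i\in e$. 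Exactly as in \Cref{th:Coombs-NP}, these color rankings together with the invariant that during an initial phase all Plurality scores lie in $\{0,1\}$ force any valid execution to, for each color, traverse a prefix of one of the two color rankings; this "selects" one vertex $v^j_{i_j}$ per color (leaving $c^j_{i_j},q^j_{i_j}$ present), and an edge candidate $f_e$ can be eliminated in this phase only after the vertex candidate of one of its endpoints has been eliminated. Hence all edge candidates can be eliminated during the initial phase if and only if $\{v^1_{i_1},\dots,v^\ell_{i_\ell}\}$ is independent --- this is the combinatorial heart, carried over from \Cref{th:Coombs-NP}.

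What I would change is the endgame. Instead of dummy votes whose sole role is to build up the \emph{eventual} score of $d$, I would place $d$ at the bottom of every ranking, so that $d$ has Plurality score $0$ throughout the initial phase and thus cannot be eliminated there; and I would add a constant-size cascade gadget together with a blocker $w$ of fixed moderate score, wired so that $d$ can gain points only along the cascade and the cascade can only be completed (and $w$ overtaken) once \emph{all} edge candidates and exactly one vertex candidate per color have already been removed. The target position would then be $k:=$ the length of this intended elimination prefix (one vertex choice per color, i.e.\ $2\ell(\nu-1)$ vertex candidates, plus $|E|$ edge candidates, plus $O(1)$ cascade steps, plus $1$), which is polynomial in the input --- this is permissible, since the parameter is $n$, not $k$. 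The whole profile uses $2\ell$ color rankings and $O(1)$ further rankings, so $n$ is bounded by a function of $\ell$, as W[1]-hardness in $n$ requires.

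Correctness would then follow the template of \Cref{th:Coombs-NP}. Forward direction: given a multicolored independent set, exhibit the explicit valid elimination order --- traverse the color rankings to delete all non-chosen vertex candidates and, using independence, all edge candidates, checking each deleted candidate is a Plurality winner in its round since all scores stay $\le 1$, then run the forced cascade, then eliminate $d$ --- placing $d$ in position $k$. Backward direction: prove the analogue of \Cref{claim:back}, namely that if $d$ is eliminated within the first $k$ rounds then $d$'s score was $0$ until the cascade, the cascade contributed only $O(1)$ points and is completable only after all edge candidates and one vertex candidate per color have been removed, and the budget $k-1$ is tight enough to leave no room for superfluous eliminations --- forcing $\{v^1_{i_1},\dots,v^\ell_{i_\ell}\}$ to be independent. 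The step I expect to be the real obstacle is precisely this backward direction: the cascade/blocker must be engineered so that $d$ cannot reach the top $k$ by triggering the cascade "too early" or by eliminating candidates out of order, and --- unlike in \Cref{th:Coombs-NP}, where $d$ merely had to \emph{tie} for Plurality winner and weak inequalities sufficed --- here one needs strict score inequalities to rule out a surviving edge candidate (or blocker) blocking $d$, so the multiplicities in the construction must be chosen with more care than in the Coombs reduction.
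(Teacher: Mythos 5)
Your reduction source and the color-ranking traversal gadget match the paper's, but the part you leave unspecified --- the endgame that turns ``$d$ is eliminated \emph{last}'' (\Cref{th:Coombs-NP}) into ``$d$ is eliminated \emph{within the first $k$ rounds}'' --- is exactly where the theorem's content lies, and the design you sketch cannot be made to work. Two concrete failures. First, a constant-size cascade cannot be prevented from firing early: throughout your initial phase every Plurality score is $0$ or $1$, so the cascade's entry candidate (the analogue of $g_1$) is already tied for winner in round~$1$; an adversarial execution runs the cascade immediately, gives $d$ its full $O(1)$ points, and eliminates $d$ in round $O(1)$ regardless of whether the graph has an independent set. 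There is no single ranking position you can give $g_1$ that delays it until ``after phase~1,'' because phase~1 deliberately leaves two vertex candidates per color alive. Second, nothing in your construction forces the edge candidates to be eliminated at all: in the \TopK setting a surviving edge candidate is harmless unless its score exceeds $d$'s, and in the Coombs gadget an edge candidate's score is bounded by $4 < 5$, so it never blocks $d$ from being a Plurality winner. (In \Cref{th:Coombs-NP} their elimination is forced only because \emph{every} candidate must eventually be removed before $d$; that leverage disappears here.) Your fallback --- that the budget $k-1$ ``leaves no room for superfluous eliminations'' --- also does not pin anything down, since an execution can trade extra vertex eliminations against missing edge eliminations while keeping the same total count.

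The paper's proof uses a genuinely different enforcement mechanism that you would need to discover: it drops the cascade and the $q$-candidates entirely, takes $\binom{\ell}{2}$ identical copies of each color ranking so that \emph{any} candidate eliminated in \emph{any} round must have score at least $\binom{\ell}{2}$, and adds one ``critical ranking'' per color pair of the form $[F^{j,j'}]\succ d\succ[B]$. Then $d$'s only possible source of $\binom{\ell}{2}$ points is being first in all critical rankings simultaneously, which happens exactly when every edge candidate has been eliminated; and the cap of $\binom{\ell}{2}$ on all scores in $d$'s round forces at least two surviving vertex candidates per color, which sandwich (and hence protect from elimination) any edge candidate joining two selected vertices. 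If you want to salvage your write-up, the key missing idea is to make $d$'s points come from rankings in which $d$ sits directly below the full block of edge candidates, with multiplicities high enough that this is $d$'s \emph{only} route to the maximum score --- not from a graph-independent constant-size cascade.
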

\begin{proof}
	We prove hardness by a reduction from the W[1]-hard \textsc{Multicolored Independent Set} problem parameterized  by the solution size~$\ell$.
	\paragraph{Construction} 
	In \textsc{Multicolored Independent Set}, we are given an $\ell$-partite graph $(V^1 \cupdot V^2 \cupdot \dots \cupdot V^{\ell}, E)$  and the question is whether there is an independent set~$X$ of size $\ell$ with $X\cap V^j\neq \emptyset$ for 
	all $j\in [\ell]$.
	To simplify notation, we assume that $V^j = \{v^j_1, \dots, v^j_\nu\}$ for all~$j\in [\ell]$.
	We refer to the elements of $[\ell]$ as \emph{colors} and say that a vertex $v$ has
	color~$j\in [\ell]$ if~$v\in V^j$. 
	Moreover, let $|E|=\mu$.
	
	We construct an instance of our problem by setting $k:=\mu+\ell\cdot (\nu-1)+1$. 
	We start by describing the candidate set. 
	For each $j\in [\ell]$ and $i\in [\nu+1]$, we introduce a \emph{vertex candidates} $c_i^j$ (notably candidates $c_{\nu+1}^j$ do not correspond to a vertex but act more like a dummy candidate). 
	Moreover, for each edge $e\in E$, we introduce an \emph{edge candidate} $f_e$. 
	For $j\in [\ell]$ and $i\in [\nu]$, let $F^j_{i}$ be the set of all edge candidates corresponding to edges incident to $v_i^j$. 
	Further, for $j\neq j' \in [\ell]$, let $F^{j,j'}$ be the set of all edge candidates corresponding to edges between a vertex of color $j$ and a vertex of color $j'$ (note that there are no edges between vertices of the same color). 
	Moreover, we introduce $k$ \emph{blocker candidates} $B=\{b_1,\dots, b_{k}\}$. 
	Lastly, we add our designated candidate $d$. 
	For a subset $C'$ of candidates, let $[C']$ be the lexicographic strict ordering of the candidates in $C'$. 
	
	Turning to the input rankings, we introduce for each color $j\in [\ell]$, $\binom{\ell}{2}$ copies of two types of \emph{color rankings}: 
		\begin{align*}
			&c^j_1\succ [F^j_1]\succ c^j_2\succ [F^j_2]\succ \dots \succ c^j_{\nu}\succ [F^j_{\nu}]\succ c^j_{\nu+1}\succ [B]\succ d,
			&&\forall t\in [\textstyle\binom{\ell}{2}] \\
			&c^j_{\nu+1}\succ [F^j_{\nu}]  \succ c^j_{\nu} \succ [F^j_{\nu-1}]  \succ c^j_{\nu-1} \succ \dots \succ [F^j_1] \succ c_j^1 \succ [B]\succ d. 
			&&\forall t\in [\textstyle\binom{\ell}{2}]
	\end{align*}
	
	Moreover, for each pair of colors $j\neq j'\in[\ell]$, we introduce the following \emph{critical rankings}: 
	\begin{align*}
	[F^{j,j'}]\succ d\succ [B].
	\end{align*}
	We complete all rankings arbitrarily. 
	\paragraph{Forward Direction}
	For the forward direction, assume that we are given a multicolored independent set $V'=\{v_{i_1}^1,\dots, v_{i_\ell}^{\ell}\}$. 
	
	We now describe a valid elimination order of the candidates for which $d$ is eliminated in round $k$.
	For $j=1$ to $j=\ell$, we do the following:
	We delete all candidates that are ranked before $c_{i_j}^j$ in the first type of color rankings for color $j$ in the order in which they appear in this type of ranking.
	Afterwards, we delete all candidates that are ranked before $c_{i_j+1}^j$ in the second type of color ranking for color $j$ in the order in which they appear in this type of ranking. 
	Thus, all candidates ranked before the blocker candidates in the color rankings, except $c_{i_j}^j$, $c_{i_j+1}^j$, and candidates from $F_{i_j}^j$ got deleted. 
	Thus, as $V'$ is an independent set, all edge candidates got deleted.
	As for each color two vertex candidates remain, this implies that $\mu+\ell\cdot (\nu-1)=k-1$ candidates got deleted.
	The resulting profile looks as follows: 
	\begin{align*}
	&c^j_{i_j}\succ  c^j_{i_j+1}\succ  [B]\succ d, \quad \forall j\in [\ell], t\in [\textstyle\binom{\ell}{2}]\\
	&c^j_{i_j+1}\succ  c^j_{i_j}\succ  [B]\succ d, \quad \forall j\in [\ell],t\in [\textstyle\binom{\ell}{2}]\\
	&d\succ [B], \quad \forall j\neq j' \in [\ell]
	\end{align*}
	Notably, $d$ is a Plurality winner in this profile so we can eliminate $d$ in round $k$. 
	
	\paragraph{Backward Direction}
	For the backward direction, assume that there is an execution of \seqWiAb{Plurality} for which $d$ is ranked in one of the first $k$ positions in the selected ranking.
	Let $k^*\leq k$ be the round in which $d$ is eliminated.
	Notably, as there exist $\binom{\ell}{2}$ identical rankings, the Plurality winner in each round, needs to have Plurality score at least $\binom{\ell}{2}$. 
	Moreover, as $d$ only appears in the $\binom{\ell}{2}$ critical rankings in one of the first $k^*$ positions, this implies that $d$ needs to be ranked first in all critical rankings in round $k^*$ and that all candidates have a Plurality of score at most $\binom{\ell}{2}$ in round $k^*$.
	From the first part of this observation it follows that all edge candidates need to be deleted before round $k^*$. 
	Using this and that all candidates have Plurality score at most $\binom{\ell}{2}$ in round $k^*$, we get that for each color $j\in [\ell]$, at least two vertex candidates corresponding to vertices of this color are present in round $k^*$: 
	If there is a color with no remaining vertex candidates in round $k^*$, then some blocker candidate will be ranked in the first position in the $2\binom{\ell}{2}$ rankings corresponding to this color in round $k^*$. 
	If there is a color with only one vertex candidate from this color remaining, then this candidate is ranked first in the first position in the $2\binom{\ell}{2}$ rankings corresponding to this color in round $k^*$.
	
	Now, for each color $j\in [\ell]$, let $i_j$ be the smallest $i$ such that $c^j_{i}$ is present in round $k^*$  and let $t_j\neq i_j$ be some other index such that $c^j_{t_j}$ is present in round $k^*$ (by our above observation both of these need to exist).
	We claim that $\{v^1_{i_1}, \dots , v^{\ell}_{i_{\ell}}\}$ is an independent set in the given graph. 
	Assume for the sake of contradiction that there are $j\neq j'\in [\ell]$ with $\{v_{i_j}^{j}, v_{i_{j'}}^{j'}\}=e\in E$. 
	We claim that in this case $f_e$ has not been eliminated before round $k^*$: 
	Recall that a candidate can only get eliminated if its Plurality score is at least $\binom{\ell}{2}$. 
	Moreover, as $f_e$ only appears in one of the first $k^*$ positions in one of the critical rankings, this means that in the round in which $f_e$ is eliminated it needs to be ranked in the first position in one of the color rankings for either color $j$ or $j'$. 
	However, note that in all of these rankings $f_e$ is either ranked between $c^j_{i_j}$ and $c^j_{t_j}$ or between $c^{j'}_{i_{j'}}$ and $c^{j'}_{t_{j'}}$. As all these four candidates are still present in round $k^*$ (and thus also in all previous rounds), it follows that $f_e$ was never ranked in the first position in one of the color rankings, a contradiction to all edge candidates being deleted before round $k^*$.  
\end{proof}

From \Cref{th:Lo-Ve-K-n}, using the equivalence between \seqLoAbS and \seqWiAbS from \Cref{lem:equiv}, we can directly conclude the following:
\begin{corollary}
	\PositionK for \seqWi{Plurality} is in XP with respect to the number $n$ of voters. 
\end{corollary}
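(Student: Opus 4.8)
The plan is to give a parameter-preserving reduction from \PositionK for \seqWi{Plurality} to \PositionK for \seqLo{Veto} and then invoke \Cref{th:Lo-Ve-K-n}. The only ingredients needed are \Cref{lem:equiv} and the fact, recorded in the preliminaries, that $\text{Plurality}^* = \text{Veto}$.

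Concretely, given an instance of \PositionK for \seqWi{Plurality} consisting of a profile $P$ over a candidate set $C$ with $|C| = m$, a designated candidate $d \in C$, and an integer $k \in [m]$, I would form the reversed profile $\rev(P)$ and set $k' := m - k + 1$. Computing $\rev(P)$ takes polynomial time, and crucially $\rev(P)$ has exactly the same number $n$ of voters as $P$. I would then argue equivalence of the two instances: applying \Cref{lem:equiv} with $\vs = \text{Plurality}$ (so that $\vs^* = \text{Veto}$), for every ${\succ} \in \mathcal{L}(C)$ we have ${\succ} \in \seqWi{Plurality}(P)$ if and only if $\rev({\succ}) \in \seqLo{Veto}(\rev(P))$. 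Since by definition of $\rev$ we have $\rank(\succ, d) = k$ if and only if $\rank(\rev(\succ), d) = m - k + 1 = k'$, it follows that some ranking selected by \seqWi{Plurality} on $P$ places $d$ in position $k$ if and only if some ranking selected by \seqLo{Veto} on $\rev(P)$ places $d$ in position $k'$. Thus $(P, d, k)$ is a yes-instance of \PositionK for \seqWi{Plurality} exactly when $(\rev(P), d, k')$ is a yes-instance of \PositionK for \seqLo{Veto}.

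Finally, I would apply the XP algorithm from \Cref{th:Lo-Ve-K-n} to the reduced instance $(\rev(P), d, k')$; since $\rev(P)$ has $n$ voters, this runs in time $f(n) \cdot \mathrm{poly}(m, n)$-style XP with respect to $n$, and the polynomial-time reduction does not affect the parameter, yielding an XP algorithm (in $n$) for \PositionK for \seqWi{Plurality}. There is essentially no obstacle: the whole argument reduces to checking that $r \mapsto m - r + 1$ is the position bijection induced by $\rev$, that reversing a profile preserves the number of voters, and that $\text{Plurality}^* = \text{Veto}$, all of which are immediate from the definitions and \Cref{lem:equiv}. (The same reasoning was already used in the text to derive the W[1]-hardness and NP-hardness results for \seqWi{Plurality} from the corresponding \seqLo{Veto} results, so this corollary is simply the positive-side counterpart.)
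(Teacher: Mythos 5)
Your proof is correct and is exactly the paper's intended argument: the paper derives this corollary directly from \Cref{th:Lo-Ve-K-n} via the reversal equivalence of \Cref{lem:equiv} with $\text{Plurality}^* = \text{Veto}$, just as you do. Your write-up merely spells out the position bijection $k \mapsto m-k+1$ and the fact that reversal preserves the number of voters, which the paper leaves implicit.
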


Moreover, if we combine the two parameter $n$ and $k$, for each of which we have proven W[1]-hardness, then we can obtain fixed-parameter tractability. 
The core observation here is that in the first $k$ rounds of an execution of \seqWiAb{Plurality} at most $n\cdot k$ candidates can have a non-zero Plurality in one of the rounds (these are the candidates which are ranked in one of the first $k$ positions of some voter). 
All other candidates will be eliminated immediately (in some arbitrary order). 
Subsequently we can apply our algorithm from \Cref{th:parameter-m}. 
\begin{observation}
\PositionK for \seqWi{Plurality} is solvable in $\mathcal{O}(2^{nk}\cdot nm^2)$ time.
\end{observation}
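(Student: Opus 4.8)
The plan is to cut the candidate set down to at most $nk$ "active" candidates and then run the dynamic program of \Cref{th:parameter-m} on this reduced set. The key structural observation is that during the first $k$ rounds of \emph{any} execution of \seqWi{Plurality}, only candidates occurring among the first $k$ positions of some vote can be eliminated. To see this, fix a round $i\le k$ of an execution and let $c$ be a Plurality winner in the corresponding restricted profile. Since this profile has $n\ge 1$ voters, the total number of Plurality points is $n\ge 1$, so the maximum Plurality score is at least $1$ and $c$ is ranked first by some voter $j$ in the current round. Hence every candidate originally above $c$ in $\succ_j$ has been eliminated in rounds $1,\dots,i-1$, and since at most $i-1\le k-1$ candidates have been removed so far, the original position of $c$ in $\succ_j$ is at most $i\le k$. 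Therefore $c$ lies in the set $S:=\bigcup_{j\in[n]}\{\cand(\succ_j,1),\dots,\cand(\succ_j,k)\}$, which has size at most $nk$.

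Two consequences follow. First, by the same argument, if $d\notin S$ then $d$ can never be a Plurality winner within the first $k$ rounds, so $d$ cannot end up in position $k$ and the algorithm returns ``no'' (the only boundary case is $k=m$, where $S=C$ and hence $d\in S$ automatically). Second, every elimination set (in the sense of \Cref{th:parameter-m}) of size at most $k-1$ is a subset of $S$, because each of its members is eliminated as a Plurality winner in one of the first $k-1$ rounds. Thus it suffices to run the dynamic program of \Cref{th:parameter-m} while restricting the table $T[\cdot]$ to subsets $C'\subseteq S$, and the final acceptance test only examines subsets $C'\subseteq S\setminus\{d\}$ with $|C'|=k-1$ together with the check whether $d$ is a Plurality winner in $P|_{C\setminus C'}$. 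Because a candidate placed in position $k$ is precisely a Plurality winner eliminated in round $k$ (whose $k-1$ predecessors form an elimination set of size $k-1$, hence a subset of $S$), no valid configuration is lost by this restriction, and correctness is inherited from \Cref{th:parameter-m}.

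The running time is $\mathcal{O}(2^{|S|}\cdot nm^2)=\mathcal{O}(2^{nk}\cdot nm^2)$: there are at most $2^{|S|}\le 2^{nk}$ subsets of $S$ to consider, and handling each one (recomputing Plurality scores in the relevant restricted profile and checking the recurrence over the at most $m$ candidates that could be removed last) costs $\mathcal{O}(nm^2)$ time, as in \Cref{th:parameter-m}. The only mildly delicate step is pinning down the confinement claim (using $n\ge1$, that at least two candidates remain in any round before the last, and the $k=m$ corner case); everything else is a routine adaptation of the already-established dynamic program, so I do not anticipate a genuine obstacle.
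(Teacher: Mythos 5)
Your proposal is correct and follows essentially the same route as the paper: the paper's justification is precisely that only the at most $nk$ candidates ranked among the first $k$ positions of some vote can be Plurality winners (and hence be placed) during the first $k$ rounds, after which the dynamic program of \Cref{th:parameter-m} is run restricted to this candidate set. Your write-up merely makes the confinement argument (positive maximum score, hence a current-round top position, hence original position at most $k$) and the $d\notin S$ boundary case explicit, which is a faithful elaboration rather than a different approach.
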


\subsection{Veto}
We now turn to \seqWiAb{Veto} and start by proving that \TopK is NP-hard and W[2]-hard with respect to $k$ for this rule. 

\begin{proposition}
	\TopK for \seqWi{Veto} is NP-hard and W[2]-hard with respect to $k$. 
\end{proposition}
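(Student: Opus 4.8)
The plan is to reduce from \textsc{Set Cover} (equivalently \textsc{Dominating Set} or \textsc{Hitting Set}), which is NP-hard and W[2]-hard parameterized by the solution size~$\ell$; since the target position~$k$ in the construction will be exactly $\ell+1$, this yields both claimed hardness results at once. One could equivalently phrase the argument through \Cref{lem:equiv}, showing W[2]-hardness of deciding whether $d$ can occupy one of the \emph{last}~$k$ positions under \seqLoAb{Plurality} (i.e.\ be among the first~$k$ candidates eliminated by STV); the two constructions are mirror images of each other.

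I would set up the reduction directly for \seqWiAb{Veto}. Given sets $S_1,\dots,S_s$ over a universe~$U$ and a parameter~$\ell$ (we may assume $2\le\ell<s$), introduce a \emph{set candidate}~$c_i$ for each~$S_i$, an \emph{element candidate}~$a_u$ for each~$u\in U$, the designated candidate~$d$, and one dummy candidate~$\delta$; set $k:=\ell+1$. Recall that \seqWiAb{Veto} repeatedly deletes a candidate ranked last in the fewest surviving votes, and that the last-place count of a candidate is \emph{monotone non-decreasing} along any execution. Using votes whose top segment lists all remaining candidates in a fixed order ending with~$\delta$, I would build the profile so that (i)~the set candidates form a ``clique'' --- for every ordered pair $i\ne j$ there is a vote $[\cdots]\succ c_j\succ c_i$, so deleting~$c_i$ raises the count of every other set candidate by exactly one; (ii)~for each~$i$ and each~$u\in S_i$ there is a vote $[\cdots]\succ a_u\succ c_i$, so deleting~$c_i$ raises the count of~$a_u$ by one (``element~$u$ becomes covered by~$S_i$''), whereas deleting~$a_u$ itself only ever feeds~$\delta$; and (iii)~with suitable padding votes (always placing~$\delta$ penultimately, so that votes absorbed from~$\delta$ are harmless) the initial last-place counts are $\mu$ for every set candidate, $\mu+\ell-1$ for every element candidate, $\mu+\ell$ for~$d$, and a value exceeding everything for~$\delta$, where $\mu=s-1+\max_i|S_i|$. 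In particular the set candidates are the unique candidates with minimum initial count.

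For the forward direction, if $S_{i_1},\dots,S_{i_\ell}$ covers~$U$, then deleting $c_{i_1},\dots,c_{i_\ell}$ in the first~$\ell$ rounds is a valid execution: in round~$p$ the candidate~$c_{i_p}$ has count $\mu+p-1$, which is exactly the current minimum. Afterwards every surviving set candidate has count~$\mu+\ell$, every element candidate has been covered and hence has count~$\ge\mu+\ell$, and~$d$ has count~$\mu+\ell$, so~$d$ can be deleted in round~$\ell+1$, i.e.\ placed in position $\ell+1=k$. For the backward direction, monotonicity rules out deleting~$d$ in any of the first~$\ell$ rounds (its count~$\mu+\ell$ strictly exceeds the minimum, which stays at most $\mu+\ell-1$ as long as two set candidates survive), so if~$d$ reaches a top-$k$ position it is deleted exactly in round~$\ell+1$. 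A short analysis of which candidates can carry the minimum count in rounds~$1,\dots,\ell$ then shows that the first~$\ell$ deletions are forced to be set candidates $S_{i_1},\dots,S_{i_\ell}$ (deleting an element candidate in round~$\ell$ would leave some surviving set candidate below~$d$'s threshold, so~$d$ could not follow in round~$\ell+1$), and that for~$d$ to attain the minimum in round~$\ell+1$ every element candidate must have been bumped --- i.e.\ $\{S_{i_1},\dots,S_{i_\ell}\}$ must cover~$U$.

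The main obstacle, and the reason the construction relies on the clique gadget rather than a constant number of ``blocker'' candidates (as in the single-winner NP-hardness proofs), is precisely the monotonicity of Veto scores: a last-place count can only grow, so~$d$ becomes eliminable only once \emph{every} competitor has been pushed up to~$d$'s level, and the set candidates outside a cover must not remain a permanent obstruction. The clique ensures that deleting any~$\ell$ set candidates lifts all the others to precisely~$d$'s threshold~$\mu+\ell$, reducing ``$d$ is now eliminable'' to ``all elements are covered''. Tuning the exact offsets ($\mu+\ell-1$ for element candidates, $\mu+\ell$ for~$d$) so that the transitions happen in the right rounds, and checking that deleting element candidates early cannot shortcut the process, is the routine but slightly delicate part of the argument.
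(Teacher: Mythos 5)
Your proposal is correct and follows essentially the same route as the paper's proof: a reduction from \textsc{Hitting Set}/\textsc{Set Cover} with $k=\ell+1$, a ``clique'' gadget among the selectable candidates so that each deletion lifts all remaining ones by exactly one veto, coverage votes lifting the to-be-covered candidates, a single absorbing blocker, and the count thresholds $\mu$, $\mu+\ell-1$, $\mu+\ell$ for the clique, the covered candidates, and $d$ respectively. The only difference is that you dualize the set system (your clique is on set candidates covering elements, the paper's is on element candidates hitting sets), which is immaterial.
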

\begin{proof}
	We reduce from \textsc{Hitting Set}, where given a universe $U$ and a family of sets $\mathcal{S}$ and an integer $\ell$, the question is whether there is an $\ell$-subset of the universe containing at least one element from each set from $\mathcal{S}$, i.e., $U'\subseteq U$ with $|U'|=\ell$ and $S\cap U'\neq \emptyset$ for all $S\in \mathcal{S}$.
	\textsc{Hitting Set} is W[2]-hard when parameterized by $\ell$. 
	Let $\nu:=|U|$ and $\mu:=|\mathcal{S}|$. 
	For an element $u\in U$, let $\mathcal{S}_u$ denote the family of sets in which $u$ appears. 
	That is, $\mathcal{S}_u=\{S\in \mathcal{S}\mid u\in S\}$. 
	
	We construct an instance of \TopK as follows. 
	For each element $u\in U$, we introduce an \emph{element candidate} $c_u$. 
	For each set $S\in \mathcal{S}$, we introduce a \emph{set candidate} $e_S$. 
	Lastly, we introduce a \emph{blocker candidate} $b$ and the designated candidate $d$. 
	We set $k:=\ell+1$. 
	
	For convenience (to avoid talking about negative numbers), we say that the \emph{bottom count} of a candidate is the number of times the candidate is ranked in last position. Thus the bottom count is the negative of the veto score, and \seqWi{Veto} proceeds by eliminating candidates with the lowest bottom count. 
	
	We now turn to the description of the ranking profile. 
	We first add the following rankings: 
	\begin{align*}
	\dots \succ b \succ e_S \succ c_u, \qquad &\forall u\in U \text{ and } S\in \mathcal{S}_u \\
	\dots \succ b \succ c_{u'} \succ c_u, \qquad &\forall u\in U \text{ and } u'\in U \setminus \{u\}\\
	\dots \succ b \succ e_S, \qquad & \forall S \in \mathcal{S} \text{ and } i\in [\nu+\mu+\ell-1]\\
	\dots \succ d, \qquad & \forall i\in [\nu+\mu+\ell]\\
	\dots \succ b, \qquad & \forall i\in [\nu+\mu+\ell+1]
	\end{align*}
	
	Note that with these rankings, for each element $u\in U$, the candidate $c_u$ has bottom count at most $\nu+\mu$. For each element $u\in U$, we add several copies of the ranking $\dots \succ b \succ c_u$ until $c_u$ has bottom count exactly $\nu+\mu$. 
	Thus, the bottom counts of the candidates in the initial profile are as follows: 
	\begin{itemize}
		\item For each $u\in U$, $c_u$ has a count of $\nu+\mu$. 
		\item For each $S\in \mathcal{S}$, $e_S$ has a count of $\nu+\mu+\ell-1$. 
		\item Candidate $b$ has a count of $\nu+\mu+\ell+1$.
		\item The designated candidate $d$ has a count of $\nu+\mu+\ell$. 
	\end{itemize}
	Initially, all element candidates $c_u$ are Veto winners, having the lowest bottom count. 
	Note that the bottom count of uneliminated candidates can only increase over time. 
	Thus, for $d$ to be a Veto winner in some round, each element candidate needs to be either deleted or ranked last in at least $\ell$ additional rankings, and also each set candidate needs to be either deleted or ranked last in at least one additional vote.

	Let $U'=\{u_1, \dots, u_\ell\}$ be a hitting set of size $\ell$. 
	Then, in the first round we eliminate $c_{u_1}$, increasing the count of each other element candidate by $1$, and increasing the score of each set candidate corresponding to a set from $\mathcal{S}_{u_1}$ by $1$. 
	All remaining element candidates are still veto winners and we continue eliminating $c_{u_i}$ for $i = 2, \dots, \ell$.
	After round $\ell$, each remaining element candidate has count $\nu+\mu+\ell$. 
	Moreover, as $U'$ is a hitting set, each set candidate also has count at least $\nu+\mu+\ell$. 
	This means that $d$ is a Veto winner in round $\ell+1=k$ and we can eliminate it.
	
	Conversely, assume that there is an execution of \seqWiAb{Veto} such that $d$ is eliminated in round $\ell+1$ or earlier. 
	As each element candidate either needs to be ranked last in $\ell$ additional rankings or deleted, in the first $\ell$ rounds element candidates need to be eliminated. 
	Let $U'\subseteq U$ be the subset of elements that correspond to the eliminated element candidates. 
	Then in case $U'$ does not form a hitting set, there is a set candidate that still has count $\nu+\mu+\ell-1$ in round $\ell+1$, and thus in particular a lower count than the designated candidate $d$, a contradiction. 
\end{proof}

By applying \Cref{lem:equiv} to \Cref{ob:STV-n}, we get that \PositionK for \seqWiAb{Veto} is fixed-parameter tractable with respect to $n$. 
\begin{corollary}
	\PositionK for \seqWi{Veto} is solvable in $\mathcal{O}(2^n\cdot nm^2)$ time. 
\end{corollary}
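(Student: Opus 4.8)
The plan is to obtain this as an immediate consequence of the winner/loser duality in \Cref{lem:equiv} together with the STV algorithm of \Cref{ob:STV-n}. First I would recall that $\text{Veto}^* = \text{Plurality}$. Hence \Cref{lem:equiv} applied with $\vs = \text{Veto}$ states that, for any profile $P$ over a candidate set $C$ with $|C| = m$ and any ranking ${\succ} \in \mathcal{L}(C)$, we have ${\succ} \in \seqWi{Veto}(P)$ if and only if $\rev(\succ) \in \seqLo{Plurality}(\rev(P))$.

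Next I would translate the \PositionK question across this bijection. Given an instance $(P, d, k)$ of \PositionK for \seqWi{Veto}, observe that a ranking $\succ$ has $\rank(\succ, d) = k$ exactly when its reverse has $\rank(\rev(\succ), d) = m - k + 1$, since reversing a ranking maps position $r$ to position $m - r + 1$. Combining this with the equivalence above, there exists ${\succ} \in \seqWi{Veto}(P)$ with $\rank(\succ,d) = k$ if and only if there exists a ranking ${\succ'} \in \seqLo{Plurality}(\rev(P))$ with $\rank(\succ', d) = m - k + 1$. In other words, $(P, d, k)$ is a yes-instance of \PositionK for \seqWi{Veto} if and only if $(\rev(P), d, m-k+1)$ is a yes-instance of \PositionK for \seqLo{Plurality} (aka STV).

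This reduction is computed in $\mathcal{O}(nm)$ time, since $\rev(P)$ is obtained by reversing each of the $n$ rankings of length $m$, and $\rev(P)$ has the same number $n$ of voters as $P$. Applying the algorithm of \Cref{ob:STV-n} to the instance $(\rev(P), d, m-k+1)$ then solves the problem in $\mathcal{O}(2^n \cdot nm^2)$ time, which dominates the cost of the reduction and yields the claimed bound.

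Honestly there is no real obstacle here: the only thing to be careful about is the bookkeeping of the target position under reversal (mapping $k$ to $m - k + 1$) and the fact that $\rev(\cdot)$ preserves the number of voters, so that the ``$2^n$'' factor transfers unchanged. Everything else is a direct instantiation of \Cref{lem:equiv} and \Cref{ob:STV-n}.
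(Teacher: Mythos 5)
Your proposal is correct and follows exactly the route the paper intends: the corollary is obtained by applying \Cref{lem:equiv} (with $\text{Veto}^*=\text{Plurality}$) to \Cref{ob:STV-n}, and your careful translation of the target position to $m-k+1$ under reversal is precisely the bookkeeping the paper leaves implicit. Nothing is missing.
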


\subsection{Borda}
We conclude by studying \seqWiAb{Borda}. Recall \Cref{rem:c2-borda} which showed that it suffices to reason about the weighted majority graph induced by a profile.

 \begin{theorem}\label{th:seqWi-Borda-n}
 	\TopK for \seqWi{Borda} is NP-complete for $n = 8$.
 \end{theorem}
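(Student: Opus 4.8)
The plan is to prove NP-completeness by first checking membership in NP and then giving a polynomial-time reduction from \textsc{Cubic Vertex Cover}, reusing the weighted-majority-graph machinery of \Cref{th:Baldwin-NP}. Membership is immediate: given $(P,d,k)$ we guess a full ranking $\succ$ with $\rank(\succ,d)\le k$ and verify, round by round, that the next candidate of $\succ$ is always a Borda winner of the current subprofile; by \Cref{rem:c2-borda} it suffices to track the weighted majority graph, so this check is polynomial. For hardness I would mirror the reduction behind \Cref{th:Baldwin-NP}, but ``turned around'': instead of forcing $d$ to survive until the last round of a \seqLo{Borda} execution, the construction forces $d$ to become a Borda \emph{winner} within the first $k$ rounds of a \seqWi{Borda} execution exactly when the cubic graph has a vertex cover of size $t$. (Equivalently, using \Cref{lem:equiv} and the fact that $\text{Borda}^*$ equals Borda up to an irrelevant shift, this is the same as showing that it is NP-hard to decide whether $d$ can be eliminated among the first $k$ losers of a \seqLo{Borda} execution.) As in \Cref{th:Baldwin-NP}, I would only describe the induced weighted majority graph, with all arcs of weight $2$, and realize it as an $8$-voter profile by writing it as a union of at most four bilevel graphs and applying \Cref{lem:bilevel}; \Cref{lem:bilevel} already yields exactly the required number of voters.

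The candidate set would consist of a vertex candidate for each $v\in V$, an edge candidate for each $e\in E$, the designated candidate $d$, and several groups of dummy candidates whose cardinalities are tuned (in the spirit of the $B,F,G,H,K$ gadgets of \Cref{th:Baldwin-NP}). The arcs would be chosen so that any execution of \seqWi{Borda} that places $d$ early must pass through phases: first a fixed number of ``booster'' dummies are forced to be selected (they have uniquely maximal C2-Borda score); then for $t$ rounds only vertex candidates can be selected, where selecting the candidate of $v$ corresponds to putting $v$ into the cover and each such selection raises the score of $d$ (say, every vertex candidate beats $d$) while also shifting the scores of the edge candidates incident to $v$; then a further fixed number of dummy eliminations occur; and finally $d$ can be selected if and only if every edge candidate has already ``caught up'', which happens iff every edge has an incident selected vertex, i.e.\ the selected vertices form a vertex cover. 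I would set $k$ to be this constant number of forced rounds plus $t$ plus $1$.

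Correctness then splits as usual. For the forward direction, from a vertex cover $T$ of size $t$ I exhibit the explicit valid execution selecting the booster dummies, then the candidates of $T$ (in any order), then the remaining forced dummies, and finally $d$, verifying along the way that each selected candidate has maximal C2-Borda score in its round; this uses $k$ rounds. For the backward direction, I argue that in any execution with $\rank(\succ,d)\le k$ the opening phases are essentially forced, so the at most $t$ vertex candidates selected before $d$ must cover every edge — otherwise some edge candidate never becomes a winner in time and blocks $d$ from being selected within $k$ rounds.

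The main obstacle is precisely this enforcement argument, exactly as in \Cref{th:Baldwin-NP}: one must pick the dummy-group sizes and the arc pattern so that no ``shortcut'' lets $d$ be selected early by deviating from the intended order, and so that ties never let the process get ``stuck'' having selected fewer than $t$ vertices while $d$'s score has not yet caught up (the gaps between consecutive relevant scores must stay at least $2$ at the decisive moments). Carrying out this C2-Borda bookkeeping while simultaneously keeping the whole weighted majority graph expressible as four bilevel graphs — so that $8$ voters suffice — is where essentially all of the work lies; the rest is routine.
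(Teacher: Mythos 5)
Your high-level strategy coincides with the paper's: reduce from \textsc{Cubic Vertex Cover}, describe only a weight-$2$ weighted majority graph, decompose its arc set into four bilevel graphs so that \Cref{lem:bilevel} yields an $8$-voter profile, and enforce a phase structure (vertex candidates, then blockers, then $d$) so that $d$ can be selected by round $k$ iff the selected vertices form a cover. The NP-membership argument is also fine. However, the proposal stops exactly where the proof begins: you never specify the dummy groups, their cardinalities, the arc pattern, or the value of $k$, and you yourself concede that the score bookkeeping and the no-shortcut argument — i.e., essentially the entire content of the theorem — remain to be done. A plan of this kind cannot be checked for correctness, because the whole difficulty is in choosing numbers that simultaneously (i) make the intended execution valid, (ii) rule out deviating executions, and (iii) keep the arc set partitionable into four bilevel graphs (which is where cubicity is actually used: each vertex has exactly three incoming edge-arcs, one per bilevel piece).

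Two concrete points where your sketch diverges from a working construction and would need repair or verification. First, you propose that every vertex candidate beats $d$ so that selecting cover vertices ``raises the score of $d$''; the paper instead keeps $d$'s C2-Borda score constant at $2$ (via a single arc $(d,f)$ to a dummy $f$ that is never selectable) and arranges for everyone else to \emph{fall} below $2$: vertices start at $6$ (beating six blockers $B$, beaten by their three edges), edges at $4$, and $B$ at $6-2t$ so that $B$ reaches $6$ precisely after $t$ vertex selections; after $B$ is gone, surviving vertices drop to $-6$ and an edge drops to at most $2$ iff one of its endpoints was selected. Your ``rising $d$'' variant might be made to work, but it is a different calibration problem and you have not solved it. Second, your ``booster dummies selected first'' phase does not appear in (and is not needed by) the paper's construction; the first $t$ rounds already select vertex candidates because they have the uniquely maximal score $6$. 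With $k=t+7$ (the $t$ cover vertices, the six $B$-candidates, then $d$) the paper's arc set
\[
A=\{(e,v): v\in e\}\cup\{(d,f)\}\cup(V\times B)\cup(B\times H)
\]
splits as $(V\times B)$, $\{(e_v^1,v)\}\cup(B\times H)$, $\{(e_v^2,v)\}\cup\{(d,f)\}$, $\{(e_v^3,v)\}$, each bilevel, giving exactly $8$ voters. Until you exhibit such a concrete graph and verify both directions of the correctness argument against it, the proof is incomplete.
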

 
 \begin{proof}
 	We reduce from \textsc{Cubic Vertex Cover}. Let $G=(V,E)$ be a graph with $q$ vertices where each vertex $v \in V$ is incident to exactly 3 edges, and let $t$ be the target vertex cover size. 
 	We construct an instance of the \TopK problem as follows.
 	
 	The candidate set consists of one candidate for each vertex, one candidate for each edge, a designated candidate $d$, and dummy candidates $f$, candidates $B = \{b_1, \dots, b_6\}$, and candidates $H = \{h_1, \dots, h_{q+3-t}\}$.
 	Let $C = V \cup E \cup \{d\} \cup \{f\} \cup B \cup H$. 
 	We set $k = t + 7$.
 	The ranking profile will be constructed so as to induce a desired weighted majority graph, where all arcs will have weight 2. The arcs are as follows:
 	\begin{align*}
 		A = &\{ (e, v) \in E \times V : e \in E, v\in e \}  \cup {} \\
 		&\{ (d,f) \} \cup (V\times B) \cup (B \times H).
 	\end{align*}
 	
 	\begin{figure}
 		\centering
 	\begin{tikzpicture}
 		[multinode/.style={circle,draw=black,inner sep=1.5pt,minimum width=10pt},
 		element/.style={circle,draw=black,inner sep=1.5pt,minimum width=4pt},
 		set/.style={rectangle,draw=black,inner sep=3pt,minimum width=4pt},
 		every edge quotes/.style={fill=white,inner sep=0.5pt},
 		every label/.style={color=red!60!black}]
 		\node [label={[label distance=-5pt]30:$4$}] (e) {$e$};
 		\node [below left =0.6cm and 0.4cm of e] (v1) {$v_1$};
 		\node [label={[label distance=-6pt]30:$6$},below right=0.6cm and 0.4cm of e] (v2) {$v_2$};
 		\node [multinode, label={[label distance=4pt]20:$6-2t$}, below=1.8cm of e] (B) {$B$};
 		\node [multinode, label={[label distance=-2pt]20:$-12$}, below=0.6cm of B] (H) {$H$};
 		\node [label={[label distance=-3pt]30:$2$},right=1.2cm of v2] (d) {$d$};
 		\node [label={[label distance=-6pt]30:$-2$},below of=d] (f) {$f$};
 		\draw[-latex]
 		(e) edge (v1)
 		edge (v2)
 		(v1) edge (B)
 		(v2) edge (B)
 		(d) edge (f)
 		(B) edge (H);
 	\end{tikzpicture}
 	\caption{An illustration of the reduction of \Cref{th:seqWi-Borda-n}. All arcs have weight 2. Red superscripts denote the difference between the weight of outgoing and ingoing arcs for a candidate.} \label{fi:seqWi-Borda-n}
 \end{figure}
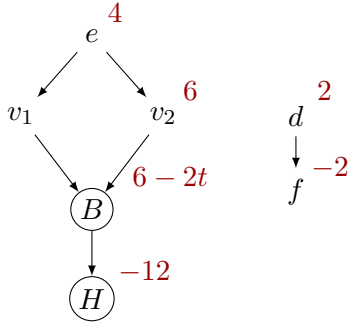

 	The constructed weighted majority graph is depicted in \Cref{fi:seqWi-Borda-n}. 
 	We now describe how to write the weighted majority graph $(C,A)$ as a sum of 4 bilevel graphs (as defined in \Cref{lem:bilevel}). 
 	For each vertex $v$, we  label the three edges incident to it arbitrarily as $e_v^1$, $e_v^2$, $e_v^3$. Consider the following arc sets:
 	\begin{align*}
 		A_1 &= (V \times B), \\
 		A_2 &= \{ (e_v^1, v) : v \in V \} \cup  (B \times H), \\
 		A_3 &= \{ (e_v^2, v) : v \in V \} \cup \{(d,f)\}, \\
 		A_4 &= \{ (e_v^3, v) : v \in V \}.
 	\end{align*}
 	It is clear that each of these sets describe bilevel graphs, that they are pairwise disjoint, and that $A = A_1 \cup A_2 \cup A_3 \cup A_4$.
 	By invoking \Cref{lem:bilevel}, we get a profile $P$ containing $8$ voters with the depicted weighted majority graph.  
 	
 	The C2-Borda scores (for a definition, see \Cref{rem:c2-borda}) in this profile are:
 	\begin{itemize}
 		\item $d$ has score 2
 		\item each $b \in B$ has score $6-2t$
 		\item each $v \in V$ has score 6 (since $v$ is incident to 3 edges, and beats 6 $b$-candidates)
 		\item each $e \in E$ has score 4 (since $e$ is incident to 2 vertices).
 		\item each $h \in H$ as well as $f$ have negative scores (and can only have non-positive scores throughout the elimination process because they do not beat any candidates) and will not be selectable in the first $k$ rounds.
 	\end{itemize}
 	
 	Suppose that $T = \{v_1, \dots, v_t\}$ is a vertex cover of $G$.
 	Then the following is a valid start of an elimination ordering, with $d$ eliminated in round $k = t + 7$: 
 	First, we eliminate all candidates from $T$ in some arbitrary ordering, then all candidates from $B$ in some arbitrary ordering, and then $d$.
 	Explicitly, in the first $t$ rounds, the maximum Borda score of a candidate is $6$ and all vertex candidates have a Borda score of $6$ (no other candidates have a Borda score of $6$ in these rounds). Thus we can select members of the vertex cover $T$ in each of these rounds. In round $t+1$, the remaining vertex candidates and the candidates in $B$ have the maximum Borda score of $6$. Thus, we can eliminate candidates in $B$ (while doing so, the Borda scores of vertex candidates decrease, so candidates in $B$ continue having the maximum Borda score). After all $6$ candidates in $B$ are eliminated, we are in the following situation with respect to the remaining candidates' C2-Borda score:
 	\begin{itemize}
 		\item $d$ has score $2$
 		\item each $e$ has score at most $2$ (since we have eliminated a vertex cover, and thus have eliminated at least one candidate that $e$ beats)
 		\item each remaining $v \in V$ has score $-6$; each $h \in H$ and $f$ have non-positive scores.
 	\end{itemize}
 	Hence, at this point, candidate $d$ has the highest Borda score and can be eliminated.
 	
 	Conversely, suppose there is a ranking selected by \seqWi{Borda} where $d$ is eliminated in round $k = t+7$ or earlier. As observed above, in the first $t$ rounds, only vertex candidates can be eliminated. Let $T = \{v_1, \dots, v_t\}$ be the set of vertices whose candidates are eliminated in these rounds. From round $t+1$ until $t+6$, all candidates in $B$ have score at least $6$ (it cannot go lower because the candidates that $B$ beats, namely $H$, cannot be eliminated). Because $d$ has score only $2$,  all the $6$ candidates in $B$ are eliminated before $d$. This brings us to round $t + 7$ where by assumption $d$ is eliminated. Hence at this point, no candidate has score higher than $2$. In particular, for every edge $e \in E$, its score is less than $4$. This can only have happened if at least one of the vertices incident to $e$ has been eliminated and is thus part of $T$. It follows that $T$ is a vertex cover.
 \end{proof}

 \begin{theorem}\label{th:seqWi-Borda-k}
	\TopK for \seqWi{Borda} is W[2]-hard with respect to $k$.
\end{theorem}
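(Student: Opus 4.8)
The plan is to reduce from \textsc{Hitting Set} parameterised by the solution size~$\ell$ (given a universe~$U$, a family $\mathcal{S}$ of subsets, and an integer~$\ell$, is there an $\ell$-element subset of~$U$ meeting every set in~$\mathcal{S}$?), which is W[2]-complete. Since the instance we build will use $k = \ell+3$, this transfers W[2]-hardness to the parameter~$k$. Following the style of \Cref{th:seqWi-Borda-n} and using \Cref{rem:c2-borda}, I will describe the reduced instance through the weighted majority graph it should induce, with all arcs of weight~$2$, and realise it by a ranking profile with polynomially many voters via \citet{McGa53a} and \citet{Debo87a} (the exact number of voters is irrelevant for this parameter). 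Recall that deleting a candidate changes a neighbour's C2-Borda score by exactly $\pm 2$.

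We may assume every $S \in \mathcal{S}$ is non-empty and $\nu := |U| > \ell$ (otherwise the instance is trivial). The candidates are: an \emph{element candidate} $c_u$ for each $u \in U$; a \emph{set candidate} $e_S$ for each $S \in \mathcal{S}$; two \emph{blockers} $b_1, b_2$; a set $H$ of $\nu-\ell+3$ \emph{helpers}; the designated candidate $d$; and private dummy candidates (including, for each $S$, auxiliary candidates used only to equalise scores). The arcs are: $c_u \to b_1, b_2$ for every $u$; $e_S \to c_u$ for every $S$ and $u \in S$; $b_i \to h$ for every $i$ and $h \in H$; $d$ beats one private dummy; together with private-dummy arcs chosen so that every $c_u$ has C2-Borda score~$6$, every $e_S$ has score~$4$, and $d$ has score~$2$, and so that the following invariants hold: (a) in the initial profile $6$ is the strict maximum score; (b) $d$'s score stays at~$2$ forever, since $d$ has no arc to any $c_u$ or~$b_i$; (c) deleting $c_u$ raises each blocker's score by~$2$, lowers $e_S$ by~$2$ for each $S \ni u$, and has no other relevant effect (the auxiliary candidates affected stay below~$2$); (d) hence, after any $\ell$ element candidates have been deleted, each blocker has score exactly~$6$; (e) deleting a blocker lowers each surviving element candidate's score by~$2$ and does not change the other blocker. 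Set $k := \ell + 3$.

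For the forward direction, given a hitting set of size $\le \ell$, extend it to size exactly~$\ell$; delete the corresponding element candidates in rounds $1,\dots,\ell$ (valid by~(a)), then delete $b_1$ and $b_2$ in rounds $\ell+1, \ell+2$ (valid: by~(d) the blockers tie the surviving element candidates at~$6$ after round~$\ell$, and by~(e) after deleting $b_1$ the surviving element candidates drop to~$4$ while $b_2$ stays at~$6$). In round $k=\ell+3$ every surviving element candidate has score~$2$, every set candidate has score $\le 2$ because it was hit, every helper has score $0$, $d$ has score~$2$, and all dummies have score $\le 2$; hence $d$ is a winner and can be ranked in position~$k$. For the backward direction, assume some selected ranking deletes $d$ in some round $r \le k$. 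By~(a) and the score bounds, in each of the first $\ell$ rounds the maximum is attained only by surviving element candidates (blockers are still below~$6$), so the first $\ell$ deletions are element candidates, say $U' \subseteq U$ with $|U'|=\ell$. Since $d$'s score is the constant~$2$ (by~(b)), any surviving blocker has score $\ge 6$ (by~(d), as $\ge \ell$ element candidates are gone), and any surviving element candidate has score $6 - 2\cdot(\text{number of blockers already deleted})$ (by~(e)), for $d$ to be a winner in round~$r$ both blockers must already be deleted and every element candidate outside~$U'$ must have score exactly~$2$; this forces $r-1 \ge \ell+2$, hence $r = k$, and exactly $U'$ together with $b_1,b_2$ are deleted before~$d$. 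Finally, $d$ being a winner in round~$k$ requires every set candidate to have score $\le 2$, i.e.\ each $S$ meets~$U'$; so $U'$ is a hitting set.

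The main obstacle — and the reason for the blocker/helper layer and for $k = \ell+3$ rather than $k = \ell+1$ as in the \seqWi{Veto} reduction — is that \seqWi{Borda} eliminates the \emph{highest}-scoring candidate, so the element candidates that survive the first $\ell$ rounds would stay tied for the maximum indefinitely unless something actively lowers their score; the blockers (which all element candidates beat) supply exactly this downward pressure once they are deleted, while the helpers are there to tune the blockers' initial scores so that they reach the element candidates' level precisely after $\ell$ element deletions. The remaining work is routine but fiddly: choosing the private-dummy and auxiliary gadgets so that all candidates of a given type have identical C2-Borda scores despite the sets $S$ (and the dual families $\mathcal{S}_u := \{S : u \in S\}$) having different sizes, and checking that no dummy or auxiliary candidate ever becomes a winner during the first $k$ rounds.
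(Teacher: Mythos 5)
Your reduction is essentially the paper's own proof: both reduce from \textsc{Hitting Set}, use set candidates that beat their elements, element candidates that beat two blockers calibrated to reach the elements' score only after $\ell$ deletions, and set $k=\ell+3$ with identical forward/backward arguments. The only divergence is bookkeeping: the paper picks a large even offset $\alpha$ so that every score adjustment is an \emph{upward} one via dummies that are only ever beaten (and thus can never become winners), whereas your fixed targets $6/4/2$ force score-\emph{lowering} auxiliaries for set candidates with $|S|>2$, which you leave as ``routine but fiddly''---adopting the $\alpha$-shift would close that remaining detail cleanly.
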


\begin{proof}
	We reduce from \textsc{Hitting Set}, using a similar construction as in \Cref{th:seqWi-Borda-n}. Let $U$ be a given universe of elements and let $\mathcal{S}$ be a given family of subsets of $U$. We are also given an integer $t$, and the question is whether there is a $t$-subset of the universe containing at least one element from each set from $\mathcal{S}$, i.e., $U'\subseteq U$ with $|U'|=t$ and $S\cap U'\neq \emptyset$ for all $S\in \mathcal{S}$. 
	Let $q = |U|$. 
	
	We construct an instance of the \TopK problem as follows.
	
	We first give an incomplete description of the constructed instance. Later, we will add some dummy candidates that have no influence except that they increase the Borda scores of some of the candidates to a desired level. The candidate set consists of one candidate for each element, one candidate for each set $S \in \mathcal{S}$, a designated candidate $d$, and a set $B = \{b_1, b_2\}$ of two blocking candidates.
	Let $C_{\text{base}} = V \cup E \cup \{d\} \cup B$ (again, we will add to this set later). 
	The ranking profile will be constructed so as to induce a desired weighted majority graph, where all arcs will have weight 2 (using standard arguments; \citealp{McGa53a,Debo87a}). The arcs are as follows:
	\begin{align*}
		A_{\text{base}} = &\{ (S, e) \in \mathcal{S} \times U : e\in S \}  \cup (U\times B)
	\end{align*}
	
	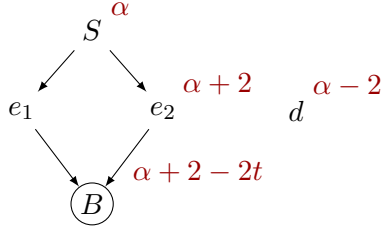
\begin{figure}
		\centering
		\begin{tikzpicture}
			[multinode/.style={circle,draw=black,inner sep=1.5pt,minimum width=10pt},
			element/.style={circle,draw=black,inner sep=1.5pt,minimum width=4pt},
			set/.style={rectangle,draw=black,inner sep=3pt,minimum width=4pt},
			every edge quotes/.style={fill=white,inner sep=0.5pt},
			every label/.style={color=red!60!black}]
			\node [label={[label distance=-5pt]30:$\alpha$}] (S) {$S$};
			\node [below left =0.6cm and 0.4cm of e] (e1) {$e_1$};
			\node [label={[label distance=-6pt]30:$\alpha+2$},below right=0.6cm and 0.4cm of e] (e2) {$e_2$};
			\node [multinode, label={[label distance=4pt]20:$\alpha+2-2t$}, below=1.8cm of e] (B) {$B$};
			\node [label={[label distance=-5pt]30:$\alpha-2$},right=1.2cm of v2] (d) {$d$};
			\draw[-latex]
			(S) edge (e1)
			edge (e2)
			(e1) edge (B)
			(e2) edge (B);
		\end{tikzpicture}
		\caption{An illustration of the reduction of \Cref{th:seqWi-Borda-k}. All arcs have weight 2. Red superscripts denote the C2-Borda score of the candidates (after adding dummy candidates).} \label{fi:seqWi-Borda-k}
	\end{figure}
	
	The constructed weighted majority graph is depicted in \Cref{fi:seqWi-Borda-k}.
	
	The C2-Borda scores (for a definition, see \Cref{rem:c2-borda}) in this profile are:
	\begin{itemize}
		\item $d$ has score 0
		\item each $S \in \mathcal{S}$ has score $2|S|$
		\item each $e \in U$ has score $2|\{S : e \in S\}| - 4$
		\item each $b \in B$ has score $-2|U|$.
	\end{itemize}
	Choose $\alpha$ to be an even integer such that the number $\alpha + 2 - 2t$ is larger than all the Borda scores just mentioned. (Clearly we can take an $\alpha$ that is polynomial size.)
	
	Now we go through every candidate $c \in C_{\text{base}}$ and add dummy candidates $D_c$ and arcs $\{c\} \times D_c$ to increase the C2-Borda score so that we now have the following scores (see \Cref{fi:seqWi-Borda-k}):
	\begin{itemize}
		\item $d$ has score $\alpha-2$
		\item each $S \in \mathcal{S}$ has score $\alpha$
		\item each $e \in U$ has score $\alpha+2$
		\item each $b \in B$ has score $\alpha +2 -2t$.
	\end{itemize}
	Explicitly, if a candidate's target C2-Borda score is $R$ and its current score is $S < R$, then we need to add $|D_c| = (R-S)/2$ dummy candidates.
	Thus, the complete candidate set is $C = C_{\text{base}} \cup \bigcup_{c\in C_{\text{base}}} D_c$ and the final arc set is $A = A_{\text{base}} \cup \bigcup_{c\in C_{\text{base}}} (\{c\} \times D_c)$. This weighted majority graph can be realized by a polynomial-size ranking profile \citep{McGa53a}.%
	\footnote{Note that we are not claiming that this weighted majority graph can be realized by constantly many voters. The reason that a construction like in \Cref{th:seqWi-Borda-n} does not translate is that we do not have a constant upper bound on the number of occurrences of each element. Such bounds typically cannot be imposed while retaining W[1]-hardness.}
	We set $k = t + 3$.
	
	We now prove that our reduction is correct. Note that throughout the argument, all the dummy candidates have non-positive C2-Borda score and it will never be possible to eliminate any of them, so they can essentially be ignored.
	
	Suppose that $T = \{e_1, \dots, e_t\}$ is a hitting set.
	Then the following is a valid start of an elimination ordering, with $d$ eliminated in round $k = t + 3$: 
	First, we eliminate all candidates from $T$ in some arbitrary ordering, then the two candidates from $B$ in some arbitrary ordering, and then $d$.
	Explicitly, in the first $t$ rounds, the maximum Borda score of a candidate is $\alpha + 2$ and all element candidates have a Borda score of $\alpha + 2$ (no other candidates have a Borda score of $\alpha + 2$ in these rounds). Thus we can select members of the hitting set $T$ in each of these rounds. In round $t+1$, the remaining element candidates and the candidates in $B$ have the maximum Borda score of $\alpha + 2$. Thus, we can eliminate candidates in $B$ (while doing so, the Borda scores of element candidates decrease, so candidates in $B$ continue having the maximum Borda score). After all $2$ candidates in $B$ are eliminated, we are in the following situation with respect to the remaining candidates' C2-Borda score:
	\begin{itemize}
		\item $d$ has score $\alpha-2$
		\item each $S \in \mathcal{S}$ has score at most $\alpha-2$ (since we have eliminated a hitting set, and thus have eliminated at least one candidate that $S$ beats)
		\item each remaining $e \in U$ has score $\alpha-2$
		\item dummy candidates have non-positive score
	\end{itemize}
	Hence, at this point, candidate $d$ has the highest Borda score and can be eliminated.
	
	Conversely, suppose there is a ranking selected by \seqWi{Borda} where $d$ is eliminated in round $k = t+3$ or earlier. As observed above, in the first $t$ rounds, only element candidates can be eliminated. Let $T = \{e_1, \dots, e_t\}$ be the set of elements whose candidates are eliminated in these rounds. In rounds $t+1$ and $t+2$, all candidates in $B$ have score at least $\alpha+2$ (it cannot go lower because the candidates that $B$ beats are all dummy candidates which cannot be eliminated). Because $d$ has score only $\alpha-2$,  the two candidates in $B$ must be eliminated before $d$. This brings us to round $t + 3$ where by assumption $d$ is eliminated. Hence at this point, no candidate has score higher than $\alpha-2$. In particular, for every set $S \in \mathcal{S}$, its score is less than $\alpha$. This can only have happened if at least one of the elements of $S$ has been eliminated and is thus part of $T$. It follows that $T$ is a hitting set.
\end{proof}

\end{document}